\documentclass[12pt]{article}
	
\usepackage{amsmath, amsthm, amssymb}
\usepackage{graphicx}
\usepackage{enumerate}
\usepackage{natbib}
\usepackage{subfig}
\usepackage{multirow}
\usepackage{float}
\RequirePackage[colorlinks,citecolor=blue,urlcolor=blue]{hyperref}
\usepackage{mathtools}
\usepackage{fullpage}
\usepackage{indentfirst}
\usepackage{tikz}
\newtheorem{theorem}{Theorem}
\newtheorem{prop}{Proposition}
\newtheorem{lemma}{Lemma}
\newtheorem{cor}{Corollary}
\usepackage{xcolor}

\DeclareMathOperator*{\argmin}{argmin}

\DeclareMathOperator{\sign}{sign}

\DeclareMathOperator{\GCV}{GCV}

\DeclareMathOperator{\MSE}{MSE}
\DeclareMathOperator{\NS}{NS}

\definecolor{c1}{rgb}{0,  1, 0}
\definecolor{c2}{rgb}{0.2470588,  0.7490196, 0.2470588}
\definecolor{c3}{rgb}{0.4980392, 0.4980392, 0.4980392}
\definecolor{c4}{rgb}{0.7490196, 0.2470588, 0.7490196}
\definecolor{c5}{rgb}{1, 0, 1}

\DeclareRobustCommand\full  {\tikz[baseline=-0.6ex]\draw[c3, thick] (0,0)--(0.57,0);} 
\DeclareRobustCommand\dotted{\tikz[baseline=-0.6ex]\draw[c2,thick,dotted] (0,0)--(0.57,0);} 
\DeclareRobustCommand\denselydashed{\tikz[baseline=-0.6ex]\draw[c5, thick, dash pattern={on 7pt off 1.5pt}] (0,0)--(0.57,0);} 
\DeclareRobustCommand\dotdash {\tikz[baseline=-0.6ex]\draw[c4,thick,dash dot] (0,0)--(0.62,0);} 
\DeclareRobustCommand\dashed {\tikz[baseline=-0.6ex]\draw[c1,thick,dashed] (0,0)--(0.57,0);} 

\newtheorem{example}{Example}

\title{Robust optimal estimation of location from discretely sampled functional data}
\author{Ioannis Kalogridis and Stefan Van Aelst}
\date{%
    Department of Mathematics, KU Leuven, Belgium\\[2ex]%
    \today
}

\begin{document}
\maketitle

\begin{abstract}
Estimating location is a central problem in functional data analysis, yet most current estimation procedures either unrealistically assume completely observed trajectories or lack robustness with respect to the many kinds of anomalies one can encounter in the functional setting. To remedy these deficiencies we introduce the first class of optimal robust location estimators based on discretely sampled functional data. The proposed method is based on M-type smoothing spline estimation with repeated measurements and is suitable for both commonly and independently observed trajectories that are subject to measurement error. We show that under suitable assumptions the proposed family of estimators is minimax rate optimal both for commonly and independently observed trajectories and we illustrate its highly competitive performance and practical usefulness in a Monte-Carlo study and a real-data example involving recent Covid-19 data.
\end{abstract}

{\bf Keywords:} Functional data, sparse functional data, M-estimators, smoothing splines, reproducing kernel Hilbert spaces.

\section{Introduction}
\label{sec:ROS1}
In recent years, technological innovations and improved storage capabilities have led practitioners to observe and record increasingly complex high-dimensional data that are characterized by an underlying functional structure. Such data are nowadays commonly referred to as functional data and relevant research has been enjoying considerable popularity, following works such as \citep{Ramsay:1982}, \citep{Ramsay:1991, Rice:1991} and \citep{Ramsay:2005}. While the field of functional data analysis has become very broad with many specialized subpaths, see, e.g., \citep{Ferraty:2006, Horv:2012, Kokoszka:2017}, a central problem is inference regarding location functions $\mu$ of a random function $X$ with sample paths in some nicely-behaved function space.

In the early days of functional data analysis, it was commonly assumed that a sample of fully observed curves $X_1, \ldots, X_n$ was readily available to practitioners. This implies among others that the mean function can be optimally  estimated by the sample mean of the curves, see e.g., \citep{Horv:2012,Hsing:2015}. More recently, emphasis has been placed on the more realistic setting of discretely observed curves, possibly distorted by additive random noise. An early contribution is \citep{Yao:2005}, who proposed local linear estimation with repeated measurements and obtained uniform rates of convergence for sparsely observed trajectories. \citet{Deg:2008} derived consistency conditions for a broad family of weighted linear estimators and anticipated later developments by remarking that the rate of convergence with respect to the $\mathcal{L}^2$-norm can at most be of order $n^{-1}$ for densely observed data. The delicate interplay between sample size and sampling frequency was further illustrated by \citet{Li:2010}, who showed that the uniform rate of convergence is in between the parametric and non-parametric rates, the exact rate depending on the sampling frequency. A major breakthrough was obtained by \citet{Cai:2011}, who established minimax rates of convergence for mean function estimation and showed that these optimal rates can be achieved by a least-squares smoothing spline estimator with repeated measurements. \citet{Xiao:2020} demonstrated that these convergence rates can also be attained by a least-squares penalized spline estimator under similar assumptions.

A common drawback of all the estimation procedures above is the stringent conditions required for the errors because they rely on the minimization of $\mathcal{L}_2$ distances. Resistant estimation of central tendency in functional data is a problem that has received significantly less attention in the literature. An early contribution in the area was made by \citet{Wei:2006} who developed a quantile estimator for longitudinal data based on spline expansions. In the same vein, \citet{Lima:2019} proposed replacing the check loss with a more general loss function and similarly use an unpenalized spline expansion in order to recover the mean function from discretely sampled functional data. To the best of our knowledge, most other existing proposals require completely observed trajectories. \citet{CA:2006} proposed a functional equivalent of the trimmed mean estimator and showed its consistency in the ideal setting of fully observed trajectories. The robust spatial median has received the lion's share of attention in the literature. \citet{Gervini:2008} proved its root-n convergence rate for fully observed finite-dimensional trajectories. These convergence results were greatly strengthened by \citet{Cardot:2013} to cover separable Hilbert spaces of functions. An even deeper treatment of the spatial median was offered by \citet{Ch:2014}, who established general Glivenko-Cantelli and Donsker-type results for the empirical spatial distribution process in general Banach spaces. More recently, \citet{Sinova:2018} proposed a broad family of M-estimators for functional location and derived its consistency, again under the finite-dimensional assumption of the complete trajectories. 

Practitioners who seek robust estimates often overcome the discreteness of the data and the presence of measurement error, either by a pre-smoothing step or by altogether ignoring these problematic aspects of the data and directly applying any of the aforementioned robust estimators. Although these are popular strategies due to the lack of a better alternative, it should be stressed that their theoretical side-effects are not well-understood and consequently it is impossible to decide on best practices. A question that arises naturally is whether it is possible to construct theoretically-optimal robust estimators for functional location that can operate in the ubiquitous functional data setting of  curves that are discretely observed with measurement error. We answer this question positively by studying a broad class of estimators, including the resistant quantile and Huber-type smoothing splines among others. Our results do not require uncorrelated measurement errors nor existence of any moments of the error distributions for well-chosen robust loss functions, thereby providing an important relaxation of the conditions of \citep{Cai:2011}. The estimators can be efficiently computed with the convenient B-spline representation and well-established fast iterative algorithms, so that the associated computational burden is minimal, even if the trajectories are densely observed.

Our contributions are threefold. We propose a broad class of M-type smoothing spline estimators to estimate location functions based on discretely sampled functional data. Next to estimators for the central tendency, this class of estimators also includes quantile and expectile estimators. We study the rate of convergence of these estimators under weak assumptions for both common and independent designs and show that these estimators are rate-optimal. We do not only provide convergence rates for the location functionals, but also for their derivatives, which are often helpful in uncovering intricate features of functional data.

\section{The family of M-type smoothing spline estimators}
\label{sec:ROSM2}

Let us first assume that $\{X(t): t \in [0,1]\}$ is a process  with a well-defined mean function, that is, $\sup_{t \in [0,1]} \mathbb{E}\{ |X(t)| \} < \infty$. All subsequent arguments can be modified to encompass any location function, e.g., the median or expectile functions provided that these functions are also well-defined. Let $X_1, \ldots, X_n$ denote $n$ independent and identically distributed copies of $X$. Our goal is to recover the mean function $\mu(t) = E\{X(t)\}$ from noisy observations of the discretized curves:
\begin{equation}
\label{eq:ROSM1}
Y_{ij} = X_{i}(T_{ij}) + \zeta_{ij}, \quad (j = 1, \ldots, m_i;\ i=1, \ldots, n),
\end{equation}
where $T_{ij}$ are sampling points and $\zeta_{ij}$ are random noise variables. In \citep{Yao:2005, Cai:2011} and \citep{Xiao:2020} it is assumed that the noise variables $\zeta_{ij}$ are independent of the $X_i$ and independent and identically distributed with zero mean and finite variance. However, we allow for correlated errors that are not necessarily independent of the curves.  

We may view the errors $\zeta_{ij}$ as part of $n$ independent copies of the process $\zeta: (\Omega, \mathcal{A}, \mathbb{P}) \times [0,1] \to \mathbb{R}$ evaluated at the $T_{ij}$, that is, $\zeta_{ij} = \zeta_i(T_{ij}), j = 1, \ldots, m_i;\ i=1, \ldots, n$. Expressing model \eqref{eq:ROSM1} in terms of mean-deviations, we may equivalently write
\begin{equation}
\label{eq:ROSM2}
Y_{ij} = \mu(T_{ij}) + \epsilon_{ij}, \quad (j = 1, \ldots, m_i;\ i=1, \ldots, n),
\end{equation}
where $\epsilon_{ij} = \epsilon_i(T_{ij}) = X_i(T_{ij}) - \mu(T_{ij}) + \zeta_i(T_{ij})$ denotes the error process associated with the $i$th response evaluated at $T_{ij}$. The problem is thus reformulated as a regression problem with repeated measurements and possibly correlated errors. 

While there are many parametric and non-parametric estimators that can potentially be applied to the problem at hand, smoothing spline estimation arises naturally after we restrict $\mu$ to the Hilbert-Sobolev space of order $r$ denoted by $\mathcal{W}^{r,2}([0,1])$, viz,
\begin{align*}
\mathcal{W}^{r, 2}([0,1]) = \{ f:[0,1] \to \mathbb{R}, f\ &\text{has $r-1$ absolutely continuous derivatives}  \\  & f^{(1)}, \ldots, f^{(r-1)}\ \text{and} \int_0^1 |f^{(r)}(x)|^2 dx< \infty  \}.
\end{align*}
This separable Hilbert space may be understood as the completion of the space of $r$-times continuously differentiable functions under a suitable norm \citep{Adams:2003}. Therefore, it lends itself to slightly greater generality than the space of $r$-times continuously differentiable functions commonly employed in penalized spline and local polynomial estimation. Under this assumption, an intuitively appealing estimator for $\mu$ may be obtained by solving the smoothing spline problem
\begin{equation}
\label{eq:ROSM3}
\min_{f \in \mathcal{W}^{r, 2}([0,1])}\left[ \frac{1}{n} \sum_{i = 1}^n \frac{1}{m_i} \sum_{j=1}^{m_i} \rho\left( Y_{ij} - f(T_{ij}) \right) + \lambda \int_{0}^1 |f^{(r)}(t)|^2 dt \right],
\end{equation}
for some convex  nonnegative loss function $\rho$ satisfying $\rho(0)=0$ and a penalty parameter $\lambda >0$, whose positivity is needed to make the problem well-defined. 

The two extreme cases $\lambda \downarrow 0 $ and $\lambda \uparrow \infty$ reveal the compromise smoothing spline estimators aim to achieve. On the one hand, for $\lambda \downarrow 0 $ the solution becomes arbitrarily close to an natural spline of order $2r$ which interpolates the measurements $Y_{ij}$ . On the other hand, for $\lambda \to \infty$ the dominance of the penalty term in \eqref{eq:ROSM3} forces $||\widehat{\mu}^{(r)}_n||_2 =0$, where $\widehat{\mu}_n$ denotes a solution of \eqref{eq:ROSM3}.  A Taylor expansion with integral remainder of $\widehat{\mu}_n$ about $0$ shows that
\begin{equation*}
\widehat{\mu}_n(t) = P_r(t) + \int_{0}^1 \frac{\widehat{\mu}^{(r)}_n(x)}{(r-1)!} (t-x)_{+}^{r-1} dx,
\end{equation*}
where $P_{r}(t)$ is the Taylor polynomial of order $r$. But by the Schwarz inequality,
\begin{equation*}
\left| \int_{0}^1 \frac{\widehat{\mu}^{(r)}_n(x)}{(r-1)!} (t-x)_{+}^{r-1} dx \right| \leq  \frac{|t|^{r-1/2}}{(r-1)! (2r-1)^{1/2} } ||\mu^{(r)}||_2, 
\end{equation*}
so the Taylor remainder is equal to zero. This implies that for large $\lambda$, $\widehat{\mu}_n$ is roughly equal to its Taylor polynomial of degree at most $(r-1)$. For in between values of $\lambda$ the penalty functional can be viewed as providing a bound on how far $\mu$ is allowed to depart from a polynomial, as noted by \citet{Eubank:1999}.

While for general convex $\rho$-functions it cannot be assumed that the solution is unique \citep[Chapter 17]{Eg:2009}, for $\lambda>0$ and $\sum_{i=1}^n m_i \geq r$ it can be shown that a solution to the above variational problem may be found in the space of $2r$th order natural splines with knots at the unique $T_{ij}, j=1, \ldots, m_i, \ i = 1, \ldots, n$, see Theorem 1 of \citep{Kalogridis:2020}. Since this is a finite-dimensional space, with dimension equal to the number of unique knots, a solution to \eqref{eq:ROSM3} may be expediently found by solving a ridge-type regression problem, as discussed in detail in Section~\ref{sec:ROSM4} below.

The proposed estimator is a generalization of the robust smoothing spline introduced by \citet{Huber:1979} for the case of independent and identically distributed errors. It is clear that the squared loss $\rho(x) = x^2$ fulfils the requirements on $\rho$. In this case one recovers the least-squares smoothing spline with repeated measurements proposed by \citet{Rice:1991} and theoretically investigated by \citet{Cai:2011}. However, the benefit of the formulation in \eqref{eq:ROSM3} is that it includes loss functions that increase less rapidly as their argument increases in absolute value, so that better resistance towards outlying observations is achieved. Well-known examples of such loss functions are the $\mathcal{L}_{q}$ loss with $\rho_q(x) = |x|^{q}, 1\leq q \leq 2$, which includes both the absolute and square losses as special cases, and Huber's loss given by
\begin{equation*}
\rho_k(x) = \begin{cases} x^2/2, & |x| \leq k \\ k\left(|x|-k/2\right), & |x| >k,
\end{cases}\\
\end{equation*}
for some $k>0$ that controls the blending of square and absolute losses. For large $k$, $\rho_{k}(x)$ essentially behaves like a square loss while for small values of $k$, $\rho_k(x)$ provides a smooth approximation to the absolute loss.

For heterogeneous functional data a single location function may not be appropriate to summarize the data-generating process. In such situations quantiles or expectiles provide much richer information. Since we do not require the symmetry of the loss function in~\eqref{eq:ROSM3}, such versatile estimators may be readily incorporated into the present framework. Indeed, to compute conditional quantiles and expectiles one would only need to select the loss function according to $\rho_\tau(x) =  x(\tau -\mathcal{I}(x<0)) $ for $\tau \in (0,1)$ and $\rho_{\alpha}(x) = x^2/2|\alpha- \mathcal{I}(x <0)|$ for $\alpha \in (0,1)$ respectively. Furthermore, the asymptotic properties of quantile and expectile smoothing spline estimators are covered by the theory developed in Section~\ref{sec:ROSM3}.

\section{Asymptotic properties}
\label{sec:ROSM3}

We now examine the asymptotic properties of M-type smoothing spline estimators in two very different scenarios: the common design case and the independent design case. In the case of common design, the curves are recorded at the same locations, that is,
\begin{equation*}
T_{1j} = T_{2j}  = \ldots = T_{nj}, \quad (j = 1, \ldots, m),
\end{equation*}
while in the case of independent design we allow $T_{ij}$ to be random variables and consequently to be distinct across the subjects.

Central to our theoretical development in both types of designs is the theory of reproducing kernel Hilbert spaces. In particular, it is well-known that $\mathcal{W}^{r,2}([0,1])$ is a reproducing kernel Hilbert space but the reproducing kernel depends on the inner product. In this paper we make use of the inner product 
\begin{align*}
\langle f, g \rangle_{r, \lambda} = \langle f, g \rangle_2 + \lambda \langle f^{(r)}, g^{(r)} \rangle_2,
\end{align*}
for $f, g \in \mathcal{W}^{r, 2}([0,1])$, where $\langle \cdot, \cdot \rangle_2$ denotes the usual $\mathcal{L}^2([0,1])$ inner product. These quantities are well-defined and interestingly depend on the smoothing parameter $\lambda$, which typically varies with $n$. As we shall see, formulating our results in terms of $||\cdot||_{r, \lambda}$ instead of the commonly used $\mathcal{L}^2([0,1])$-norm will enable us to obtain uniform convergence rates as well as rates of convergence of the derivatives for a broad class of estimators, greatly extending the results of \citep{Cai:2011}. Our first result establishes the continuity of point evaluations under $||\cdot||_{r, \lambda}$ and characterizes the form of the resulting reproducing kernel. The interested reader is referred to \citep[Lemma 2.11, Chapter 13]{Eg:2009} for an alternative proof of the first part of Proposition~\ref{Prop:ROSM1} below.

\begin{prop}
\label{Prop:ROSM1}
Let $r \geq 1$ be an integer. For  all $\lambda \in (0,1]$ it holds that
\begin{itemize}
\item[(a)] There exists a constant $c_r$ depending only on $r$ such that, for all $f \in \mathcal{W}^{r,2}([0,1])$ and all $x \in [0,1]$,
\begin{align*}
|f(x)| \leq c_r \lambda^{-1/4r} ||f||_{r, \lambda}.
\end{align*}
\item[(b)] There exists a symmetric function $\mathcal{R}_{r, \lambda} : [0,1]^2 \to \mathbb{R}$, such that, for every $y \in [0,1]$, $x \mapsto \mathcal{R}_{r, \lambda}(x,y) \in \mathcal{W}^{r,2}([0,1])$ and for all $f \in \mathcal{W}^{r,2}([0,1])$, 
\begin{align*}
f(x) = \langle \mathcal{R}_{r, \lambda}(x,\cdot), f \rangle_{r, \lambda}.
\end{align*}
\item[(c)] There exists a sequence $\{\phi_j\}_j$ of uniformly bounded basis functions of $\mathcal{L}^2([0,1])$ not depending on $\lambda$, such that
\begin{align*}
\langle \phi_i, \phi_j \rangle_2 = \delta_{ij}, \quad \text{and} \quad \langle \phi_i, \phi_j \rangle_{r, \lambda} = (1+\lambda \gamma_i) \delta_{ij},
\end{align*}
where $\gamma_1 = \ldots = \gamma_r = 0$ and $C_1 j^{2r} \leq \gamma_j \leq  C_2  j^{2r}, j \geq r+1$, with $C_1, C_2 >0$. Furthermore, the reproducing kernel, $\mathcal{R}_{r, \lambda}(x,y)$, admits the representation
\begin{align*}
\mathcal{R}_{r, \lambda}(x,y) =  \sum_{j=1}^{\infty} \frac{\phi_j(x) \phi_j(y)}{1+\lambda \gamma_j},
\end{align*}
where the series converges absolutely and uniformly.
\end{itemize}
\end{prop}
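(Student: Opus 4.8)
The plan is to establish part~(c) first, since parts~(b) and~(a) both follow once the simultaneously diagonalising basis $\{\phi_j\}$ and the growth of the $\gamma_j$ are available. All the $\lambda$-free structure originates from the generalised eigenvalue problem linking the two inner products: I seek pairs $(\gamma_j,\phi_j)$ with $\phi_j\in\mathcal{W}^{r,2}([0,1])$ obeying
\[
\langle \phi_j^{(r)}, g^{(r)}\rangle_2 = \gamma_j\,\langle \phi_j, g\rangle_2\qquad\text{for all } g\in\mathcal{W}^{r,2}([0,1]).
\]
Integrating by parts $r$ times exhibits this as the weak form of the self-adjoint problem $(-1)^r\phi^{(2r)}=\gamma\phi$ with the natural boundary conditions $\phi^{(r)}(0)=\dots=\phi^{(2r-1)}(0)=0$ and likewise at $1$. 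Once the $\phi_j$ are chosen $\mathcal{L}^2$-orthonormal and diagonalising the seminorm, the defining relations of~(c) are immediate, since $\langle\phi_i,\phi_j\rangle_{r,\lambda}=\langle\phi_i,\phi_j\rangle_2+\lambda\langle\phi_i^{(r)},\phi_j^{(r)}\rangle_2=(1+\lambda\gamma_j)\delta_{ij}$, and neither $\phi_j$ nor $\gamma_j$ depends on $\lambda$.

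For existence and the basis property I would invoke the spectral theorem. The inclusion $\mathcal{W}^{r,2}([0,1])\hookrightarrow\mathcal{L}^2([0,1])$ is compact, so the solution operator $S$ on $\mathcal{L}^2([0,1])$ defined by $\langle Sh,v\rangle_{r,1}=\langle h,v\rangle_2$ for all $v$ is compact, self-adjoint and positive; its eigenvectors furnish an $\mathcal{L}^2$-orthonormal basis $\{\phi_j\}$, and the eigenvalues translate into the $\gamma_j$ above. The seminorm $f\mapsto\|f^{(r)}\|_2$ has kernel exactly the polynomials of degree $\le r-1$, an $r$-dimensional space, whence $\gamma_1=\dots=\gamma_r=0$. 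The two-sided bound $C_1j^{2r}\le\gamma_j\le C_2j^{2r}$ is the Weyl asymptotic for a $2r$-order operator on $[0,1]$: the upper bound follows from the Courant--Fischer min--max principle applied to the $j$-dimensional trial space spanned by $\{\cos(k\pi\,\cdot)\}_{k=0}^{j-1}$, on which the Rayleigh quotient $\|f^{(r)}\|_2^2/\|f\|_2^2$ is at most $((j-1)\pi)^{2r}$, and the matching lower bound from the corresponding counting-function estimate.

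The delicate point, and the main obstacle, is the uniform boundedness $M:=\sup_j\|\phi_j\|_\infty<\infty$. This cannot be read off from the variational construction; it requires the asymptotic analysis of the eigenfunctions of the $2r$-order boundary-value problem, which are asymptotically trigonometric and hence uniformly bounded --- a Birkhoff-type statement for the associated linear ODE that I would either develop directly or import from the spline-smoothing literature. Granting it, and noting $\sum_j(1+\lambda\gamma_j)^{-1}<\infty$ because $2r>1$, the Weierstrass $M$-test shows that $\sum_j\phi_j(x)\phi_j(y)/(1+\lambda\gamma_j)$ converges absolutely and uniformly, which completes~(c).

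Parts~(b) and~(a) then follow. For~(b), fix $x$ and consider $z\mapsto\mathcal{R}_{r,\lambda}(x,z)=\sum_j\phi_j(x)\phi_j(z)/(1+\lambda\gamma_j)$; its squared $\|\cdot\|_{r,\lambda}$-norm equals $\sum_j\phi_j(x)^2/(1+\lambda\gamma_j)=\mathcal{R}_{r,\lambda}(x,x)<\infty$, so it lies in $\mathcal{W}^{r,2}([0,1])$, and for any $f=\sum_j a_j\phi_j$ one computes $\langle\mathcal{R}_{r,\lambda}(x,\cdot),f\rangle_{r,\lambda}=\sum_j\phi_j(x)(1+\lambda\gamma_j)^{-1}\langle\phi_j,f\rangle_{r,\lambda}=\sum_j\phi_j(x)a_j=f(x)$, while symmetry is manifest in the formula. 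For~(a), Cauchy--Schwarz in $\langle\cdot,\cdot\rangle_{r,\lambda}$ combined with the reproducing property gives $|f(x)|=|\langle\mathcal{R}_{r,\lambda}(x,\cdot),f\rangle_{r,\lambda}|\le\|f\|_{r,\lambda}\,\mathcal{R}_{r,\lambda}(x,x)^{1/2}$, reducing matters to the diagonal. Using $|\phi_j|\le M$, $\gamma_j\ge C_1j^{2r}$ for $j>r$, and the comparison $\int_0^\infty(1+\lambda C_1t^{2r})^{-1}\,dt=(\lambda C_1)^{-1/2r}\int_0^\infty(1+u^{2r})^{-1}\,du$, one obtains $\mathcal{R}_{r,\lambda}(x,x)\le c_r^2\lambda^{-1/2r}$ for $\lambda\in(0,1]$, hence $|f(x)|\le c_r\lambda^{-1/4r}\|f\|_{r,\lambda}$. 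Independently, (a) also admits a self-contained derivation from the scale-invariant interpolation inequality $\|f\|_\infty\le C_r\|f\|_2^{1-1/2r}\|f^{(r)}\|_2^{1/2r}$, whose exponents reproduce precisely the factor $\lambda^{-1/4r}$ after inserting the $\|\cdot\|_{r,\lambda}$-weights.
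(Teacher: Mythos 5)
Your proposal is correct in substance, but it is architected in the reverse order of the paper's proof and is more constructive about where the basis comes from. The paper imports the entire spectral structure ready-made from Hsing and Eubank (2015): Theorem 2.8.3 supplies the $\mathcal{L}^2$-orthonormal sequence $\{\phi_j\}$ that simultaneously diagonalizes the Sobolev seminorm together with the growth $\gamma_j \asymp j^{2r}$, and p.~60 of that reference supplies the uniform boundedness $\sup_j \|\phi_j\|_\infty < \infty$. With these facts in hand the paper proves (a) first, by expanding $f$ in the basis and combining Parseval, Cauchy--Schwarz, and the integral comparison $\sum_j (1+\lambda\gamma_j)^{-1} \leq r + c_r\lambda^{-1/2r}$; then (b) follows abstractly from the Riesz representation theorem (point evaluation is bounded by (a)); then (c) is obtained by expanding the already-existing kernel in the orthonormal basis and checking uniform Cauchy convergence. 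You instead build the basis yourself via the spectral theorem for the compact, self-adjoint, positive solution operator, prove the upper eigenvalue bound $\gamma_j \leq ((j-1)\pi)^{2r}$ by Courant--Fischer with a cosine trial space (a genuine addition -- the paper proves none of the eigenvalue asymptotics), construct the kernel explicitly as the series, verify the reproducing property by direct computation rather than by Riesz, and finally deduce (a) from $|f(x)| \leq \|f\|_{r,\lambda}\,\mathcal{R}_{r,\lambda}(x,x)^{1/2}$ and the diagonal bound -- which is exactly the inequality the paper records separately after the proposition as display \eqref{eq:ROSM8}. The trade-off: the paper's Riesz route gets existence of the kernel with no series manipulation, while your route makes the kernel concrete from the outset and your closing remark (the Gagliardo--Nirenberg-type inequality $\|f\|_\infty \lesssim \|f\|_2^{1-1/2r}\|f^{(r)}\|_2^{1/2r} + \|f\|_2$) gives (a) with no spectral machinery at all, which is essentially the alternative proof the paper attributes to Eggermont and LaRiccia. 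Two caveats, neither fatal: you correctly identify uniform boundedness of the $\phi_j$ as the point that cannot be extracted from the variational construction and must be imported (Birkhoff-type eigenfunction asymptotics) -- the paper imports it too, so you are no worse off, but your proof is not self-contained there, and your lower eigenvalue bound ("counting-function estimate") is only gestured at; also, in your verification of (b) the pointwise identity $f(x) = \sum_j a_j\phi_j(x)$ needs a word of justification (e.g. the fixed-$\lambda$ Sobolev embedding turns $\|\cdot\|_{r,\lambda}$-convergence of the partial sums into uniform convergence), an implicit step the paper's own proof of (a) shares.
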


The proposition requires that $\lambda \in (0,1]$, but this comes without any loss of generality, as in all our theoretical results below we will assume that $\lambda$ decays to zero as $n \to \infty$. The constant $c_r$ does not depend on $x$, thus Proposition \ref{Prop:ROSM1} defines a Sobolev embedding of $\mathcal{W}^{r,2}([0,1])$ into $\mathcal{C}([0,1])$ equipped with the standard sup-norm \citep{Adams:2003}. For general $r$, the explicit form of $\{\phi_j\}_j$ is difficult to determine but for $r=1$ we may use the results of \citep[Chapter 2]{Hsing:2015} to deduce that $\phi_1 = 1$ and
\begin{align*}
\phi_{j}(x) = 2^{1/2} \cos((j-1) \pi x), \quad j \geq 2,
\end{align*}
which is the standard cosine series of $\mathcal{L}^2([0,1])$, and $\gamma_j = (\pi j)^2, j \geq 2$. It is interesting to observe the rapid decay of the summands, which implies that the $\phi_j$ with larger frequencies contribute little to the value of the reproducing kernel at each point $(x,y)$. Of course, the precise rate of decay depends on the smoothing parameter with smaller values of $\lambda$ leading to a more slowly converging series. Another interesting observation is that while the $\phi_j$ are infinitely differentiable, the map $x \mapsto \mathcal{R}_{r, \lambda}(x,y)$ is not differentiable at $x= y$. This is not a contradiction to Proposition~\ref{Prop:ROSM1}, however, as that result only asserts the much weaker $x \mapsto \mathcal{R}_{r, \lambda}(x,y) \in \mathcal{W}^{1,2}([0,1])$. Figure~\ref{fig:ROSM1} plots this mapping for different values of $\lambda$ and $y$ in the left and right panels respectively.

\begin{figure}[H]
\centering
\subfloat{\includegraphics[width = 0.49\textwidth]{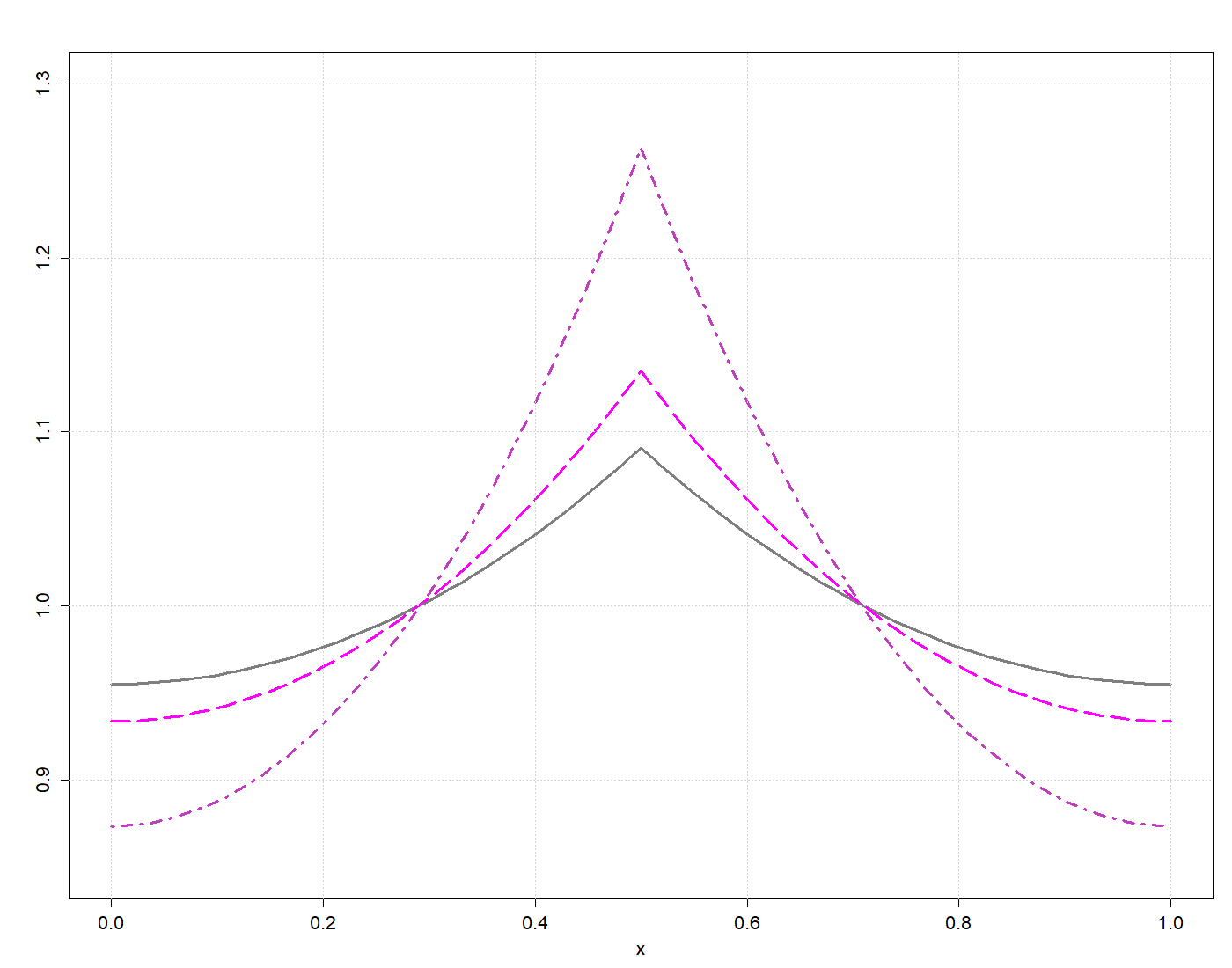}} \ 
\subfloat{\includegraphics[width = 0.49\textwidth]{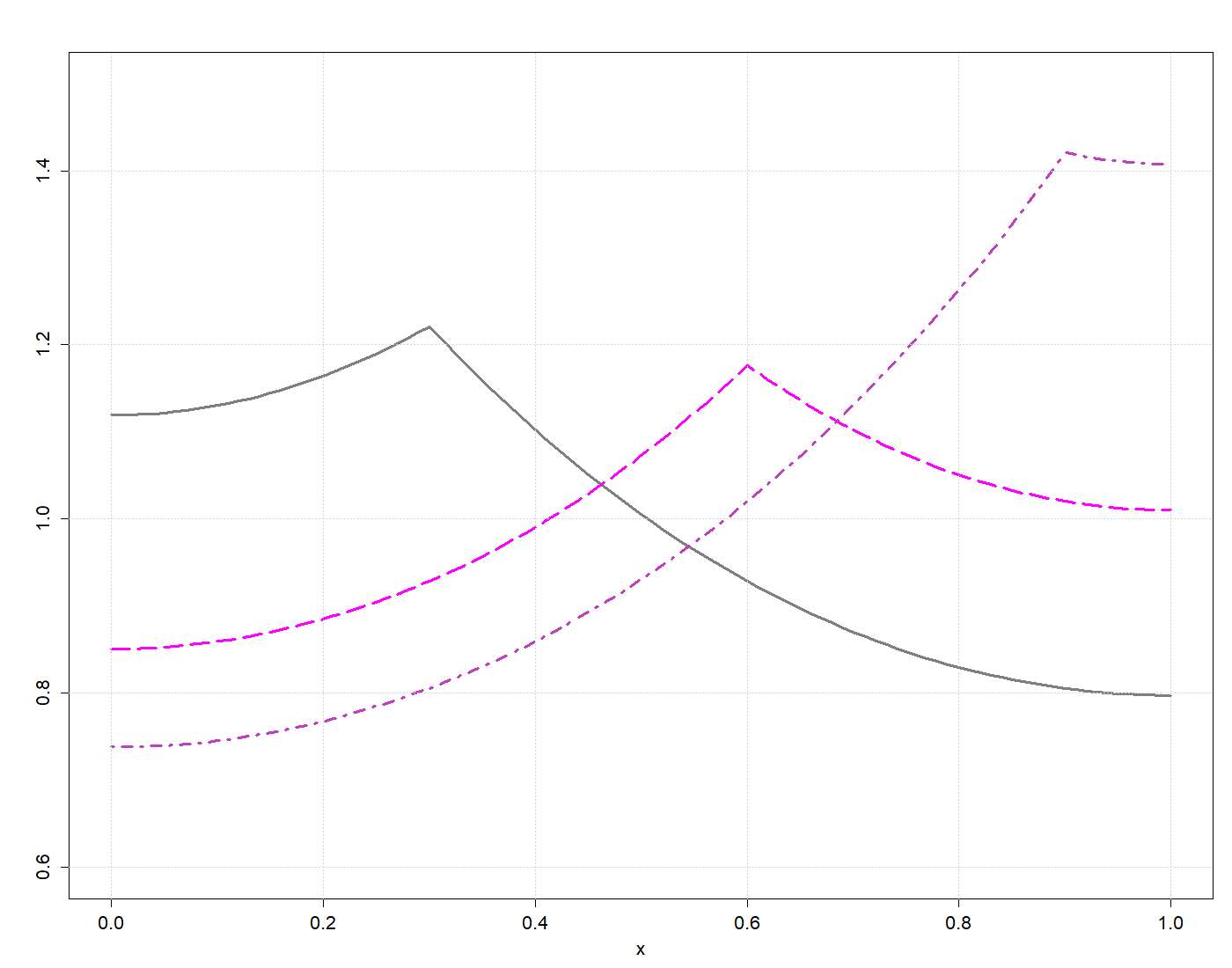}}
\caption{$\mathcal{R}_{1, \lambda}(x,y)$ as a function of $x$ for different values of $\lambda$ and $y$. In the left panel the lines ( \dotdash, \denselydashed, \full) correspond to $\lambda \in \{0.3, 0.6, 0.9\}$ with $y= 0.5$ while in the right panel the lines (\full, \denselydashed,  \dotdash) correspond to $y \in \{0.3, 0.6, 0.9\}$ with $\lambda = 0.5$. }
\label{fig:ROSM1}
\end{figure}

\subsection{Common design}

With this result at our disposal, we begin with the asymptotic study of M-type smoothing spline estimators in the common design. Notice that in this case the notation simplifies and the M-type estimator $\widehat{\mu}_n$ is defined according to
\begin{equation}
\label{eq:ROSM4}
\widehat{\mu}_n = \argmin_{f \in \mathcal{W}^{r, 2}([0,1])} \left[ \frac{1}{nm} \sum_{i = 1}^n \sum_{j=1}^m \rho\left( Y_{ij} - f(T_{j}) \right) + \lambda ||f^{(r)}||_2^2 \right].
\end{equation}

We require the following assumptions on the loss function, the error process and design points.
\begin{itemize}
\item[(A1)] $\rho$ is an absolutely continuous convex function on $\mathbb{R}$ with derivative $\psi$ existing almost everywhere.
\item[(A2)] The error processes $\epsilon_i(\omega, t) : (\Omega, \mathcal{A}, \mathbb{P}) \times [0,1] \to \mathbb{R},  i = 1, \ldots, n$, are independent and identically distributed.
\item[(A3)]  There exist finite constants $\kappa$ and $M_1$ such that for all $x\in \mathbb
{R}$ and $|y| < \kappa$,
\begin{align*}
|\psi(x+y)-\psi(x)| \leq M_1.
\end{align*}
\item[(A4)] There exists a finite constant $M_2$ such that
\begin{align*}
\sup_{j \leq m} \mathbb{E}\left\{ | \psi(\epsilon_{1j}+u) - \psi(\epsilon_{1j})|^2 \right\} \leq M_2 |u|,
\end{align*}
as $u \to 0$.
\item[(A5)] $\sup_{j \leq m} \mathbb{E}\{|\psi(\epsilon_{1j})|^2 \} < \infty$, $\mathbb{E}\{\psi(\epsilon_{1j}) \} = 0$  and there exist positive constants $\delta_j,  j = 1, \ldots, m$, such that $0 < \inf_{j \leq m}  \delta_j \leq \sup_{j \leq m} \delta_j < \infty$ and
\begin{equation*}
\sup_{j \leq m} \left| \mathbb{E}\left\{\psi(\epsilon_{1j}+u) \right\} - \delta_j u \right| = o(u),
\end{equation*}
as $u \to 0$.
\item[(A6)] The family of discretization points $T_j$ satisfies $T_0 := 0 < T_1 < T_2 <\ldots, < T_m < T_{m+1} := 1$ and 
\begin{equation*}
\max_{0 \leq j \leq m} |T_{j+1} - T_{j}| \leq c m^{-1},
\end{equation*}
for some constant $c>0$.
\end{itemize}

Assumption (A1) is standard in regression M-estimation, see, e.g., \citep{Wu:2007, Li:2011}. In our setting it ensures that (\ref{eq:ROSM3}) has a solution in $\mathcal{W}^{r,2}([0,1])$ and that all solutions will asymptotically be contained in the same ball of $\mathcal{W}^{r,2}([0,1])$, so that one does not need to worry about "bad" solutions in the limit. Assumption (A2) specifies the permissible error structure; it is substantially weaker than the assumption of \citet{Cai:2011}, as we do not require independence of the measurements errors $\zeta_{ij}, j = 1, \ldots, m$, associated with the $i$th subject. In our opinion, this as an important generalization because measurement errors occurring closely to each other in time or space are likely to be correlated in practice.

Assumptions (A3)--(A5) permit the use of discontinuous score functions by imposing some regularity on the error process and its finite-dimensional distributions instead. It is clear that assumptions (A3)--(A5) are satisfied for smooth $\psi$-functions. However, as \citet{Bai:1994} point out, these conditions hold quite generally. For example, they show that assumption (A4) may hold with the tighter $|u|^2$ on the right hand side, even if $\psi$ has jumps. On the whole, assumption (A5) ensures the Fisher-consistency of the estimators. The first part are two weak conditions that are satisfied, for example, if $\psi$ is bounded and odd and the error has a symmetric distribution about zero. The second part of assumption (A5) essentially requires that each of the functions $m_j(u) : = \mathbb{E}\{ \psi(\epsilon_{1j} + u) \}, j=1, \ldots, m$, is differentiable with strictly positive derivative at the origin. This is a necessary condition for the minimum to be well-separated in the limit. It is also not a stringent assumption and can be shown to hold for many popular loss functions. We now provide several examples to illustrate the broad applicability of this assumption.

\begin{example}[Squared loss] In this case $\psi(x) = 2x$ and the second part of assumption (A5) holds with $\delta_j = 2$, provided that $\mathbb{E}\{\epsilon_{1j} \} = 0$ for all $j \leq m$,  as in \citep{Cai:2011}.
\end{example}

\begin{example}[Smooth loss functions] All monotone everywhere differentiable $\psi$ functions with bounded second derivative $\psi^{\prime \prime}(x)$, such as $\rho(x) = \log(\cosh(x))$, satisfy the second part of assumption (A5) if
\begin{align*}
0<\inf_{j \leq m} \mathbb{E}\{\psi^{\prime}(\epsilon_{1j})\} \leq  \sup_{ j \leq m} \mathbb{E}\{\psi^{\prime}(\epsilon_{1j})\} <\infty.
\end{align*}
This is a straightforward generalization of the classical Fisher-consistency condition for spline estimation with smooth $\psi$-functions, see \citep{Cox:1983}.
\end{example}

\begin{example}[Check loss]
First, consider the absolute loss for which $\psi(x)=\sign(x)$. If each $\epsilon_{1j}, j = 1, \ldots, m,$ has a distribution function $F_j$ with positive density $f_j$ on an interval about zero, then
\begin{equation*}
\mathbb{E}\{ \sign(\epsilon_{1j}+u) \} = 2f_j(0)u + o(u), \quad \text{as} \quad u \to 0,
\end{equation*}
so that the assumption in (A5) holds with $\delta_j = 2f_j(0)$. This easily generalizes  to the check loss, provided that in this case one views the regression function $\mu$ as the $\tau$-quantile function, that is, $\Pr(Y_{ij} \leq \mu(T_{j}) ) = \tau$, see \citep{Wei:2006}.
\end{example} 

\begin{example}[Huber loss]
Now $\psi_k(x) = x \mathcal{I}( |x| \leq k) + k \sign(x) \mathcal{I}(|x|>k)$ for some $k>0$. Assuming that each $F_j, j = 1, \ldots, m,$ is absolutely continuous and symmetric about zero we have that
\begin{equation*}
\mathbb{E}\{ \psi_k(\epsilon_{1j}+u) \} = \{ 2 F_j(k)-1\}u + o(u), \quad \text{as} \quad u \to 0.
\end{equation*}
It is clear that the assumption in (A5) holds in this case with $\delta_j = 2 F_j(k)-1$ provided that $\min_{j} F_j(k) > 1/2$. 
\end{example}

\begin{example}[$L_{q}$ loss with $q \in (1,2)$] Clearly, $\psi_q(x) = q |x|^{q-1} \sign(x)$ and if we assume that each $F_j$ is symmetric about zero, $\mathbb{E}\{| \epsilon_{1j}|^{q-1}\} < \infty$ and $\mathbb{E}\{|\epsilon_{1j}|^{q-2} \}<\infty$, then
\begin{equation*}
\mathbb{E}\{\psi_q(\epsilon_{1j}+t) \} = q(q-1)\mathbb{E}\{|\epsilon_{1j}|^{q-2} \} u + o(u), \quad \text{as} \quad u \to 0,
\end{equation*}
see \citep{Arcones:2000}. The latter expectation is finite, if, e.g., $F_j$ possesses a Lebesgue density $f_j$ that is bounded at an interval about zero.
\end{example}

\begin{example}[Expectile loss]
As an alternative to the check loss, consider the continuously differentiable expectile loss $\rho_{\alpha}(x) = x^2/2(| \alpha - \mathcal{I}(x\leq 0)|)$ with $\alpha \in (0,1)$, such that $\psi_{\alpha}(x) = (1-\alpha)x \mathcal{I}(x \leq 0) + \alpha x\, \mathcal{I}(x>0)$. Assuming that there is an interval about the origin in which no $F_j$ has atoms we have
\begin{equation*}
\mathbb{E}\{\psi_{\alpha}(\epsilon_{1j}+u) \} =  \{\alpha + (1- 2 \alpha) F_j(0)\}u + o(u), \quad \text{as} \quad u \to 0.
\end{equation*}
The term in curly braces is positive for each $j$ and $\alpha \in (0,1)$. Therefore, the assumption in (A5) holds with $\delta_j = \alpha + (1- 2 \alpha) F_j(0)$.
\end{example}

Condition (A6) ensures that the sampling points are unique and observed at a sufficiently regular grid. It is again a weak assumption that can be shown to hold, for example, if $T_j = 2j/{(2m+2)}, j = 1, \ldots, m$, and many other designs. See \citep{Cai:2011} and \citep{Xiao:2020} for close variants of this condition.

With these assumptions we can now state our first main asymptotic result. To ensure consistency of the estimators we assume that the discretization points become more numerous as the sample size grows larger. That is, we require that $m$ depends on $n$ and $m(n) \to \infty$ as $n\to \infty$. This is a natural requirement in order to examine consistency in norms that involve integrals. Also the penalty parameter $\lambda$ needs to depend on $n$. To avoid making the notation too heavy, we often not explicitly indicate the dependencies of $m$ and $\lambda$ on $n$.

\begin{theorem}
\label{Thm:ROSM1}
Consider model \eqref{eq:ROSM2} and the M-type smoothing spline estimator $\widehat{\mu}_n$ satisfying \eqref{eq:ROSM4}. Assume that assumptions (A1)-(A6) hold. Moreover, assume that for $n \to \infty$, $m$ and $\lambda$ vary in such a way that $m \to \infty$, $\lambda \to 0 $, $n \lambda^{1/2 +(1 - 1/8r)/r } \to \infty$ and \quad $\liminf_{n\to \infty} m \lambda^{1/{2r}}\geq 2 c_r$, where $c_r$ is the constant of Proposition \ref{Prop:ROSM1}. Then,
 every sequence of M-type smoothing spline estimators $\widehat{\mu}_n$ satisfies
\begin{equation*}
||\widehat{\mu}_n - \mu||_{r, \lambda}^2 = O_P\left( n^{-1} +m^{-2r} +  \lambda \right).
\end{equation*}
\end{theorem}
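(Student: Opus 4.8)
The plan is to exploit the reproducing-kernel structure from Proposition~\ref{Prop:ROSM1} to convert the functional estimation problem into a tractable analysis of the first-order optimality conditions, and then to control the stochastic and deterministic parts of the estimation error separately. First I would write the objective in~\eqref{eq:ROSM4} as a functional $G_n(f)$ on $\mathcal{W}^{r,2}([0,1])$ and, using convexity (A1), reduce the problem to bounding the minimizer of a convexified surrogate. Because $\rho$ is convex, a standard device is to study the localized criterion: write $f = \mu + g$ and expand the difference $G_n(\mu+g) - G_n(\mu)$. The key analytic step is to show that this difference, as a function of $g$, is bounded below by a quadratic form in $\|g\|_{r,\lambda}$ minus a linear stochastic term, so that the minimizer must lie in a ball of the appropriate radius. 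Concretely, I would define the empirical linear functional $V_n(g) = \frac{1}{nm}\sum_{i,j}\psi(\epsilon_{ij})\,g(T_j)$ (the gradient of the loss term at $\mu$) plus the penalty cross-term $2\lambda\langle \mu^{(r)}, g^{(r)}\rangle_2$, and the curvature term coming from assumption (A5), which supplies the local strong convexity via the positive derivatives $\delta_j$.

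The central decomposition I would aim for is
\begin{equation*}
G_n(\mu+g) - G_n(\mu) \geq -V_n(g) + c_0 \Big(\tfrac{1}{m}\sum_{j=1}^m g(T_j)^2 + \lambda\|g^{(r)}\|_2^2\Big) + R_n(g),
\end{equation*}
where $R_n(g)$ is a remainder controlled using (A3)--(A4) and $c_0>0$ derives from (A5). Three ingredients must then be assembled. First, the discrete quadratic term $\frac{1}{m}\sum_j g(T_j)^2$ must be shown to approximate the continuous $\|g\|_2^2$; this is where assumption (A6) on the regular grid enters, via a Riemann-sum / quadrature argument, with the approximation error absorbed into the $m^{-2r}$ term using the smoothness embedding of Proposition~\ref{Prop:ROSM1}(a). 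This replaces the discrete curvature by $c_0\|g\|_{r,\lambda}^2$ up to lower-order terms. Second, the stochastic linear term $V_n(g)$ must be bounded: using the basis $\{\phi_j\}$ from Proposition~\ref{Prop:ROSM1}(c), I would expand $g$ in this basis, compute the variance of each coordinate of the empirical score (which is $O((nm)^{-1})$ per design point and $O(n^{-1})$ after averaging over the i.i.d. curves by (A2)), and use the eigenvalue decay $\gamma_j \asymp j^{2r}$ together with the reproducing-kernel representation to obtain a Cauchy--Schwarz bound of the form $|V_n(g)| \leq O_P\big((n^{-1}+m^{-2r})^{1/2}\big)\,\|g\|_{r,\lambda}$. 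Third, the penalty cross-term $2\lambda\langle \mu^{(r)}, g^{(r)}\rangle_2$ is bounded directly by $2\lambda^{1/2}\|\mu^{(r)}\|_2\,\|g\|_{r,\lambda}$, contributing the deterministic bias of order $\lambda$.

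Combining these, the right-hand side becomes $c_0\|g\|_{r,\lambda}^2 - O_P\big((n^{-1}+m^{-2r}+\lambda)^{1/2}\big)\|g\|_{r,\lambda}$; since the minimizer $\widehat\mu_n - \mu$ makes the left-hand side nonpositive, solving the resulting quadratic inequality in $\|\widehat\mu_n-\mu\|_{r,\lambda}$ immediately yields the claimed rate $O_P(n^{-1}+m^{-2r}+\lambda)$. The rate conditions on $m$ and $\lambda$ in the hypothesis are precisely what is needed to guarantee that the localization is valid—that the minimizer stays inside the ball where the curvature lower bound and the Lipschitz control (A3) on $\psi$ hold, since the embedding constant $c_r\lambda^{-1/4r}$ governs the sup-norm of $g$ relative to $\|g\|_{r,\lambda}$, and the condition $\liminf m\lambda^{1/2r}\geq 2c_r$ ensures the quadrature error does not overwhelm the curvature. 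The main obstacle, I expect, is the handling of the possibly discontinuous score function $\psi$: the remainder $R_n(g)$ cannot be controlled by a naive Taylor expansion, and instead requires a careful second-moment argument on the increments $\psi(\epsilon_{ij}+u)-\psi(\epsilon_{ij})$ via assumption (A4), uniformly over $g$ in the localizing ball. This is the technically delicate part where the interplay between the nonsmoothness of $\psi$, the random design effects, and the functional dimension of $g$ must be reconciled, and it is also where the correlated-error structure permitted by (A2) must be shown not to degrade the variance bound on $V_n(g)$, since independence across $i$ (but not across $j$) is the only probabilistic leverage available.
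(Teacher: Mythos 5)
Your high-level architecture does coincide with the paper's: center the criterion, show positivity of the difference on a sphere of radius proportional to $C_n^{1/2}=(n^{-1}+m^{-2r}+\lambda)^{1/2}$, use convexity (A1) to trap every minimizer inside the ball, and decompose the difference into a curvature term, a linear score term, a penalty cross-term and a centered remainder, with a quadrature step under (A6) relating $\frac{1}{m}\sum_j g(T_j)^2$ to $\|g\|_2^2$ (the paper's Lemma 2); that you center at $\mu$ while the paper centers at the spline interpolant $Q(\mu)$ is immaterial, since $\liminf_n m\lambda^{1/2r}\geq 2c_r$ already forces $\lambda\gtrsim m^{-2r}$. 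However, two of your three ingredients have genuine gaps. For the score term, your proposed route — expand $g$ in $\{\phi_k\}$, bound the variance of each score coordinate, and sum against the eigenvalue decay — cannot deliver the bound you claim under (A2). Because the errors may be arbitrarily correlated within a curve, the variance of each coordinate $\frac{1}{nm}\sum_{i,j}\psi(\epsilon_{ij})\phi_k(T_j)$ is only $O(n^{-1})$ (as you yourself concede), and since $\sum_k(1+\lambda\gamma_k)^{-1}\asymp\lambda^{-1/2r}$, Cauchy--Schwarz along this route gives $|V_n(g)|=O_P\bigl((n\lambda^{1/2r})^{-1/2}\bigr)\|g\|_{r,\lambda}$, not $O_P\bigl((n^{-1}+m^{-2r})^{1/2}\bigr)\|g\|_{r,\lambda}$. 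This is genuinely too weak: with $r=2$, $\lambda\asymp n^{-9/10}$, $m\asymp n^{3/10}$, all hypotheses of the theorem hold and $C_n\asymp n^{-9/10}$, yet $(n\lambda^{1/2r})^{-1}\asymp n^{-31/40}\gg C_n$, so solving your quadratic inequality would yield a strictly suboptimal rate. The paper avoids the basis expansion entirely and groups by design point,
\begin{align*}
|V_n(g)| \leq \Bigl\{\frac{1}{m}\sum_{j=1}^m\Bigl|\frac{1}{n}\sum_{i=1}^n\psi(\epsilon_{ij})\Bigr|^2\Bigr\}^{1/2}\Bigl\{\frac{1}{m}\sum_{j=1}^m|g(T_j)|^2\Bigr\}^{1/2},
\end{align*}
where the first factor is $O_P(n^{-1/2})$ using only independence across $i$ (within-curve correlation never enters because both $\psi$-factors inside the square share the same $j$), and the second is $\leq c_0\|g\|_{r,\lambda}$ by the quadrature lemma and the condition on $m\lambda^{1/2r}$. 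This is the step where the common design is actually exploited.

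The second gap is the remainder term, which you rightly single out as the crux but for which you offer only ``a careful second-moment argument on the increments, uniformly over $g$ in the localizing ball.'' Second moments of individual increments cannot by themselves produce a bound that is uniform over the infinite-dimensional ball $\mathcal{B}_D=\{g:\|g\|_{r,\lambda}\leq D\}$; one needs a maximal inequality driven by the metric entropy of $\mathcal{B}_D$. Concretely, the paper (i) uses (A3) to show the induced class of functions of the error vector is Lipschitz in $g$ for the sup-norm, so its bracketing numbers are controlled by covering numbers of $\mathcal{B}_D$; (ii) invokes the Cucker--Smale entropy bound $\log N(\epsilon,\mathcal{B}_D,\|\cdot\|_{\infty})\lesssim(\lambda^{1/2}\epsilon)^{-1/r}$; (iii) uses (A4) only to compute the variance radius $\delta\asymp C_n^{3/4}\lambda^{-1/8r}$ and envelope $M\asymp C_n^{1/2}$; and (iv) applies the maximal inequality of van der Vaart's Lemma 19.36 to the resulting bracketing integral. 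The hypothesis $n\lambda^{1/2+(1-1/8r)/r}\to\infty$ is consumed exactly there, to make the bracketing integral $o(n^{1/2}C_n)$; your sketch instead attributes the rate conditions to localization and quadrature, which confirms the entropy/chaining idea is absent rather than implicit. Without it, the uniform bound $\sup_{g\in\mathcal{B}_D}|R_n(g)|=o_P(C_n)$ — and with it the theorem, especially for discontinuous $\psi$ such as the check loss — is out of reach.
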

The asymptotic behaviour of the smoothing spline estimator in this functional setting given in Theorem~\ref{Thm:ROSM1} is very different from the asymptotic behaviour of M-type smoothing splines in nonparametric regression with i.i.d. errors as studied by~\citep{Cox:1983, Kalogridis:2020}. In fact, the only similarity is the role of $\lambda$ which needs to tend to zero in order for the regularization bias to become negligible. The limit conditions of the theorem are satisfied if $\lambda \sim C (m^{-2r} + n^{-1}) $ for sufficiently large $C$ and $r \geq 2$. They can also be satisfied for $r=1$, i.e., the case of linear smoothing splines, provided that $m$ is "small" relative to $n$, which precludes densely observed functional data. For $r=1$ this rate of convergence can be attained without any conditions on $m$, provided that one uses a smoother  (Lipschitz-continuous) score function $\psi$. Thus, smoother score functions allow more flexibility in the choice of $r$ and the penalty parameter $\lambda$.

It is interesting to observe that for $\lambda \asymp (m^{-2r} + n^{-1})$ the $\mathcal
{L}^2([0,1])$-error decays like
\begin{equation*}
||\widehat{\mu}_n - \mu||_{2}^2 \leq ||\widehat{\mu}_n - \mu||_{r, \lambda}^2 = O_P\left( n^{-1} + m^{-2r} \right),
\end{equation*}
which is the optimal rate of convergence in the common design \citep{Cai:2011}. This rate of convergence implies that a phase transition between the two sources of error occurs at $ m \asymp n^{1/2r}$. Indeed, if $m>>n^{1/{2r}}$ then the asymptotic error behaves like $n^{-1}$, which is the rate of convergence for many functional location statistics that assume the curves are observed entirely, see \citep{Horv:2012} and \citep{Gervini:2008} for the functional mean and median respectively. On the other hand, if $m = O(n^{1/{2r}})$ then the asymptotic error behaves like $m^{-2r}$, which is the error associated with piecewise polynomial interpolation of a $\mathcal{W}^{r,2}([0,1])$-function \citep{Devore:1993}. This	 result confirms the intuitive notion that as long as the grid is dense, the error will decay with the parametric rate $n^{-1}$. However, Theorem \ref{Thm:ROSM1} also shows that for sparsely observed data the discretization error cannot be ignored in the way many currently available resistant estimation procedures require.

Based on Theorem \ref{Thm:ROSM1} we can now also obtain rates of convergence for the derivatives $\widehat{\mu}_n^{(s)}, s = 1, \ldots, r-1$ and tightness of $\widehat{\mu}_n^{(r)}$ in the classical $\mathcal{L}^2([0,1])$-metric. These are summarized in Corollary \ref{Cor:ROSM1} below.
\begin{cor}
\label{Cor:ROSM1}
Under the assumptions of Theorem \ref{Thm:ROSM1}, it holds that 
\begin{equation*}
||\widehat{\mu}_n^{(s)}-\mu^{(s)}||_2^2 = \lambda^{-s/r} O_P\left( n^{-1} + m^{-2r} + \lambda \right),
\end{equation*}
for all $s \leq r$.
\end{cor}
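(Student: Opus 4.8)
The plan is to deduce the stated $\mathcal{L}^2$-bounds on the derivatives directly from the $\|\cdot\|_{r,\lambda}$-rate of Theorem~\ref{Thm:ROSM1} by means of a Sobolev interpolation inequality, so that no further probabilistic analysis is required. Write $g := \widehat{\mu}_n - \mu \in \mathcal{W}^{r,2}([0,1])$ and set $R_n^2 := n^{-1} + m^{-2r} + \lambda$, so that Theorem~\ref{Thm:ROSM1} gives $\|g\|_{r,\lambda}^2 = O_P(R_n^2)$. Since $\|g\|_{r,\lambda}^2 = \|g\|_2^2 + \lambda \|g^{(r)}\|_2^2$ with both summands nonnegative, this single bound splits into the two deterministic consequences $\|g\|_2^2 \le \|g\|_{r,\lambda}^2$ and $\|g^{(r)}\|_2^2 \le \lambda^{-1}\|g\|_{r,\lambda}^2$. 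These already settle the two endpoints: $s=0$ is immediate, and $s=r$ gives $\|g^{(r)}\|_2^2 = \lambda^{-1} O_P(R_n^2) = \lambda^{-r/r} O_P(R_n^2)$, matching the claim.

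For the intermediate orders $0 < s < r$ I would invoke the Gagliardo--Nirenberg interpolation inequality on the bounded interval $[0,1]$ (see, e.g., \citep{Adams:2003}), which supplies a constant $C = C(r,s)$, independent of $g$ and of $\lambda$, such that
\[
\|g^{(s)}\|_2 \le C\left( \|g^{(r)}\|_2^{\,s/r}\, \|g\|_2^{\,1 - s/r} + \|g\|_2 \right).
\]
Squaring and substituting the two bounds from the previous paragraph yields $\|g^{(s)}\|_2^2 \le C'( (\lambda^{-1}\|g\|_{r,\lambda}^2)^{s/r}(\|g\|_{r,\lambda}^2)^{1-s/r} + \|g\|_{r,\lambda}^2) = C'(\lambda^{-s/r} + 1)\|g\|_{r,\lambda}^2$. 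Because $\lambda \to 0$ we have $\lambda^{-s/r} \ge 1$, so the bracket is $O(\lambda^{-s/r})$, and combining with Theorem~\ref{Thm:ROSM1} gives $\|g^{(s)}\|_2^2 = \lambda^{-s/r} O_P(R_n^2)$, as asserted.

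The argument is thus essentially deterministic once Theorem~\ref{Thm:ROSM1} is in hand, and the only point requiring care is the interpolation inequality itself: one must use a version valid on a bounded domain (hence the additive lower-order term $\|g\|_2$) and, crucially, with a constant independent of $\lambda$, since otherwise the clean factor $\lambda^{-s/r}$ could be spoiled. An equivalent and self-contained route, avoiding any external inequality, expands $g = \sum_j a_j \phi_j$ in the basis of Proposition~\ref{Prop:ROSM1}(c), where $\|g\|_2^2 = \sum_j a_j^2$ and $\|g^{(r)}\|_2^2 = \sum_j a_j^2 \gamma_j$; using the eigenfunction structure one has $\|g^{(s)}\|_2^2 \lesssim \sum_j a_j^2 \gamma_j^{s/r}$, and splitting this sum at $\gamma_j \asymp \lambda^{-1}$ bounds the low-frequency part by $\lambda^{-s/r}\sum_j a_j^2$ and the high-frequency part by $\lambda^{-s/r}\,\lambda \sum_j a_j^2 \gamma_j$, whose total is at most $\lambda^{-s/r}\|g\|_{r,\lambda}^2$. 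I expect the Gagliardo--Nirenberg route to be the shorter one to write, with the basis computation serving as a reassuring cross-check of the $\gamma_j^{s/r}$ scaling.
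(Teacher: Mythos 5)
Your proof is correct, and it is in essence the same argument as the paper's: a purely deterministic reduction of the corollary to Theorem~\ref{Thm:ROSM1} via a Sobolev interpolation inequality applied to $g = \widehat{\mu}_n - \mu$. The only real difference is in where the key inequality comes from. The paper simply cites \citet[Chapter 13, Lemma 2.17]{Eg:2009}, which states the $\lambda$-calibrated form
\begin{equation*}
\bigl\{\|f\|_2^2 + \lambda^{s/r}\|f^{(s)}\|_2^2\bigr\}^{1/2} \leq c_{s,r}\,\|f\|_{r,\lambda},
\end{equation*}
and applies it directly; you instead re-derive exactly this inequality from the classical Gagliardo--Nirenberg inequality on $[0,1]$, by splitting off the endpoints $s=0$ and $s=r$ (which follow from the definition of $\|\cdot\|_{r,\lambda}$) and substituting those endpoint bounds into the multiplicative-plus-additive interpolation estimate. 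Your handling of the two points that could go wrong is sound: the Gagliardo--Nirenberg constant depends only on $(r,s)$ and the domain, not on $\lambda$, and the additive term $\|g\|_2$ is absorbed since $\lambda^{-s/r}\geq 1$ for $\lambda\leq 1$. What your route buys is self-containedness (it rests on a textbook inequality from \citep{Adams:2003}, which the paper already cites, rather than on a specialized lemma); what the paper's route buys is brevity. One caveat: your ``cross-check'' via the eigenbasis of Proposition~\ref{Prop:ROSM1}(c) is not actually justified as written, because that proposition only diagonalizes $\langle\cdot,\cdot\rangle_2$ and $\langle \cdot^{(r)},\cdot^{(r)}\rangle_2$; the claim $\|g^{(s)}\|_2^2 \lesssim \sum_j a_j^2\gamma_j^{s/r}$ for intermediate $s$ is essentially equivalent to the interpolation inequality you are trying to verify, so it is a restatement rather than an independent check. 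Since you present that route only as secondary, this does not affect the validity of your main proof.
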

Since, by assumption, $\lambda \to 0$ as $n \to \infty$, the corollary implies that derivatives of higher order are more difficult to estimate, ceteris paribus. This is not surprising as differentiability is a local property and one is expected to need a larger sample size and higher grid resolution in order to examine local properties with the same degree of precision.

Convergence in $\mathcal{L}^2([0,1])$ is, in general, too weak to imply pointwise convergence, but the stronger mode of convergence obtained in Theorem~\ref{Thm:ROSM1} and Proposition~\ref{Prop:ROSM1} allow us to derive a uniform rate of convergence as follows.

\begin{cor}
\label{Cor:ROSM2}
Under the assumptions of Theorem \ref{Thm:ROSM1}, it holds that 
\begin{equation*}
\sup_{t \in [0,1]} |\widehat{\mu}_n(t) - \mu(t)| = \lambda^{-1/4r}  O_P\left( n^{-1} + m^{-2r} + \lambda \right)^{1/2}.
\end{equation*}
\end{cor}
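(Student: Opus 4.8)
The plan is to derive this uniform rate as a direct consequence of the Sobolev embedding in Proposition~\ref{Prop:ROSM1}(a) together with the $\|\cdot\|_{r,\lambda}$-rate established in Theorem~\ref{Thm:ROSM1}. The key observation is that part (a) of the Proposition is precisely the tool that converts control in the $\|\cdot\|_{r,\lambda}$-norm into uniform (sup-norm) control, and that the factor $\lambda^{-1/4r}$ appearing in the statement of the corollary is exactly the $\lambda$-dependence of the embedding constant. Since no part of the argument requires any new probabilistic or analytic input beyond these two earlier results, the proof is short.

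First I would note that both $\widehat{\mu}_n$ and $\mu$ belong to $\mathcal{W}^{r,2}([0,1])$ — the former by construction as a minimizer of \eqref{eq:ROSM4} over that space, the latter by the standing assumption that $\mu$ is the target in the Hilbert--Sobolev space — so that their difference $f := \widehat{\mu}_n - \mu$ also lies in $\mathcal{W}^{r,2}([0,1])$. I would then apply Proposition~\ref{Prop:ROSM1}(a) to this $f$, which yields, for \emph{every} fixed $x \in [0,1]$,
\begin{equation*}
|\widehat{\mu}_n(x) - \mu(x)| \leq c_r \lambda^{-1/4r} \|\widehat{\mu}_n - \mu\|_{r, \lambda}.
\end{equation*}
The crucial structural point to emphasize is that the constant $c_r$ depends only on $r$ and, in particular, \emph{not} on $x$; this is what permits passing to the supremum on the left-hand side without disturbing the right-hand side. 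Taking $\sup_{x \in [0,1]}$ therefore gives
\begin{equation*}
\sup_{t \in [0,1]} |\widehat{\mu}_n(t) - \mu(t)| \leq c_r \lambda^{-1/4r} \|\widehat{\mu}_n - \mu\|_{r, \lambda}.
\end{equation*}

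Finally I would invoke Theorem~\ref{Thm:ROSM1}, which furnishes $\|\widehat{\mu}_n - \mu\|_{r,\lambda}^2 = O_P(n^{-1} + m^{-2r} + \lambda)$, equivalently $\|\widehat{\mu}_n - \mu\|_{r,\lambda} = O_P(n^{-1} + m^{-2r} + \lambda)^{1/2}$ upon taking square roots. Substituting this into the displayed bound and absorbing the deterministic constant $c_r$ into the $O_P$ term delivers exactly the claimed rate. There is no genuine obstacle here: the entire analytic and stochastic difficulty has already been discharged in Proposition~\ref{Prop:ROSM1} and Theorem~\ref{Thm:ROSM1}, and the only matter worth stressing in the write-up is the uniformity of $c_r$ in $x$, which is what upgrades a pointwise bound into a sup-norm bound and justifies the appearance of $\lambda^{-1/4r}$ rather than any $x$-dependent or larger factor.
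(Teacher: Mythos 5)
Your proof is correct and follows essentially the same route as the paper: the paper likewise obtains the uniform rate by combining the Sobolev embedding of Proposition~\ref{Prop:ROSM1}(a), whose constant $c_r\lambda^{-1/4r}$ is uniform in $x$, with the $\|\cdot\|_{r,\lambda}$-rate of Theorem~\ref{Thm:ROSM1}. No gaps; the argument is complete as written.
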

\noindent
Besides its intrinsic interest, Corollary~\ref{Cor:ROSM2} may be viewed as the first step towards establishing uniform confidence bands for $\mu$. We aim to study this problem in detail in forthcoming work.

\subsection{Independent design}
We now turn to the problem of general location estimation in the independent design case, i.e. when trajectories are observed at different points and their number can vary from trajectory to trajectory. A convenient way to model this difference is by treating the $T_{ij}, j =1, \ldots, m_i, i = 1, \ldots, n$, as independent and identically distributed random variables. 

The assumptions that we need in order to obtain the main result in this case are for the most part straightforward generalizations of the assumptions governing the common design case. We denote the set of sampling points $\{T_{ij}\}_{i,j}$ by $\mathcal{T}$.

\begin{itemize}
\item[(B1)] (A1).
\item[(B2)] (A2).
\item[(B3)] (A3).
\item[(B4)] There exists a constant $M_2$ such that
\begin{align*}
\sup_{i, j} \mathbb{E}\left\{ |\psi(\epsilon_{ij}+u) - \psi(\epsilon_{ij})|^2 | \mathcal{T} \right\}\leq M_2 |u|,
\end{align*}
as $u \to 0$.
\item[(B5)] $\mathbb{E}\{||\psi(\epsilon_{1})||^2_2\} < \infty$, $\mathbb{E}\{\psi(\epsilon_{ij})| 	\mathcal{T} \} = 0, j = 1, \ldots, m_i, i = 1, \ldots, n$, and there exist positive constants $\delta_{ij}$ such that $0 <\inf_{i, j} \delta_{ij} \leq \sup_{i, j} \delta_{ij} < \infty $ and
\begin{equation*}
\sup_{i,j} \left|\mathbb{E}\left\{\psi(\epsilon_{ij}+u) | 	\mathcal{T} \right\} - \delta_{ij}u \right| = o_P(u), \quad \text{as} \quad u \to 0.
\end{equation*}
\item[(B6)] The sampling points $T_{ij}, j=1, \ldots, m_i, i = 1, \ldots, n$, are independent and identically distributed random variables with Lebesgue measure on $[0,1]$. They are also independent of the error processes $\epsilon_i(t), i=1, \ldots, n$.
\end{itemize}

Assumptions (B4)--(B5) are adaptations of assumptions (A4)--(A5) to the case of distinct sampling points. Assumption (B6) ensures that the discretization points are well-spread throughout the $[0,1]$-interval. The assumption of uniformity can be marginally weakened to requiring that the density is bounded away from zero and infinity on $[0,1]$ at the cost of heavier notation and lengthier derivations in the proofs. 

A quantity of importance for the asymptotic properties of M-type estimators is the harmonic mean $m$ of the $m_i$, that is,
\begin{align*}
m := \left( \frac{1}{n} \sum_{i=1}^n \frac{1}{m_i} \right)^{-1}.
\end{align*}
We may think of $m$ as a measure of how dense the data is: lower values indicate sparse functional data while larger values indicate more densely sampled functional data. As in \citep{Xiao:2020} the $m_i$ are deterministic in our analysis. The more complex situation of random $m_i$ from a distribution with support on the integers will be considered in future work.

\begin{theorem}
\label{Thm:ROSM2}
Consider model~(\ref{eq:ROSM2}) and the M-type smoothing spline estimator $\widehat{\mu}_n$ solving~\eqref{eq:ROSM3}. Assume that assumptions (B1)--(B6) hold. Moreover, assume that for $n \to \infty$, $\lambda$ varies in such a way that $\lambda \to 0$ and $ n \lambda^{1/2 +(1-1/8r)/r} \to \infty $. Then, every sequence of M-type smoothing spline estimators $\widehat{\mu}_n$ satisfies
\begin{equation*}
||\widehat{\mu}_n - \mu||_{r, \lambda}^2 = O_P\left( (n m \lambda^{1/2r})^{-1} + \lambda + n^{-1} \right).
\end{equation*}
\end{theorem}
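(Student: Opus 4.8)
The plan is to exploit the convexity of the criterion together with the reproducing kernel structure of Proposition~\ref{Prop:ROSM1}. Writing $g = f - \mu$ and recentering, introduce
$$V_n(g) = \frac{1}{n}\sum_{i=1}^n\frac{1}{m_i}\sum_{j=1}^{m_i}\left\{\rho(\epsilon_{ij}-g(T_{ij}))-\rho(\epsilon_{ij})\right\} + \lambda\left(\|\mu^{(r)}+g^{(r)}\|_2^2 - \|\mu^{(r)}\|_2^2\right),$$
which is convex and minimized at $\widehat g = \widehat\mu_n - \mu$, with $V_n(0)=0$. A standard convexity argument shows that if, for a deterministic sequence $\delta_n\to 0$, one has $\inf_{\|g\|_{r,\lambda}=\delta_n}V_n(g)>0$ with probability tending to one, then every minimizer satisfies $\|\widehat g\|_{r,\lambda}<\delta_n$ eventually. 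Existence of a minimizer in $\mathcal{W}^{r,2}([0,1])$ is guaranteed by (B1), so it suffices to produce a lower bound for $V_n$ on the sphere $\{\|g\|_{r,\lambda}=\delta_n\}$ and to identify the smallest admissible $\delta_n$.

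Using absolute continuity of $\rho$, write $\rho(\epsilon_{ij}-g(T_{ij}))-\rho(\epsilon_{ij}) = -g(T_{ij})\psi(\epsilon_{ij}) + \int_0^{g(T_{ij})}\{\psi(\epsilon_{ij})-\psi(\epsilon_{ij}-s)\}\,ds$, so that $V_n(g)$ splits into a stochastic linear part $L_n(g) = -n^{-1}\sum_i m_i^{-1}\sum_j g(T_{ij})\psi(\epsilon_{ij})$, a curvature part $D_n(g) = n^{-1}\sum_i m_i^{-1}\sum_j \int_0^{g(T_{ij})}\{\psi(\epsilon_{ij})-\psi(\epsilon_{ij}-s)\}ds$, and the penalty. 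For the linear part I would expand $g=\sum_k b_k\phi_k$ in the basis of Proposition~\ref{Prop:ROSM1}(c), so that $\|g\|_{r,\lambda}^2 = \sum_k b_k^2(1+\lambda\gamma_k)$ and, by a weighted Cauchy--Schwarz inequality, $|L_n(g)|\le\|g\|_{r,\lambda}\big(\sum_k W_k^2/(1+\lambda\gamma_k)\big)^{1/2}$ with $W_k = n^{-1}\sum_i m_i^{-1}\sum_j\psi(\epsilon_{ij})\phi_k(T_{ij})$. Since the $\psi(\epsilon_{ij})$ are conditionally centred (B5) and independent across $i$ (B2), computing $\mathbb{E}[W_k^2]$ separates a diagonal contribution of order $(nm)^{-1}$, where the harmonic mean $m$ enters through $n^{-1}\sum_i m_i^{-1}$, from an off-diagonal contribution of order $n^{-1}\mathbb{E}\langle\psi(\epsilon_1),\phi_k\rangle_2^2$. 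Summing against $(1+\lambda\gamma_k)^{-1}$ and using $\sum_k(1+\lambda\gamma_k)^{-1}\asymp\lambda^{-1/2r}$ (from $\gamma_k\asymp k^{2r}$) together with $\sum_k\mathbb{E}\langle\psi(\epsilon_1),\phi_k\rangle_2^2 = \mathbb{E}\|\psi(\epsilon_1)\|_2^2<\infty$ gives $|L_n(g)| = \|g\|_{r,\lambda}\,O_P\big(((nm\lambda^{1/2r})^{-1}+n^{-1})^{1/2}\big)$. The penalty equals $2\lambda\langle\mu^{(r)},g^{(r)}\rangle_2 + \lambda\|g^{(r)}\|_2^2$, whose cross term is bounded by $2\lambda^{1/2}\|\mu^{(r)}\|_2\|g\|_{r,\lambda}$ and contributes the $\lambda$ term.

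The heart of the proof is the coercivity of the curvature term, and here lies the main obstacle, specific to the random design. Conditionally on the design, assumption (B5) gives $\mathbb{E}[D_n(g)\mid\mathcal{T}] = \tfrac12 n^{-1}\sum_i m_i^{-1}\sum_j\delta_{ij}g(T_{ij})^2$ up to a remainder that is $o_P$ of this leading term, and since $\inf_{i,j}\delta_{ij}>0$ this is bounded below by a constant multiple of the empirical quadratic form $\widehat Q(g) = n^{-1}\sum_i m_i^{-1}\sum_j g(T_{ij})^2$. First, one must pass from $\widehat Q(g)$ to $\|g\|_2^2$: since $\mathbb{E}\widehat Q(g)=\|g\|_2^2$ by (B6), expanding $g$ in the $\phi_k$ reduces the deviation to $\sum_{k,l}b_kb_l Z_{kl}$ with $Z_{kl}$ a centred design average, and $\mathbb{E}Z_{kl}^2 = O((nm)^{-1})$ by the uniform boundedness of the $\phi_k$; a weighted Cauchy--Schwarz bound then gives $|\widehat Q(g)-\|g\|_2^2| = O_P\big(\lambda^{-1/2r}(nm)^{-1/2}\big)\|g\|_{r,\lambda}^2$, which is $o_P(\|g\|_{r,\lambda}^2)$ precisely because the rate conditions force $nm\lambda^{1/r}\to\infty$. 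Second, one must control the fluctuation of $D_n(g)$ about its conditional mean; here assumption (B4) bounds the conditional variance of the increments $\psi(\epsilon_{ij})-\psi(\epsilon_{ij}-s)$ by $M_2|s|$, permitting a moment bound of the same eigenbasis type, made uniform over the sphere through the Sobolev embedding $\|g\|_\infty\le c_r\lambda^{-1/4r}\|g\|_{r,\lambda}$ of Proposition~\ref{Prop:ROSM1}(a), and this is the step where the precise exponent $1/2+(1-1/8r)/r$ in the penalty condition is calibrated so that the remainder is negligible. Combining the pieces, on the sphere one obtains $V_n(g)\ge c_0\|g\|_{r,\lambda}^2 - O_P\big(((nm\lambda^{1/2r})^{-1}+n^{-1}+\lambda)^{1/2}\big)\|g\|_{r,\lambda} - o_P(\|g\|_{r,\lambda}^2)$, so choosing $\delta_n^2 = K\big((nm\lambda^{1/2r})^{-1}+\lambda+n^{-1}\big)$ with $K$ large renders the right-hand side positive with high probability and yields the stated rate.
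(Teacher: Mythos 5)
Your skeleton coincides with the paper's: the convexity argument bounding the criterion from below on a sphere of radius $\asymp C_n^{1/2}$ with $C_n=(nm\lambda^{1/2r})^{-1}+n^{-1}+\lambda$, the split into a stochastic linear term, a curvature term and the penalty cross term, and your treatment of the linear and penalty terms is correct. In fact your eigenbasis bound $|L_n(g)|\le \|g\|_{r,\lambda}\bigl(\sum_k W_k^2/(1+\lambda\gamma_k)\bigr)^{1/2}$ is precisely the paper's reproducing-kernel bound, since $\sum_k W_k^2/(1+\lambda\gamma_k)=\bigl\|n^{-1}\sum_i m_i^{-1}\sum_j \psi(\epsilon_{ij})\mathcal{R}_{r,\lambda}(T_{ij},\cdot)\bigr\|_{r,\lambda}^2$, and your diagonal/off-diagonal split reproduces the paper's $Z_1+Z_2$ computation. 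One place where you genuinely improve on the paper: the passage from the empirical quadratic form $\widehat Q(g)$ to $\|g\|_2^2$. The paper proves only a uniform law of large numbers (Lemma~\ref{Lem:ROSM3}) via Hoeffding's inequality plus chaining, yielding $o_P(1)$; your observation that $\widehat Q(g)-\|g\|_2^2=\sum_{k,l}b_kb_lZ_{kl}$ is a \emph{bilinear} form in the coefficients of $g$, so that a weighted Cauchy--Schwarz bound decouples $g$ from the $g$-independent array $Z_{kl}$ with $\mathbb{E}\{Z_{kl}^2\}=O((nm)^{-1})$, gives the quantitative and uniform rate $O_P(\lambda^{-1/2r}(nm)^{-1/2})\|g\|_{r,\lambda}^2$, which is $o_P(\|g\|_{r,\lambda}^2)$ because the stated conditions indeed force $nm\lambda^{1/r}\to\infty$. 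That part is correct and more elementary than what the paper does.

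The gap is in the final step, the control of $\sup_{\|g\|_{r,\lambda}\le D}|D_n(g)-\mathbb{E}\{D_n(g)\mid\mathcal{T}\}|$, which is where the paper does essentially all of its hard work. This fluctuation is \emph{not} "of the same eigenbasis type": the summands $\int_0^{g(T_{ij})}\{\psi(\epsilon_{ij})-\psi(\epsilon_{ij}-s)\}\,ds$ are nonlinear in $g$, so there is no decoupling of the coefficients $b_k$ from a $g$-independent random array, and the Cauchy--Schwarz trick that worked for $L_n$ and $\widehat Q$ has no analogue here. A per-$g$ variance bound (which is what (B4) plus the Sobolev embedding gives you) controls the fluctuation at each fixed $g$, but does not survive the supremum over the unit ball of $\mathcal{W}^{r,2}([0,1])$, which is infinite-dimensional and not compact; nor can one invoke finite-dimensional convexity lemmas of Pollard type to upgrade pointwise to uniform control. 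The paper resolves this by treating $\{n^{-1}\sum_i Z_{i,g}:g\in\mathcal{B}_D\}$ as an empirical process adapted to the filtration generated by $(\mathbf{T}_i,\boldsymbol{\epsilon}_i)$, bounding its generalized entropy with bracketing by the sup-norm covering numbers of the Sobolev ball, $\log N(\epsilon,\mathcal{B}_D,\|\cdot\|_\infty)\lesssim(\lambda^{1/2}\epsilon)^{-1/r}$ \citep{Cucker:2001}, and applying the martingale maximal inequality of Theorem 8.13 in \citep{van de Geer:2000} with Bernstein norms calibrated through (B3)--(B4); the entropy-integral computation there is exactly what produces, and requires, the exponent $1/2+(1-1/8r)/r$. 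Your sketch names the right ingredients but supplies no mechanism for the uniformity, and the mechanism you allude to would fail; without a chaining or bracketing argument of this kind the proof is incomplete at its central step.
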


The limit conditions of Theorem~\ref{Thm:ROSM2} parallel those of Theorem~\ref{Thm:ROSM1} and are given in such a way that they do not involve $m$, which in the case of an independent design cannot be assumed to tend to infinity. Taking $\lambda \asymp (mn)^{-2r/{2r+1}}$ ensures that the limit assumptions are satisfied and we are lead to
\begin{equation*}
||\widehat{\mu}_n - \mu||_{2}^2 = O_P\left( (mn)^{-2r/(2r+1)} + n^{-1} \right),
\end{equation*}
which are the optimal rates of convergence for the independent design \citep{Cai:2011}. Similarly to the common design case, the phase transition occurs at $m \asymp n^{1/{2r}}$, although in the present case the conclusions are more subtle. For $m>>n^{1/{2r}}$ it can again be seen that the asymptotic error behaves like $n^{-1}$. However, a rather interesting phenomenon occurs when $m =  O(n^{1/{2r}})$. In this case, the asymptotic error behaves like $(mn)^{-2r/{(2r+1)}}$, which would have been the optimal nonparametric rate of convergence had we possessed $mn$ independent observations at $mn$ different sites \citep{Cox:1983, Kalogridis:2020}. Thus, in these extreme cases the estimator behaves either as if the curves were fully observed, or as if they were only observed at $n m$ points in an independent manner.

Corollaries~\ref{Cor:ROSM1} and \ref{Cor:ROSM2} regarding derivative estimation and uniform convergence follow in the same manner as previously through Sobolev embeddings. It is curious that while derivatives in many instances provide insightful tools for the analysis of functional data, they have received little theoretical attention in the sparse setting and, to the best of our knowledge, all existing results, such as \citep{Liu:2009}, concern least-squares estimators. The present methodology provides not only rates of convergence for a broad class of estimators, but also an expedient way of computing these derivatives through the use of the B-spline basis, which we outline next.

\section{Computation and smoothing parameter selection}
\label{sec:ROSM4}

As discussed in Section \ref{sec:ROSM2}, there exists at least one solution of \eqref{eq:ROSM3} in the space of natural splines of order $2r$ with knots at the unique $T_{ij}$. Thus we may restrict attention to the linear subspace of natural splines for the computation of the estimator. Assume for simplicity that all $T_{ij}$ are distinct and let $a = \min_{ij} T_{ij}>0$ and $b= \max_{ij} T_{ij}<1$. Then the natural spline has $v := \sum_{i=1}^n m_i -2$ interior knots and we may write $\mu(t) = \sum_{k = 1}^{v + 2r} \mu_k B_{k}(t)$ where the $B_k(\cdot)$ are the B-spline basis functions of order $2r$ supported by the knots at the interior points $T_{ij}$ and the $\mu_k$ are scalar coefficients.  It is well-known that the space of natural splines with $v$ interior knots is $v$-dimensional due to the boundary conditions, but for the moment we operate in the larger $(v + 2r)$-dimensional spline subspace. This is computationally convenient due to the local support of B-splines. As will be explained below, the penalty automatically imposes the boundary conditions \citep[see also][pp. 161-162]{Hastie:2009}.

With the introduction of the B-spline basis, to compute the estimator it suffices to find the vector $\widehat{\boldsymbol{\mu}} = (\widehat{\mu}_1, \ldots, \widehat{\mu}_{v+2r})^{\top}$ such that
\begin{align}
\label{eq:ROSM5}
\widehat{\boldsymbol{\mu}} = \argmin_{\boldsymbol{\mu} \in \mathbb{R}^{v+2r}} \left[  \frac{1}{n} \sum_{i=1}^n \frac{1}{m_i} \sum_{j=1}^{m_i} \rho\left(Y_{ij} - \mathbf{B}_{ij}^{\top} \boldsymbol{\mu} \right) + \lambda \boldsymbol{\mu}^{\top} \mathbf{P} \boldsymbol{\mu} \right]
\end{align}
for $\mathbf{B}_{ij} = ( B_1 (T_{ij}), \ldots, B_{v+2r}(T_{ij}) )^{\top}$ and $\mathbf{P} = \langle B_k^{(r)}, B_l^{(r)} \rangle_2, k, l =  1, \ldots, v+2r$. Note that the derivatives of B-splines may be written in terms of B-splines of lower order by taking weighted differences of the coefficients \citep[p. 117]{DB:2001}. Initially, it may seem that this formulation ignores the boundary constraints that govern natural splines but it turns out the penalty term automatically imposes them. The reasoning is as follows: if that were not the case, it would always be possible to find a $2r$th order natural interpolating spline that leaves the first term in \eqref{eq:ROSM5} unchanged, but because it is a polynomial of order $r$ outside of $[a,b] \subset [0,1]$ the penalty semi-norm  would be strictly smaller. Hence the minimizer of \eqref{eq:ROSM5} incorporates the boundary conditions.

For $\rho$-functions whose derivative exists everywhere, the solution to \eqref{eq:ROSM5} may be expediently found through minor modification of the penalized iteratively reweighted least-squares algorithm, see e.g., \citep{Maronna:2011}. The algorithm consists of solving a weighted penalized least-squares problem at each iteration until convergence, which is guaranteed irrespective of the starting values and yields a stationary point of \eqref{eq:ROSM3}, under mild conditions on $\rho$ that include the boundedness of $\rho^{\prime}(x)/x$ near zero~\citep{Huber:2009}.  These conditions are satisfied for smooth loss functions as well as the Huber and expectile losses, but are not satisfied for the popular quantile loss. Nevertheless, the easily implementable recipe of \citep{Nychka:1995} may be used in order to obtain an approximate solution of \eqref{eq:ROSM5}. In particular, in the algorithm the loss function can be replaced by the smooth approximation
\begin{equation*}
\tilde{\rho}_{\tau}(x) = \begin{cases} \rho_{\tau}(x) &  |x| \geq \epsilon \\
\tau x^2/\epsilon & 0 \leq x < \epsilon \\
(1-\tau)x^2/\epsilon & -\epsilon <  x \leq 0,
\end{cases}
\end{equation*}
for some small $\epsilon>0$. Whenever such a modification of the objective function is not feasible, we recommend utilizing a convex-optimization program in order to identify a minimizer of \eqref{eq:ROSM5}, as given, for example, by \citet{Fu:2020}.

To determine the penalty parameter $\lambda$ in a data-driven way we propose to select $\lambda$ that minimizes the generalized cross-validation (GCV) criterion
\begin{equation}
\GCV(\lambda) = \frac{ \sum_{i=1}^n m_i^{-1} \sum_{j=1}^{m_i} W(r_{ij}) |r_{ij}|^2 }{{(1- n^{-1} \sum_{i=1}^n m_i^{-1} \sum_{j=1}^{m_i} h_{ij} )^2 }}  ,
\label{GCVcrit}
\end{equation}
where $r_{ij}$ is the $ij$th residual, $W(r_{ij}) = \rho^{\prime}(r_{ij})/r_{ij}$ and the $h_{ij}$ are measures of the influence of the $ij$th observation, which can for instance be obtained from the diagonal of the weighted hat-matrix obtained upon convergence of the iterative reweighted least-squares algorithm. The GCV criterion employed herein is a generalization of the criterion proposed by \citet{Cunningham:1991}. Implementations and illustrative examples of the Huber and quantile-type smoothing spline estimators are available in \url{https://github.com/ioanniskalogridis/Robust-optimal-estimation-of-functional-location}.

\section{Finite-sample performance}
\label{sec:ROSM5}

We now compare the numerical performance of the proposed robust M-type smoothing spline estimator to several competitors. In particular, we include the least-squares smoothing spline estimator of \citep{Rice:1991, Cai:2011}, the robust regression spline estimator of \citep{Lima:2019}, the least-squares local linear estimator of \citep{Yao:2005, Degras:2011} and the robust functional M-estimator of \citep{Sinova:2018} in the comparison. For simplicity and for ease of comparison with the functional M-estimator of \citep{Sinova:2018}, which at the very least requires densely sampled trajectories, we consider the case of common design. First, we briefly review the construction of the competing estimators.

Let $K:\mathbb{R} \to [0,1]$ denote a symmetric nonnegative kernel (weight) function on $\mathbb{R}$ that is  Lipschitz continuous. Then, for any given point $t$  the local linear estimator of \citep{Yao:2005,Degras:2011} estimates $\mu(t)$ by $\widehat{\mu}_{LSLP}(t)=\widehat{\beta}_0$, where $(\widehat{\beta}_0, \widehat{\beta}_1)$ solves the local linear problem
\begin{equation*}
\min_{(\beta_0, \beta_1) \in \mathbb{R}^2 } \sum_{j=1}^m|\bar{Y}_j - \beta_0 - \beta_1(T_j-t)|^2 K\left( \frac{T_j-t}{h} \right).
\end{equation*}
Here, $\bar{Y}_j$ denotes the average of the responses at $T_j$ and $h$ denotes the smoothing parameter (i.e., the bandwidth). In our implementation we use the Gaussian kernel for $K$. The bandwidth $h$ is chosen by the direct plug-in method, which essentially works by replacing the unknown functionals in the asymptotic expression of the optimal bandwidth with Gaussian kernel estimates. 

For fully observed i.i.d. curves $X_1, \ldots, X_n$, \citet{Sinova:2018} propose to estimate $\mu(t)$ by $\widehat{\mu}_{HF}$ such that 
\begin{equation}
\label{eq:ROSM6}
\widehat{\mu}_{HF} = \argmin_{f \in \mathcal{L}^2([0,1])} \frac{1}{n} \sum_{i=1}^n \rho\left(||X_i - f||_{2} \right),
\end{equation}
with $\rho$ a real-valued loss function.
The well-known spatial median can be recovered from \eqref{eq:ROSM6} by setting $\rho(x) =x$. 
For this comparison we use the Huber loss function with tuning parameter equal to $0.70$, which corresponds to 85\% efficiency in the Gaussian location model. 
To apply the estimator in our setting,  $\mathcal{L}^2([0,1])$-norms and inner products were computed with trapezoidal Riemann approximations wherever necessary.

For discretely but commonly observed functional data, \citet{Lima:2019} proposed estimating $\mu$ by $\sum_{j=1}^{p} \widehat{\mu}_{j} B_{j}$ where $\boldsymbol{\mu} = (\mu_1, \ldots, \mu_p)^{\top}$ satisfies
\begin{align*}
\widehat{\boldsymbol{\mu}} = \argmin_{\boldsymbol{\mu} \in \mathbb{R}^p} \sum_{i=1}^n \sum_{j=1}^m \rho\left(Y_{ij} - \mathbf{B}_j^{\top}\boldsymbol{\mu}\right),
\end{align*}
with $B_1, \ldots, B_p$ B-spline basis functions. The loss function $\rho$ is taken to be the Huber loss with tuning again equal to 0.70. The vector $\mathbf{B}_j$ stands for a collection of cubic B-spline basis functions defined on equidistant knots evaluated at $T_j$, i.e., $\mathbf{B}_j = (B_1(T_j), \ldots, B_p(T_j))^{\top}$. Motivated by asymptotic considerations, the authors propose to choose the dimension of the approximating spline subspace as $p = \max\{4,[0.3 n^{1/4} \log(n)]\}$.

In our simulation experiments we are particularly interested in the effect of the sampling frequency, the level of noise and the effect of atypical observations on the estimates. To reflect these considerations we have generated independent and identically distributed curves according to the Karhunen-Loève expansion
\begin{equation*}
X(t) = \mu(t) + 2^{1/2}\sum_{k=1}^{50} W_k \frac{\sin((k-1/2)\pi t)}{(k-1/2)\pi},
\end{equation*}
for $t \in (0,1)$. The mean function $\mu(t)$ is either $\sin(6 \pi t)(t+1)$ or $3\exp[-(0.25-t)^2/ 0.1]$. The $W_k$ are independent and identically distributed random variables following the $t_{5}$ distribution.  Using the $t_{5}$ distribution instead of the commonly used standard Gaussian distribution implies that some of the curves exhibit some outlying behaviour, so that our samples may contain observations with large influence.

Once the curves have been generated we discretize them in $m \in \{20,50\}$ equispaced points $0<T_{1}< \ldots< T_{m}<1$ and generate the noisy observations according to 
\begin{equation*}
Y_{ij} = X_i(T_j) + \sigma \zeta_{ij}, , \quad (j = 1, \ldots, m; \ i=1, \ldots, n),
\end{equation*}
with $\sigma>0$ a constant that controls the level of noise in the data. We consider the values $\sigma \in \{0.2,0.5, 1\}$, reflecting low, medium and high levels of noise, respectively. The errors $\zeta_{ij}$ are i.i.d. random variables generated according to each of the following distributions: (i) standard Gaussian, (ii) $t_3$ distribution, (iii) skewed $t_3$ distribution with non-centrality parameter equal to 0.5, (iv) a convex combination of independent Gaussians with means equal to zero, variances equal to $1$ and $9$ respectively and weights equal to $0.85$ and $0.15$, respectively and (v) Tukey's Slash distribution, i.e., the quotient of independent standard Gaussian and $(0,1)$-uniform random variables. Figure \ref{fig:ROSM2} presents examples of a typical set of observations under standard Gaussian errors and small noise with the first and second mean-function, respectively.

\begin{figure}[H]
\centering
\subfloat{\includegraphics[width = 0.49\textwidth]{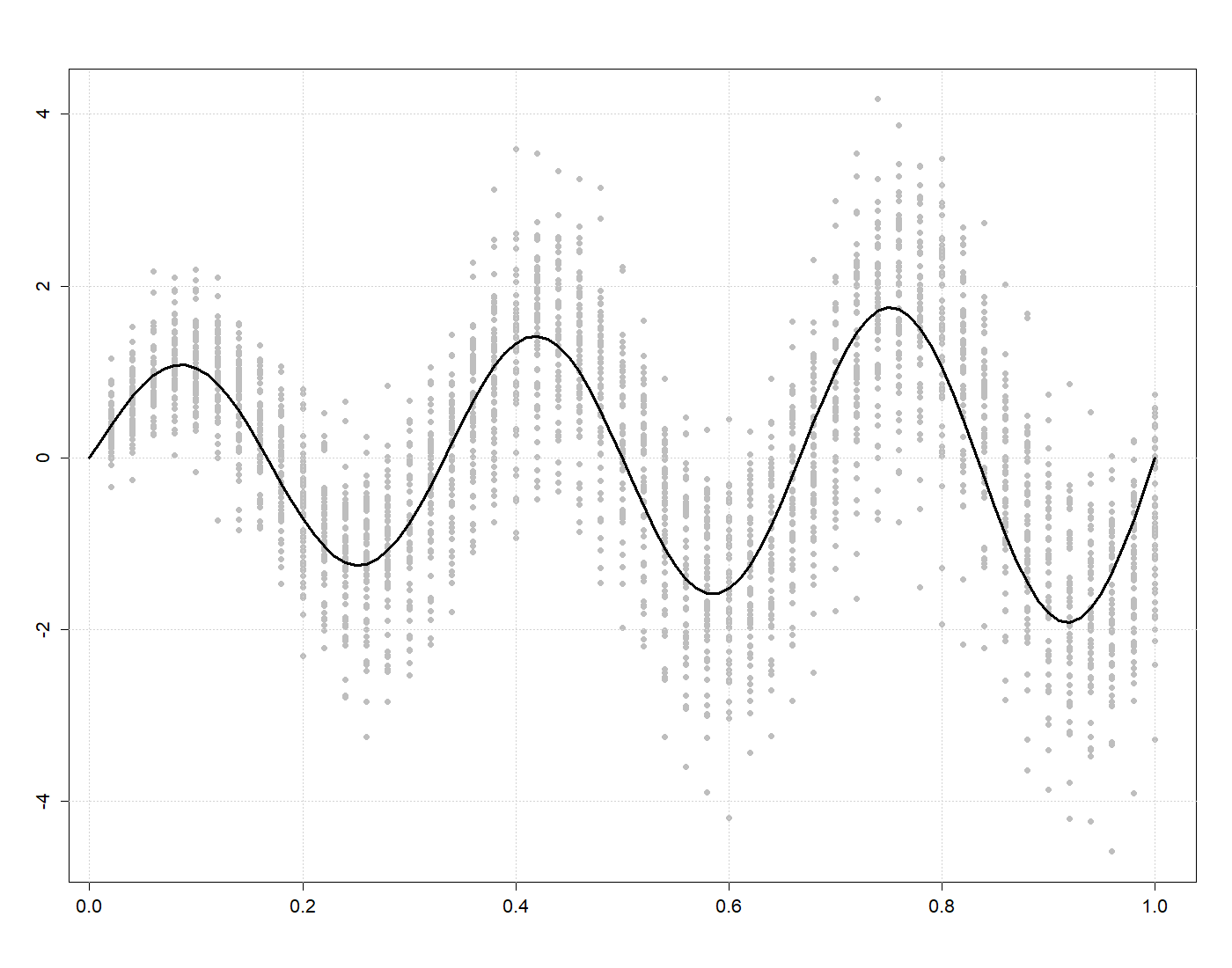}} \ 
\subfloat{\includegraphics[width = 0.49\textwidth]{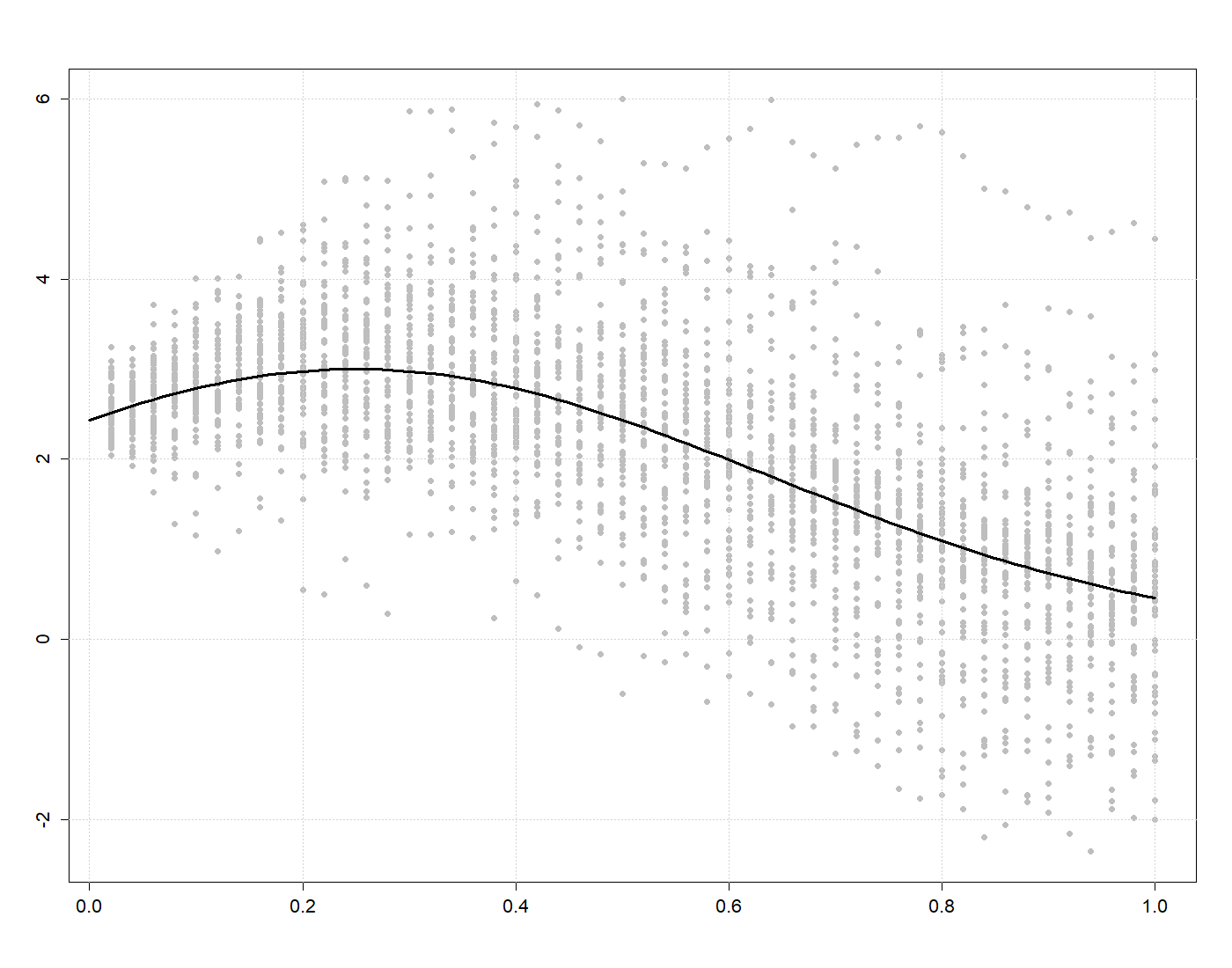}}
\caption{Two representative samples with mean function equal to $\sin(6 \pi t)(t+1)$ (left) and $3\exp[-(t-0.25)^2/0.1]$ (right), $\sigma = 0.2$ and standard Gaussian errors $\zeta_{ij}$. The mean functions are depicted as solid black lines.}
\label{fig:ROSM2}
\end{figure}

For this comparison we consider the M-type smoothing spline estimator based on the Huber function, also with tuning constant equal to 0.70, and henceforth denote this estimator by $\widehat{\mu}_{HSP}$. The least-squares smoothing spline estimator described by \citet{Cai:2011} is denoted by $\widehat{\mu}_{LSSP}$. These authors do not recommend a method to select the penalty parameter in finite samples. Hence, similarly as for the M-type smoothing spline estimator we use the GCV criterion in~(\ref{GCVcrit}). Both spline estimators were fitted with $r=2$ resulting in cubic splines. All estimators were implemented in the freeware $\texttt{R}$ \citep{R}. For the local linear estimator we used the package \texttt{KernSmooth} \citep{KernSmooth}, for the regression spline estimator we used a convex optimization routine provided by \citet{Fu:2020}, while the functional M-estimator was implemented according to the algorithm provided by \citet{Sinova:2018}. 

\begin{table}[H]
\centering
\resizebox{\columnwidth}{!}{%
\begin{tabular}{ccccccccccccc}
& \multicolumn{1}{c}{} &  \multicolumn{1}{c}{}  & \multicolumn{2}{c}{$\widehat{\mu}_{LSSP}$} & \multicolumn{2}{c}{$\widehat{\mu}_{HSP}$} & \multicolumn{2}{c}{$\widehat{\mu}_{LSLP}$} & \multicolumn{2}{c}{$\widehat{\mu}_{HF}$}  & \multicolumn{2}{c}{$\widehat{\mu}_{RS}$} \\ \\[-2ex]
$\sigma$ & $ m$ & Dist. & Mean & SE & Mean & SE & Mean & SE & Mean & SE  & Mean & SE   \\ \\
\multirow{10}{*}{$0.2$} & \multirow{5}{*}{20} & Gaus.  & 0.018 &  0.0006 & 0.0172 & 0.0005 &  0.2020 & 0.001 & \textbf{0.0132} & 0.0004 & 0.9216 &  0.0005 \\
& & $t_3$ & 0.0194  & 0.00061 & 0.0181 & 0.0004 & 0.2033 & 0.001 &  \textbf{0.0149} & 0.0004 & 0.9221 & 0.0005  \\
& & $st_{3, 0.5}$  & 0.0300 & 0.0009 & 0.0282 & 0.0007 & 0.190 & 0.0009 &  \textbf{0.0264} & 0.0008 & 0.9322 & 0.0008 \\
& & M. Gaus. &  0.0249 &  0.0006 &  \textbf{0.0206} & 0.0005 & 0.213 &  0.0016 & 0.0212 & 0.0005 & 0.9233 & 0.0005  \\
& & Sl. & 146.8 & 54.53 & \textbf{0.0202} & 0.0004 & 194.2 & 103.4 & 0.0483 & 0.0007 & 0.9165 & 0.0002\\ \\[-1.2ex]
& \multirow{5}{*}{50} & Gaus. &  0.0174 & 0.0006 & 0.0152 & 0.0005 & 0.0208 & 0.0006 & \textbf{0.0144} & 0.0005 & 0.9451 & 0.0005  \\
& & $t_3$  &  0.0180 & 0.0006 & \textbf{0.0154} & 0.0005 & 0.0217 &  0.0006 & 0.0157 & 0.0005 & 0.9450 & 0.0005\\ 
& & $st_{3, 0.5}$  & 0.0261  & 0.0008 & \textbf{0.0226} & 0.0006 & 0.0284 & 0.0008 &   0.0237 & 0.0007 & 0.9527 & 0.0007 \\
& &  M. Gaus. &  0.0193 & 0.0006 & \textbf{0.0155}	 & 0.0004 &  0.0243 & 0.0007 &  0.0214 &0.0004 & 0.9453 & 0.0005 \\
& &  Sl. &  1828.5 & 1402.1 & \textbf{0.0133} & 0.0004 & 492.4 & 257.5 & 0.0823 & 0.0008 & 0.9403 & 0.0002\\ \\[-1.2ex] 
\multirow{10}{*}{$0.5$} & \multirow{5}{*}{20} & Gaus.  & 0.0202 &  0.0005 & 0.0203 & 0.0004 &  0.206 & 0.0014 & \textbf{0.0175} &  0.0004 & 0.9219 & 0.0004 \\
& & $t_3$ & 0.0264  & 0.0007 & \textbf{0.0238} & 0.0005 & 0.213 &  0.0016 &  0.0241 & 0.0005 & 0.9240 & 0.0006 \\
& & $st_{3, 0.5}$  &  0.0879  & 0.0018 & \textbf{0.0725} & 0.0014 & 0.2206 & 0.0013 &   0.0814 & 0.0016 & 0.9784 & 0.0016 \\
& & M. Gaus. & 0.0501 & 0.0008 &  \textbf{0.0256} & 0.0005 & 0.2613 &   0.0032 &  0.0495 & 0.0007 & 0.9249 & 0.0005 \\
& & Sl. &  429.26 & 139.51 &  \textbf{0.0342} & 0.0005 & 199.1 &  82.75 & 0.1778 & 0.0022 & 0.9204 & 0.0003 \\ \\[-1.2ex]
& \multirow{5}{*}{50} & Gaus. &  0.0185 & 0.0006 & \textbf{0.0164} & 0.0005 & 0.0224 & 0.0006 & 0.0181 & 0.0005 & 0.9454 &0.0005 \\
& & $t_3$  &  0.0205 & 0.0006 & \textbf{0.0176} & 0.0005 & 0.0260 & 0.0006 & 0.0249 & 0.0005 &  0.9461 & 0.0005\\
& & $st_{3, 0.5}$  & 0.0808  & 0.0017 & \textbf{0.0660} & 0.0014 & 0.0835 & 0.0017 &  0.0833 & 0.0016 & 0.9976  & 0.0015\\
& &  M. Gaus. & 0.0312 & 0.0006  & \textbf{0.0188}	 & 0.0005 &  0.0418 & 0.0007 & 0.0598 & 0.0006 & 0.9462 &0.0005 \\
& &  Sl. &  6807.5 & 6111.1 & \textbf{0.0208} & 0.0004 &  2259.5 & 2036.4 & 0.3813 & 0.0034 &  0.9408 & 0.0002 \\ \\[-1.2ex] 
\multirow{10}{*}{$1$} & \multirow{5}{*}{20} & Gaus.  &  0.0283 &  0.0006 &   0.0312 & 0.0005 &  0.2202 & 0.0018 & \textbf{0.0306} & 0.0005 & 0.9252 & 0.0005\\
& & $t_3$ & 0.0481  & 0.0010& \textbf{0.0385} & 0.0006&  0.2556 & 0.0030 &   0.0525 &  0.0007 & 0.9288 & 0.0006 \\
& & $st_{3, 0.5}$  & 0.2968  & 0.0038 & \textbf{0.2168} & 0.0028 & 0.3929 & 0.0036 &  0.2817 & 0.0034 & 1.1229 & 0.0031 \\
& & M. Gaus.&  0.1375 &  0.0018 &  \textbf{0.0411} & 0.0006 & 0.3872 &  0.0054 & 0.1381 & 0.0016 &0.9308 & 0.0006 \\
& & Sl. & 6758237 & 6752967 & \textbf{0.0766} & 0.0010 &  1670998 & 1665878 &0.6233 & 0.0074 &  0.9390& 0.0009\\ \\[-1.2ex]
& \multirow{5}{*}{50} & Gaus. &   0.0179 & 0.0005 & \textbf{0.0175} & 0.0004 & 0.0242 & 0.0005 & 0.0288 & 0.0004 & 0.9387 &  0.0002\\
& & $t_3$  &  0.0295 & 0.0006 & \textbf{0.0236} & 0.0005 & 0.0399 & 0.0007 & 0.0545 & 0.0005 & 0.9476 & 0.0005\\
& & $st_{3, 0.5}$  & 0.2749  & 0.0036 & \textbf{0.1963} & 0.0027 & 0.2809 & 0.0036 &  0.2853 &0.0032 & 1.132&  0.0028\\
& &  M. Gaus. &  0.0739 & 0.0010 & \textbf{0.0249}	 & 0.0006 &   0.0947 & 0.0013 & 0.1877 & 0.0013 & 0.9483 & 0.0006\\
& &  Sl. &  99036 & 91566 & \textbf{0.0391} & 0.0005 & 4049.4 & 2045.1 & 1.4568 &  0.0129 & 0.9453 &  0.0003\\
\end{tabular}}
\caption{Mean and standard error of the MSE for the competing estimators over 1000 datasets of size $n=60$ with mean function $\mu(t)=\sin(6 \pi t)(t+1)$. Best performances are in bold.}
\label{Tab:ROSM1}
\end{table}

To evaluate the performance of the estimators we calculate their mean-square error (MSE), given by
\begin{equation*}
\MSE = m^{-1} \sum_{j=1}^m | \widehat{\mu}(T_j) - \mu(T_j) |^2,
\end{equation*}
which for large grids is an approximation to the $\mathcal{L}^2([0,1])$ distance.  Tables \ref{Tab:ROSM1} and \ref{Tab:ROSM2} below report the mean-squared errors and their standard errors based on 1000 simulated datasets of size $n=60$. In our experience such a moderately small sample size occurs fairly often in practice, see e.g., the well-known Canadian weather dataset \citep{Ramsay:2005}.

\begin{table}[H]
\centering
\resizebox{\columnwidth}{!}{%
\begin{tabular}{ccccccccccccc}
& \multicolumn{1}{c}{} &  \multicolumn{1}{c}{}  & \multicolumn{2}{c}{$\widehat{\mu}_{LSSP}$} & \multicolumn{2}{c}{$\widehat{\mu}_{HSP}$} & \multicolumn{2}{c}{$\widehat{\mu}_{LSLP}$} & \multicolumn{2}{c}{$\widehat{\mu}_{HF}$}  & \multicolumn{2}{c}{$\widehat{\mu}_{RS}$} \\ \\[-2ex]
$\sigma$ & $ m$ & Dist. & Mean & SE & Mean & SE & Mean & SE & Mean & SE  & Mean & SE   \\ \\
\multirow{10}{*}{$0.2$} & \multirow{5}{*}{20} & Gaus.  & 0.0115 &   0.0005 & \textbf{0.0091} & 0.0003 &  0.0134 & 0.0006 & 0.0127 & 0.0004 &  0.0133 &  0.0001 \\
& & $t_3$ & 0.0173  & 0.0006 & \textbf{0.0151} & 0.0005 & 0.0175 & 0.0006 &  0.0158 & 0.0005 & 0.0225 & 0.0005  \\
& & $st_{3, 0.5}$  & 0.0275 & 0.0009 & \textbf{0.0239} & 0.0007 & 0.0264 & 0.0009 &  0.0250 &  0.0007 & 0.0311 & 0.0007 \\
& & M. Gaus. &  0.0182 &   0.0005 &  \textbf{0.0151} & 0.0004 & 0.0206 &   0.0005 &  0.0204 & 0.0005 & 0.0222 & 0.0004  \\
& & Sl. &  132.50 & 76.288 & \textbf{0.0122} & 0.0003 & 202.09 & 103.15 & 0.0462 & 0.0007 & 0.0150 & 0.0002 \\ \\[-1.2ex]
& \multirow{5}{*}{50} & Gaus. &   0.0099 & 0.0004 & \textbf{0.0080} & 0.0003 & 0.0111 & 0.0004 & 0.0117 & 0.0004 & 0.0137 & 0.0001\\
& & $t_3$  &  0.0156 & 0.0005 & \textbf{0.0138} & 0.0004 & 0.0158 & 0.0005 & 0.0154 & 0.0005 & 0.0239 & 0.0004 \\ 
& & $st_{3, 0.5}$  & 0.0262  & 0.0009 &  \textbf{0.0232} & 0.0008 & 0.0258 & 0.0009 &  0.0255 & 0.0008 & 0.0330&  0.0008\\
& &  M. Gaus. & 0.0168 & 0.0006 & \textbf{0.0137} &0.0004 &   0.0177 & 0.0006 & 0.0216 & 0.0005 & 0.0238 & 0.0004 \\
& &  Sl. &  564.14 & 394.85 & \textbf{0.0097} & 0.0004 & 359.49 & 233.32 & 0.0831 & 0.0009 &  0.0150 & 0.0001\\ \\[-1.2ex] 
\multirow{10}{*}{$0.5$} & \multirow{5}{*}{20} & Gaus.  & 0.0121 &   0.0004 &  \textbf{0.0102} &  0.0003 &  0.0149 & 0.0004 & 0.0163 &  0.0004 & 0.0141 & 0.0001 \\
& & $t_3$ & 0.0192  & 0.0005 & \textbf{0.0168} & 0.0005 & 0.0225 &  0.0006 &  0.0237 & 0.0005 & 0.0235 &  0.0005 \\
& & $st_{3, 0.5}$  &  0.0817 & 0.0020 & \textbf{0.0669} & 0.0016 & 0.0794 & 0.0020 &   0.0818 & 0.0017 &  0.0728 & 0.0015 \\
& & M. Gaus. & 0.0313 &  0.0007 &  \textbf{0.0182} & 0.0005 & 0.0435 &   0.0010 &   0.0492 & 0.0007 & 0.0242 & 0.0005 \\
& & Sl. &   13135 & 11580 &   \textbf{0.0175} & 0.0002 & 9789.1 &  8069.6 & 0.1827 & 0.0021 & 0.0181 & 0.0002 \\ \\[-1.2ex]
& \multirow{5}{*}{50} & Gaus. &  0.0105 &  0.0005 & \textbf{0.0084} & 0.0003 & 0.0119 & 0.0005 & 0.0155 & 0.0004 & 0.0142 & 0.0001 \\
& & $t_3$  &  0.0174 & 0.0006 &  \textbf{0.0146} & 0.0005 & 0.0185& 0.0006 & 0.0243 & 0.0005 & 0.0245 & 0.0005 \\
& & $st_{3, 0.5}$  & 0.0795  & 0.0019 & \textbf{0.0634} & 0.0015 &  0.0775 & 0.0018 & 0.0832 & 0.0016 &  0.0728 & 0.0015 \\
& &  M. Gaus. & 0.0218 &  0.0006 &  \textbf{0.0154}	 & 0.0005&  0.0256 & 0.0006 &  0.0587 & 0.0006 & 0.0248 & 0.0005 \\
& &  Sl. &   295.43 &  125.48 & \textbf{0.0069} & 0.0001 & 233.94 & 96.796 & 0.3917 & 0.0035 & 0.0153 & 0.0001 \\ \\[-1.2ex] 
\multirow{10}{*}{$1$} & \multirow{5}{*}{20} & Gaus.  &  \textbf{0.0151} &   0.0004 & 0.0152 & 0.0004 &  0.0214 & 0.0005 &  0.0290 & 0.0005 & 0.0179 & 0.0003  \\
& & $t_3$ & 0.0293  & 0.0007 & \textbf{0.0244} & 0.0006 &  0.0412 &  0.0009 &  0.0516 &0.0007 & 0.02848 &  0.0006 \\
& & $st_{3, 0.5}$  & 0.2702 &  0.0035 & \textbf{0.1957} & 0.0027 & 0.2664& 0.0036 &  0.2704 & 0.0031 &   0.1991&  0.0027\\
& & M. Gaus.&  0.0727 & 0.0015 & \textbf{0.0291} & 0.0007 & 0.1155 & 0.0028 & 0.1406 & 0.0016 &0.03097 &  0.0007\\
& & Sl. & 2082.9 & 892.89 & 0.0399 & 0.0006 &  2061.4 & 848.98 & 0.6225 & 0.007 &  \textbf{0.0341} & 0.0006 \\ \\[-1.2ex]
& \multirow{5}{*}{50} & Gaus. &  0.0115  & 0.0004 & \textbf{0.0102} &  0.0003 & 0.0134 & 0.0004 & 0.0279 & 0.0004 & 0.0157 & 0.0001 \\
& & $t_3$  &  0.0213 & 0.0006 & \textbf{0.0175} & 0.0005 & 0.0246 & 0.0006 & 0.0552 &  0.0005 &  0.0261 &  0.0005 \\
& & $st_{3, 0.5}$  & 0.2594  & 0.0035 & \textbf{0.1852} & 0.0027 & 0.2527& 0.0034 &  0.2794 & 0.0032 & 0.1935 & 0.0027\\
& &  M. Gaus. &  0.0391 & 0.0008 & \textbf{0.0183}	 & 0.0005 &  0.0483 & 0.0009 & 0.1901 & 0.0014 & 0.0267 & 0.0005 \\
& &  Sl. &  4171.4 &  3138.2& \textbf{0.0157} & 0.0002 & 10696.9 & 9682.4 & 1.4655 &  0.0131 &  0.0216 &  0.0002 \\
\end{tabular}}
\caption{Mean and standard error of the MSE for the competing estimators over 1000 datasets of size $n=60$ with mean function $\mu(t)= 3\exp[-(0.25-t)^2/0.1]$.}
\label{Tab:ROSM2}
\end{table}

The results indicate that the least-squares smoothing spline estimator, $\widehat{\mu}_{LSSP}$, behaves well under all noise and discretization settings provided that the measurement errors follow a Gaussian distribution. However, its performance quickly deteriorates as soon as the errors deviate from the Gaussian ideal. Similar remarks apply for the local linear estimator $\widehat{\mu}_{LSLP}$, except that its behavior is not good for the sinusoidal mean function when $m$ is small. The reason for this lesser performance is that $\widehat{\mu}_{LSLP}$ regularly oversmooths in this case and thus misses the peaks and troughs of the sinusoidal mean function. We believe that its performance may be greatly improved if more effort is invested in the selection of its bandwidth, which naturally would come at the cost of a higher computing time.

Comparing the robust estimators $\widehat{\mu}_{HSP}$, $\widehat{\mu}_{HF}$ and $\widehat{\mu}_{RS}$ in detail leads to a number of interesting observations. First, while $\widehat{\mu}_{HF}$   performs well in general for lower levels of noise, it is highly sensitive to larger levels of noise, especially when that is accompanied by heavy-tailed measurement errors. In particular, for the Gaussian mixture  $\widehat{\mu}_{HF}$ tends to perform nearly as bad as the least-squares  estimator $\widehat{\mu}_{LSSP}$ with respect to both mean functions. The regression spline estimator $\widehat{\mu}_{RS}$ does not perform well with respect to the sinusoidal mean function of our first example. This may be attributed to the local characteristics of this function. For such functions regression spline estimators require not only an appropriate number of knots but also good placement, see, e.g., \citep{Eubank:1999} for possible strategies in that respect. Inevitably, this results in much increased computational effort that negates the computational advantage of regression spline estimators. The performance of $\widehat{\mu}_{RS}$ greatly improves in our second example, but it gets outperformed by the competing estimators in all but one of the settings considered.

The Huber-type smoothing spline estimator, $\widehat{\mu}_{HSP}$, performs as well as $\widehat{\mu}_{LSSP}$ under Gaussian errors and exhibits a high degree of resilience towards high levels of noise and contamination. In particular, it can be seen that symmetric contamination, i.e. $t_3$, mixture Gaussian errors and Slash errors, only has a small effect on the performance of the estimator which clearly outperforms its competitors. Interestingly, $\widehat{\mu}_{HSP}$ as well as the other robust estimators seem more vulnerable to asymmetric contamination resulting from the right-skewed t-distribution. Although its performance deteriorates in this case, the Huber-type smoothing spline estimator best mitigates the effect of this contamination relative to its competitors. Overall, the present simulation experiments suggest that robust smoothing spline estimators perform well in clean data and safeguard against outlying observations either in the form of outlying curves or heavy-tailed measurement errors.

\section{Application: Covid-19 in Europe}
\label{sec:ROSM6}

Since its identification at the end of $2019$, the Covid-19 virus has been responsible for hundreds of millions of infected cases and considerable economic and social upheaval. In this study we aim to shed light on the course of the epidemic in Europe and identify the overall trends as well as countries which have been most or least afflicted. Our dataset consists of the number of daily new cases per million persons from $34$ European countries in the period between the 24th of January 2020 and the 21st of March 2021 for a total of $423$ days. The data is not observed on the same grid, as most countries only began systematically reporting the number of new daily cases around the end of March 2020.  Hence, according to our previous definitions, we are in the situation of an independent design. The data is depicted in the left panel of Figure \ref{fig:ROSM3}. 

Despite the fact that the raw data exhibits a number of isolated spikes corresponding to occasionally very large numbers of daily cases, two peaks are easily discernible. The first peak taking place between the end of March and the beginning of April is commonly identified with the culmination of the first wave of infections. The number of infections then plummeted in the aftermath of the lock-down measures, but rose rapidly after the end of the summer possibly due to the gradual abandonment of restrictions. The so-called second wave of infections peaked towards the end of 2020 and subsided afterwards. However, as of February 2021 the number of infections seems to be yet again on the increase. In order to draw formal conclusions, we employ the quantile cubic smoothing spline with values of $\tau \in \{0.1, 0.3, 0.5, 0.7, 0.9\}$ yielding the right panel of Figure~\ref{fig:ROSM3}.

\begin{figure}[H]
\centering
\subfloat{\includegraphics[width = 0.49\textwidth]{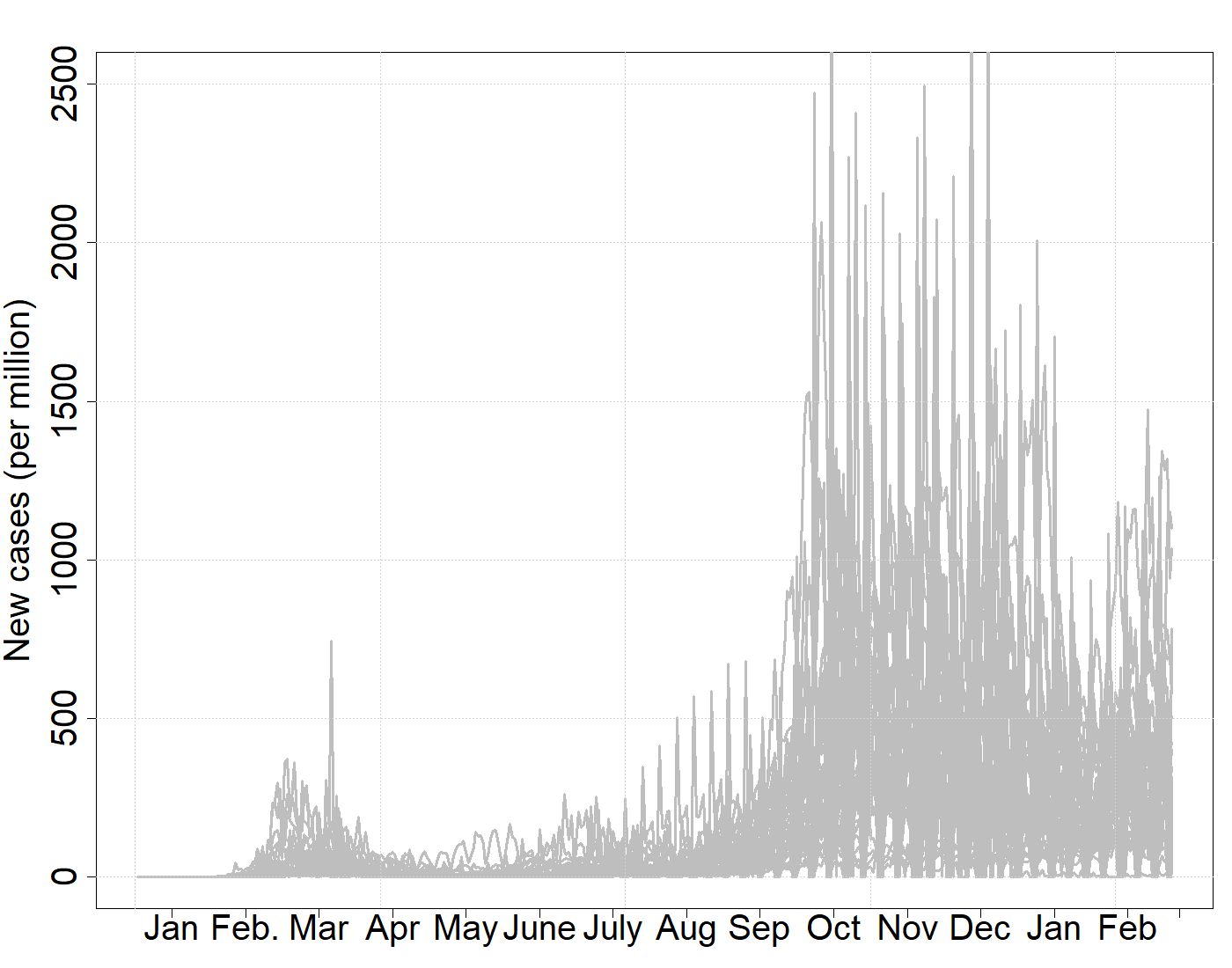}} \ 
\subfloat{\includegraphics[width = 0.49\textwidth]{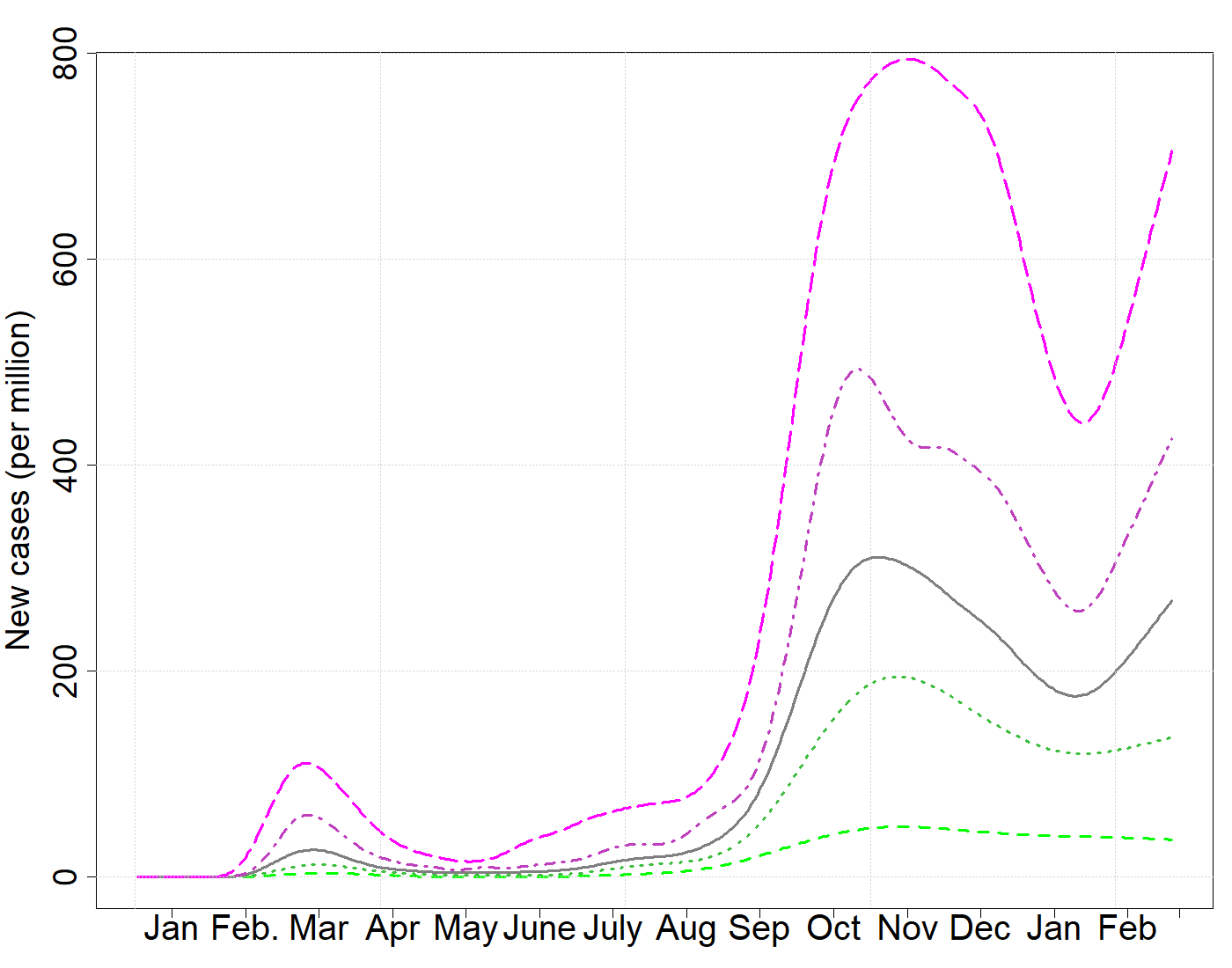}}
\caption{Left: number of daily new cases per million population in $34$ European countries. Right: estimated quantiles with ( \dashed, \dotted, \full,  \dotdash, \denselydashed ) corresponding to $(0.1, \ 0.3, \ 0.5, \ 0.7, \ 0.9)$-quantiles respectively. }
\label{fig:ROSM3}
\end{figure}

The estimated quantiles serve to confirm the suspected bimodality in the number of infections. Although these two peaks are identified by all estimated quantiles, higher quantiles significantly differ from lower ones in that they are much steeper. This suggests that countries most afflicted by the first wave of infections experienced only a brief period of calm in between the two waves. For these countries the number of new infections rose very fast towards the end of the summer. By contrast, the median curve indicates that for countries less affected by the first wave, i.e., countries with fewer than $25$ cases per million persons, the number of infections only starts to gradually increase again in August. The estimates further suggest a structural break in early September, when the number of new infections picks up pace. This change may be attributable to the opening of schools in many countries. 

To visualize the estimates, we have constructed heat maps of Europe based on the quantiles for four days covering the whole period. These are shown in Figure \ref{fig:ROSM4}.
The heat maps in the top row indicate that Spain, Portugal, Luxembourg Italy, Switzerland, Sweden and the United Kingdom were hit hardest in the first wave of the disease. Similarly, from the heat maps in the bottom row it can be seen that countries worse afflicted by the second wave include Luxembourg, Switzerland, Serbia, Hungary, Poland and Estonia. A tentative conclusion from these maps may be that the first wave was more detrimental to western European countries, while during the second wave it was the countries of eastern Europe that were worse afflicted.

\begin{figure}[H]
\centering
\subfloat{\includegraphics[width = 0.492\textwidth]{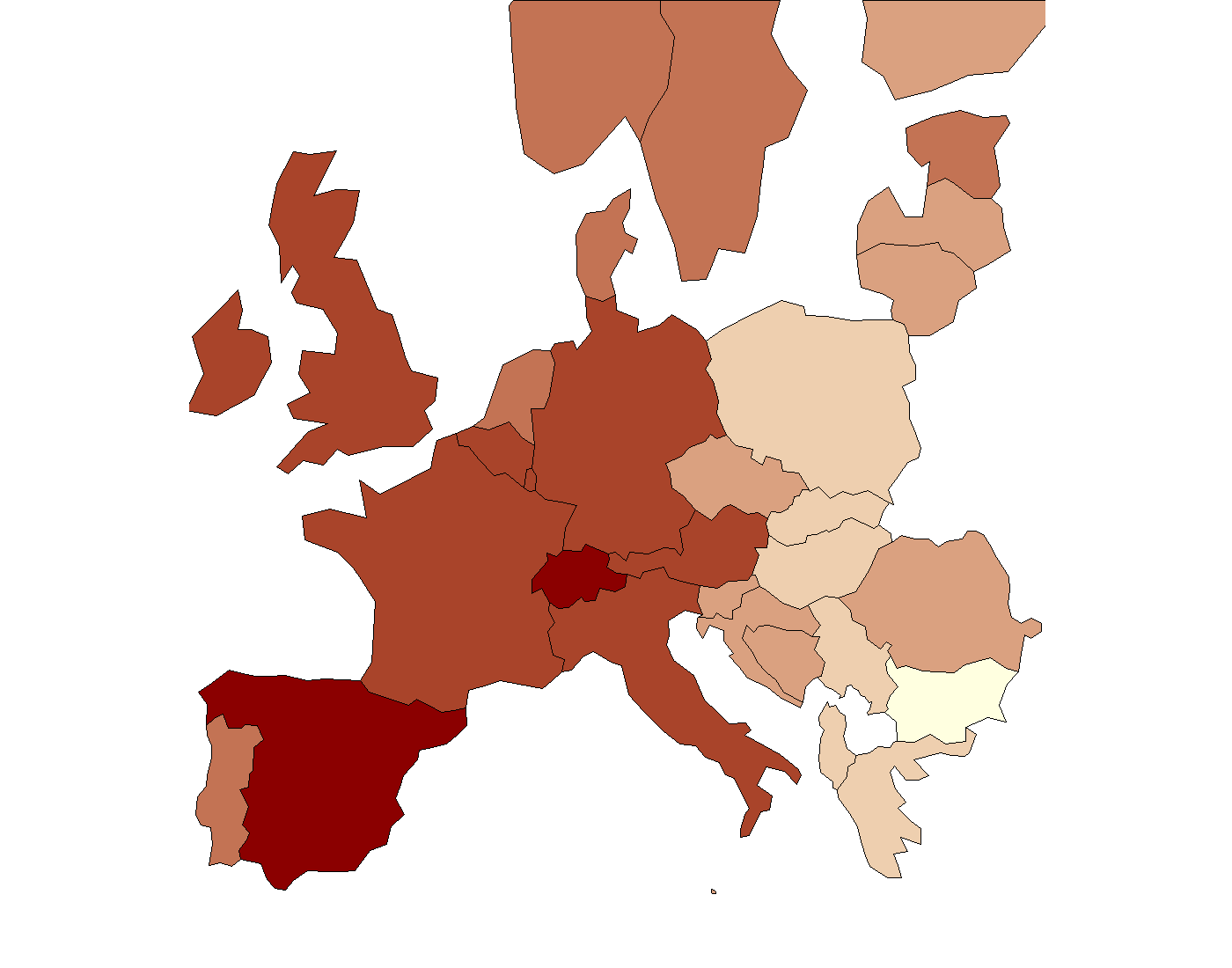}} \ 
\subfloat{\includegraphics[width = 0.492\textwidth]{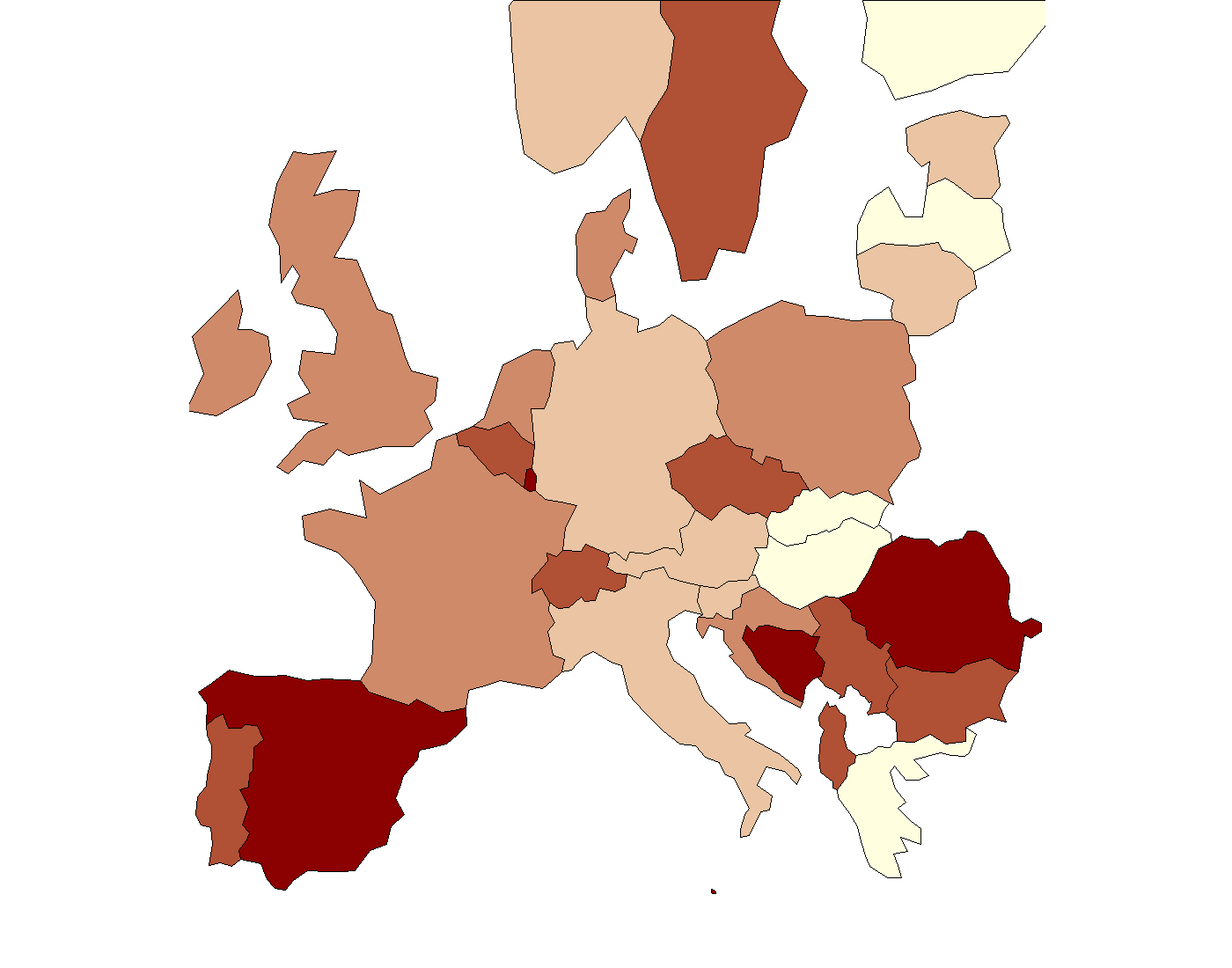}} \\
\subfloat{\includegraphics[width = 0.492\textwidth]{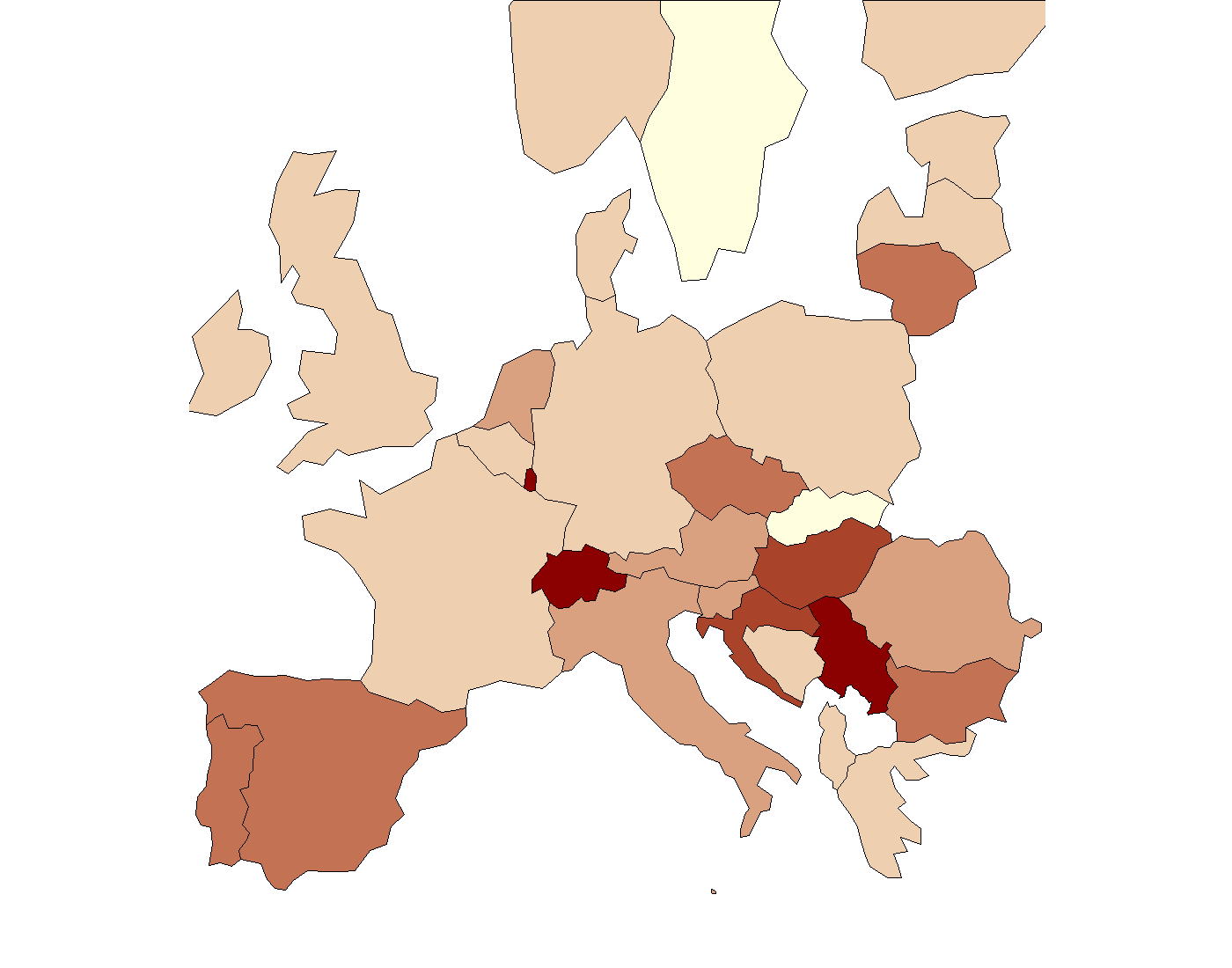}} \
\subfloat{\includegraphics[width = 0.492\textwidth]{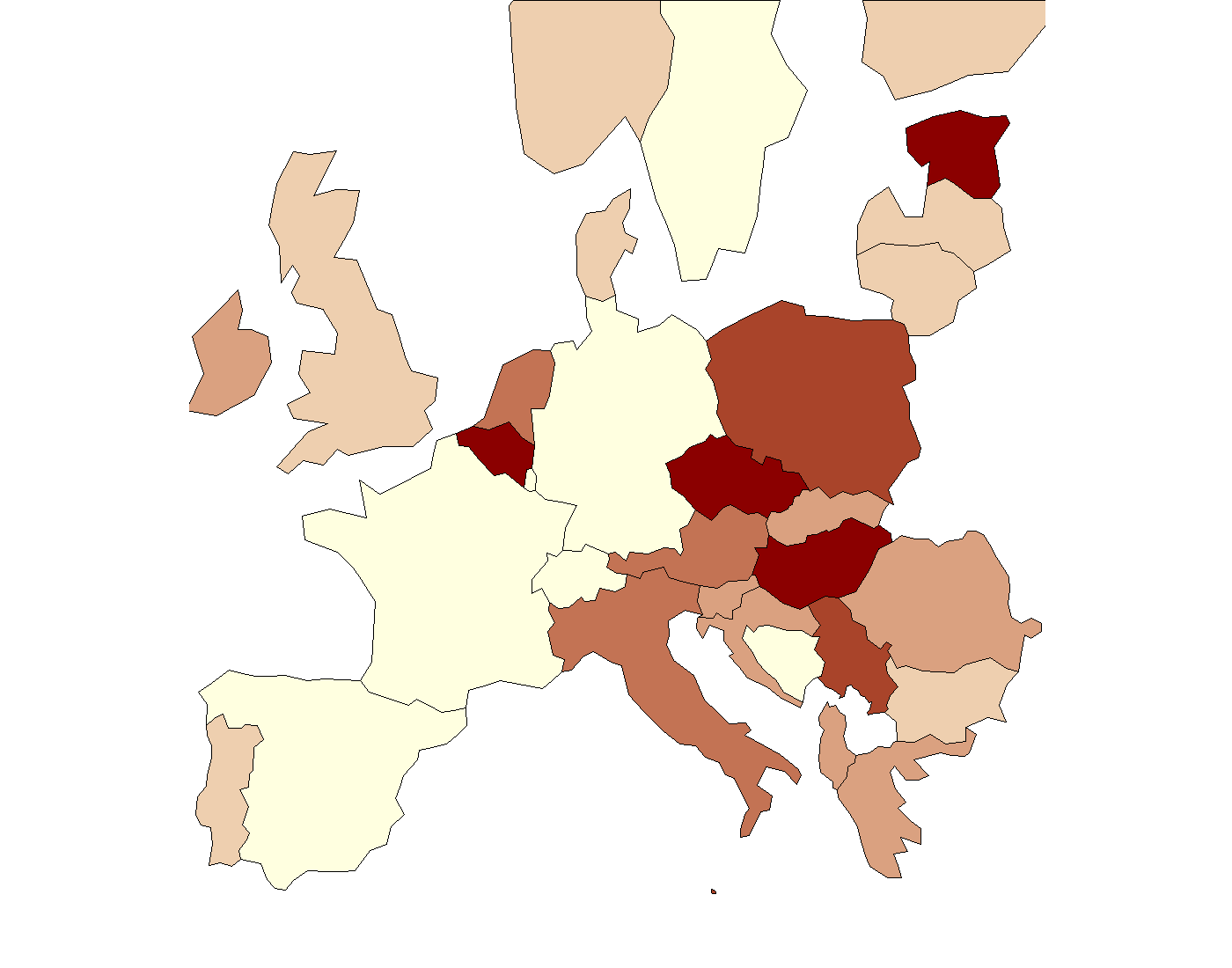}}
\caption{Heat maps of Europe on selected dates. Top row: March 30th and July 30th 2020. Bottom row: November 30th 2020 and March 21st 2021. Darker colors correspond to higher quantiles of the number of new infections.}
\label{fig:ROSM4}
\end{figure}
Lastly, the ab	ove analysis can be complemented by a study of the daily deaths per million persons attributed to Covid-19. The left panel of Figure~\ref{fig:ROSM5} plots the corresponding trajectories while the right panel presents the quantile smoothing spline estimates. It can clearly be seen that the number of daily deaths is also bimodal, but the second peak is only marginally larger than the first peak. In other words, the numbers of deaths have not kept up with the number of cases, which exploded during the second wave. Two plausible explanations for this difference is that the fraction of younger less vulnerable people that contract the disease is much higher than in the first wave of the epidemic and the treatment of patients has improved. 

\begin{figure}[H]
\centering
\subfloat{\includegraphics[width = 0.495\textwidth]{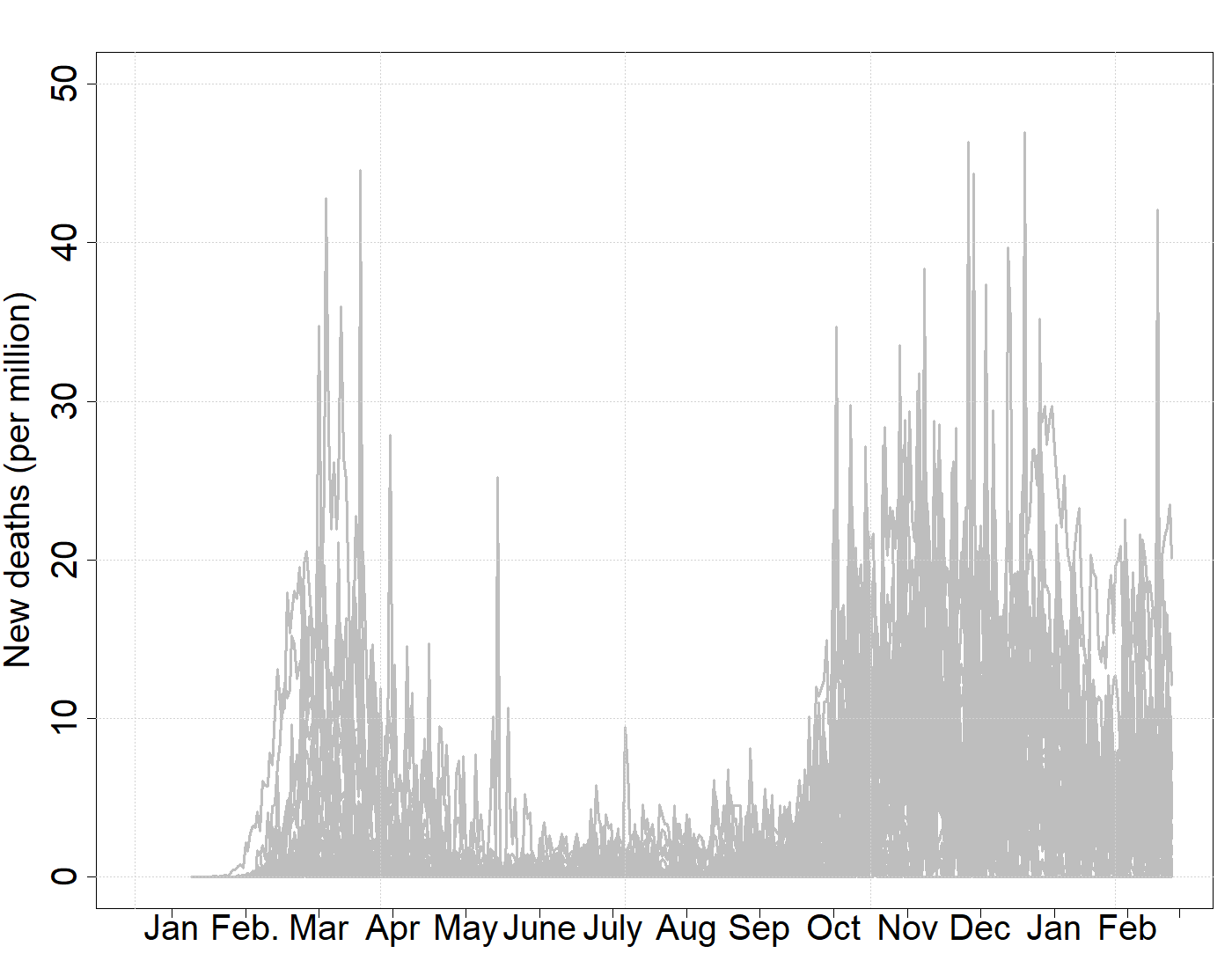}} \ 
\subfloat{\includegraphics[width = 0.495\textwidth]{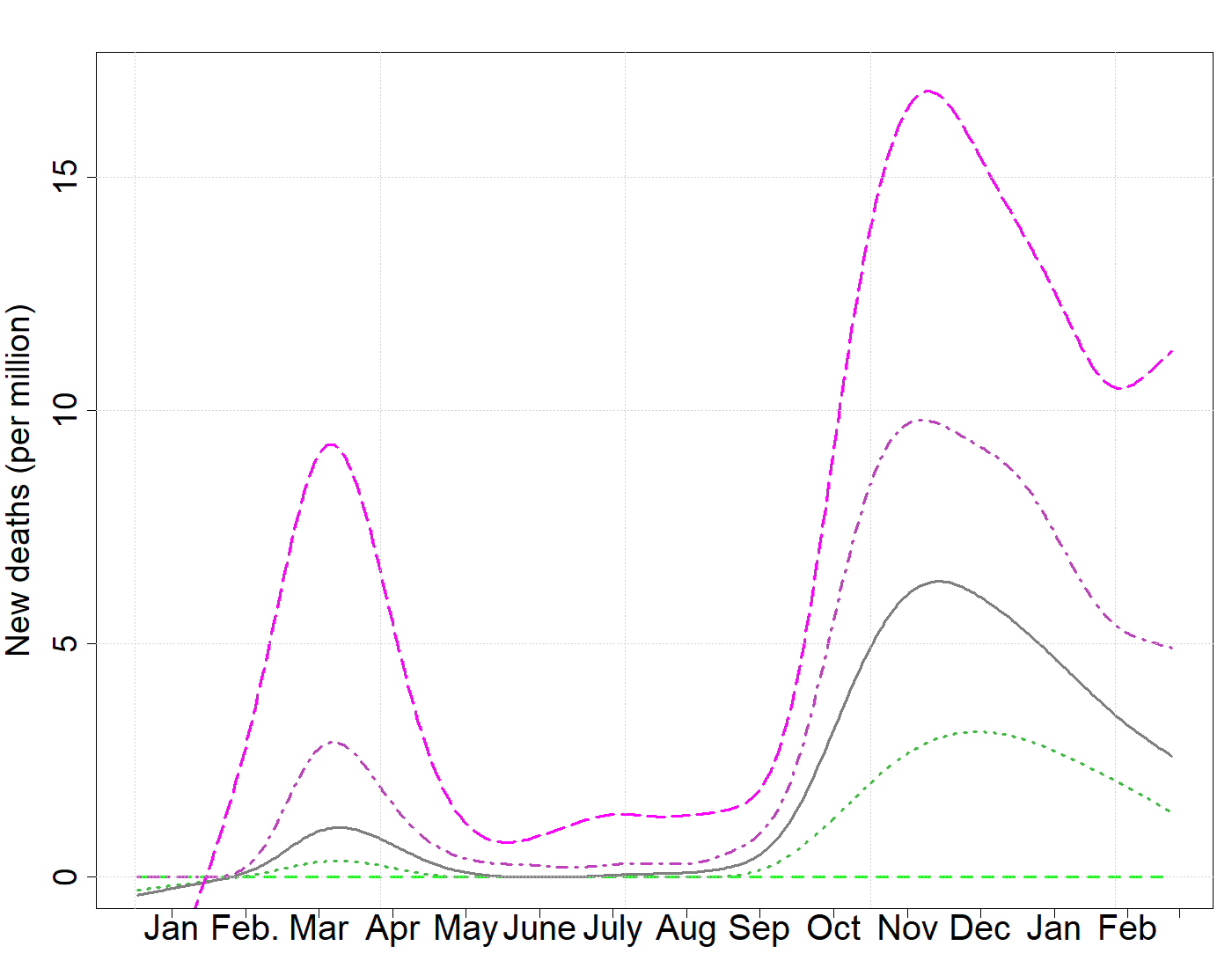}}
\caption{Left: number of daily new deaths per million population in $34$ European countries. Right: estimated quantiles with ( \dashed, \dotted, \full,  \dotdash, \denselydashed ) corresponding to $(0.1, \ 0.3, \ 0.5, \ 0.7, \ 0.9)$-quantiles respectively. }
\label{fig:ROSM5}
\end{figure}

\section{Concluding remarks}

This paper provides theoretical justification for robust smoothing spline estimators under both common and  independent designs with trajectories that may be densely or sparsely sampled. From here there are several possible research directions worth pursuing. Of particular importance is robust inference for the covariance structure of second-order processes. Despite its importance, this problem has not received much attention. To the best of our knowledge current work, such as \citep{Pan:2012}, require completely observed trajectories. To construct robust dispersion estimates from discretely sampled functional data the current framework can be extended by replacing the one-dimensional smoothing spline with a two-dimensional thin plate spline \citep{Wahba:1990, Wood:2017}. Robust estimation of the covariance structure would then allow for robust functional principal component analysis based on discretely sampled data in the manner outlined in \citep{Yao:2005, Li:2010}.

Another important area for the application of robust smoothing spline estimators which has attracted great interest recently is functional regression and its variants. To date, most estimation proposals assume that the functional predictor is fully observed. Hence, the discreteness of the data is essentially ignored. A smoothing spline estimator for discretely observed curves based on the MM principle was proposed by \citet{maronna2013robust}, 
but without any theoretical support. We are confident that under appropriate conditions this MM-type smoothing spline estimator can achieve optimal rates of convergence, as defined in \citep{Crambes:2009}, whilst also providing a much safer estimation method in practice. We aim to study this in future work.

\section{Appendix: Proofs of the theoretical results}
\label{sec:ROSM8}

\begin{proof}[Proof of Proposition 1]
First, we recall the existence of a complete orthonormal sequence $\{\phi_j\}_j$ for $\mathcal{L}^2([0,1])$ such that for $i, j \geq 1$,
\begin{align*}
\langle \phi_i^{(r)}, \phi_j^{(r)} \rangle_{2} = \gamma_i \delta_{ij},
\end{align*}
for values $0 = \gamma_1 = \ldots = \gamma_r < \gamma_{r+1} < \ldots$ which satisfy 
\begin{align*}
C_1 k^{2r}  \leq \gamma_{k+r} \leq C_2 k^{2r}, \quad k \geq 1,
\end{align*}
for some $0 < C_1 \leq C_2$, see \citep[Theorem 2.8.3]{Hsing:2015}. Furthermore, since $\mathcal{W}^{r,2}([0,1]) \subset \mathcal{L}^2([0,1])$ and the $\phi_j$ are orthogonal in $\mathcal{W}^{r,2}([0,1])$ under $\langle \cdot, \cdot \rangle_{r, \lambda}$, we have that
\begin{align*}
\langle \phi_i, \phi_j \rangle_{r, \lambda} = (1 + \lambda \gamma_i) \delta_{ij},
\end{align*}
such that $\phi_j/(1+\lambda \gamma_j)$ is an orthonormal basis in $\mathcal{W}^{r,2}([0,1])$ under the inner product $||\cdot||_{r, \lambda}$. It follows that for every $f \in \mathcal{W}^{r,2}([0,1])$ we can write
\begin{align*}
f(x) = \sum_{j=1}^{\infty} \frac{f_j}{(1+\lambda \gamma_j)^{1/2}} \phi_j(x),
\end{align*}
for some square summable sequence $\{f_j\}_j$. Using this representation and the Schwarz inequality we find
\begin{align*}
|f(x)| \leq \sup_{j \geq 1} \sup_{x \in [0,1]} |\phi_j(x)| \left\{ \sum_{j=1}^{\infty} |f_j|^2 \right\}^{1/2} \left\{ \sum_{j=1}^{\infty} \frac{1}{1+\lambda \gamma_j} \right\}^{1/2}.
\end{align*}
The first factor on the right hand-side is bounded, say $\sup_{j \geq 1} \sup_{x \in [0,1]} |\phi_j(x)| \leq M$, as the $\phi_j$ are uniformly bounded in both $x$ and $j$, see \citep[p. 60]{Hsing:2015}. Moreover, Parseval's theorem yields
\begin{align*}
\sum_{j=1}^{\infty} |f_j|^2 = ||f||_{r, \lambda}^2 < \infty,
\end{align*}
since $f \in \mathcal{W}^{r,2}([0,1])$. Finally, by the properties of the $\gamma_j$ and integral approximation, we obtain
\begin{align*}
\sum_{j=1}^{\infty}  \frac{1}{1+\lambda \gamma_j}  \leq r + \int_{r}^{\infty} \frac{dx}{1+\lambda C_1 x^{2r}} \leq r + c_r \lambda^{-1/2r},
\end{align*}
for some finite constant $c_r$ depending only on $r$. Putting everything together and recalling that $\lambda \in (0,1]$,
\begin{align*}
|f(x)| \leq  M \max\{r, c_r\}^{1/2} \lambda^{-1/4r} ||f||_{r, \lambda},
\end{align*}
which establishes part (a). Part (b) now follows from the Riesz representation theorem for bounded linear functionals on $\mathcal{W}^{r,2}([0,1])$. To prove part (c) note that, since $x \mapsto \mathcal{R}_{r, \lambda}(x,y) \in \mathcal{W}^{r,2}([0,1])$ and $\phi_j/(1+\lambda \gamma_j)^{1/2}, j \geq 1$, is an orthonormal basis, we may write
\begin{align*}
\mathcal{R}_{r, \lambda}(x,y) = \sum_{j = 1}^{\infty} \frac{1}{1+\lambda \gamma_j} \langle \mathcal{R}_{r, \lambda}(\cdot,y), \phi_j \rangle_{r, \lambda} \phi_j(x).
\end{align*}
But $\mathcal{R}_{r, \lambda}(x,y)$ is the reproducing kernel, hence $\langle \mathcal{R}_{r, \lambda}(\cdot,y), \phi_j \rangle_{r, \lambda} = \phi_j(y)$ resulting in
\begin{align*}
\mathcal{R}_{r, \lambda}(x,y) = \sum_{j = 1}^{\infty} \frac{1}{1+\lambda \gamma_j} \phi_j(y) \phi_j(x),
\end{align*}
as asserted. Now, for every $n>m$ we have
\begin{align*}
\sum_{j=m+1}^n \frac{|\phi_j(x) \phi_j(y)|}{1+\lambda \gamma_j} \leq M^2 \sum_{j=m+1}^n \frac{1}{1+\lambda \gamma_j}.
\end{align*}
Since for every $\lambda>0$ the latter series is convergent, it follows that for every $\epsilon>0$ there exists an $N$ such that
\begin{align*}
\sup_{n,m \geq N} \sup_{x, y \in [0,1]^2}\sum_{j=m+1}^n \frac{|\phi_j(x) \phi_j(y)|}{1+\lambda \gamma_j} < \epsilon. 
\end{align*}
The absolute series is thus uniformly Cauchy and hence uniformly convergent, as was to be proven.
\end{proof}

Proposition~\ref{Prop:ROSM1} implies that
\begin{equation*}
||\mathcal{R}_{r, \lambda}(x, \cdot)||_{r, \lambda}^2 =  \langle \mathcal{R}_{r, \lambda}(x, \cdot), \mathcal{R}_{r, \lambda}(\cdot, x) \rangle_{r,\lambda} =  \mathcal{R}_{r, \lambda}(x, x)   \leq c_r  \lambda^{-1/{4r}}  ||\mathcal{R}_{r, \lambda}(x, \cdot)||_{r, \lambda}
\end{equation*}
for $\lambda \in (0,1]$. Upon dividing both sides with $||\mathcal{R}_{r, \lambda}(x, \cdot)||_{r, \lambda}$ we obtain
\begin{equation}
\label{eq:ROSM8}
\sup_{x \in [0,1]} ||\mathcal{R}_{r, \lambda}(x, \cdot)||_{r, \lambda} \leq c_r \lambda^{-1/{4r}}.
\end{equation}
We will make frequent use of this inequality in our proofs below.

First, we give two lemmas that will be used to proof Theorem~\ref{Thm:ROSM1}~.
The first lemma concerns the minimization of convex semi-continuous functionals in the Hilbert space and is proven in \citep{Kalogridis:2020}.
\begin{lemma}
\label{Lem:ROSM1}
Let $\mathcal{H}$ denote a real Hilbert space of functions endowed with inner product $\langle \cdot, \cdot \rangle_{\mathcal{H}}$ and associated norm $||\cdot||_{\mathcal{H}}$ and let $L: \mathcal{H} \to \mathbb{R}_{+}$ denote a convex lower semicontinuous functional. If
\begin{equation*}
L(0) < \inf_{||f||_{\mathcal{H}}=1 } L(f),
\end{equation*}
then there exists a minimizer of $L$ in the unit ball $\{f \in \mathcal{H}: ||f||_{\mathcal{H}} \leq 1 \}$
\end{lemma}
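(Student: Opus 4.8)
The plan is to apply the direct method of the calculus of variations on the closed unit ball $B = \{f \in \mathcal{H} : ||f||_{\mathcal{H}} \leq 1\}$ and then to exploit the strict inequality in the hypothesis in order to conclude that the resulting constrained minimizer is in fact a global minimizer of $L$. The two structural facts I would establish first are that $B$ is weakly sequentially compact and that $L$ is weakly lower semicontinuous. The first is immediate because $\mathcal{H}$, being a Hilbert space, is reflexive, so every bounded sequence admits a weakly convergent subsequence; moreover $B$ is convex and strongly closed, hence weakly closed, so it contains the weak limit. The second follows from convexity: for each $c$ the sublevel set $\{f : L(f) \leq c\}$ is convex by convexity of $L$ and strongly closed by lower semicontinuity, and a convex strongly-closed set is weakly closed by the Hahn--Banach separation theorem (equivalently, Mazur's lemma); since every sublevel set is weakly closed, $L$ is weakly lower semicontinuous.

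With these two facts in hand, I would take a minimizing sequence $\{f_k\} \subset B$, that is $L(f_k) \to \inf_{f \in B} L(f)$, the infimum being finite since $L \geq 0$. By weak sequential compactness there is a subsequence $f_{k_j} \rightharpoonup \widehat{f}$ with $\widehat{f} \in B$, and weak lower semicontinuity gives $L(\widehat{f}) \leq \liminf_j L(f_{k_j}) = \inf_{f \in B} L(f)$, so $\widehat{f}$ attains the infimum of $L$ over $B$.

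Finally, I would use the hypothesis to relocate this minimizer to the interior of the ball. Since $0 \in B$, we have $L(\widehat{f}) \leq L(0) < \inf_{||f||_{\mathcal{H}} = 1} L(f)$, which rules out $||\widehat{f}||_{\mathcal{H}} = 1$ and therefore forces $||\widehat{f}||_{\mathcal{H}} < 1$. An interior constrained minimizer of a convex functional is automatically a global one: for any $g \in \mathcal{H}$ and all sufficiently small $t > 0$ the point $(1-t)\widehat{f} + t g$ still lies in $B$ because $||\widehat{f}||_{\mathcal{H}} < 1$, so that $L(\widehat{f}) \leq L((1-t)\widehat{f} + t g) \leq (1-t) L(\widehat{f}) + t L(g)$ by minimality over $B$ and convexity, whence $L(\widehat{f}) \leq L(g)$. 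Thus $\widehat{f}$ minimizes $L$ over all of $\mathcal{H}$ and lies in the unit ball, as claimed. I expect the only genuinely delicate point to be the passage from strong to weak lower semicontinuity, which is precisely where convexity is indispensable; the remaining steps are routine consequences of reflexivity and the convexity inequality.
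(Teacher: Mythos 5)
Your proof is correct, but it takes a genuinely different route from the paper's. The paper fixes $y := \inf_{||f||_{\mathcal{H}}=1} L(f)$ and works with the sublevel set $D := \{f \in \mathcal{H} : L(f) \leq y\}$: a radial convexity argument (if $g \in D$ had $||g||_{\mathcal{H}} > 1$, then $L(g/||g||_{\mathcal{H}}) \leq ||g||_{\mathcal{H}}^{-1} L(g) + (1 - ||g||_{\mathcal{H}}^{-1}) L(0) < y$, contradicting the definition of $y$) shows $D \subset B$, so $D$ is bounded, convex and closed, hence weakly compact; since $\inf_{\mathcal{H}} L \leq L(0) < y$, a globally minimizing sequence eventually lies in $D$, and weak lower semicontinuity then places a global minimizer inside $D \subset B$. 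You instead run the direct method on the ball $B$ itself, which is trivially weakly compact, obtain a constrained minimizer $\widehat{f}$, invoke the hypothesis to rule out $||\widehat{f}||_{\mathcal{H}} = 1$, and then upgrade the interior constrained minimizer to a global one via the line-segment inequality $L(\widehat{f}) \leq (1-t)L(\widehat{f}) + tL(g)$. Both routes rest on the same functional-analytic ingredients (reflexivity for weak compactness, and Mazur's lemma to pass from strong to weak lower semicontinuity of a convex lsc functional), and both actually deliver the stronger conclusion the paper needs later, namely that the point found minimizes $L$ over all of $\mathcal{H}$, not merely over $B$. The difference lies in where convexity meets the hypothesis $L(0) < y$: the paper uses it radially, trapping the entire sublevel set (and hence every global minimizer and every tail of a minimizing sequence) inside the ball in one stroke, whereas you use it pointwise, first to exclude the sphere and then to propagate minimality outward. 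Your version is arguably more modular and standard; the paper's yields the slightly stronger localization of all solutions essentially for free.
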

The second lemma concerns the approximation of an integral of a smooth $\mathcal{W}^{r,2}([0,1])$-function by a sum.

\begin{lemma}
\label{Lem:ROSM2}
Under assumption (A6) there exists a constant $c_0$ such that, for all $f \in \mathcal{W}^{r,2}([0,1])$ and $m \geq 2$,
\begin{equation*}
\left|\frac{1}{m} \sum_{j=1}^m |f(T_j)|^2 - \int_{0}^1 |f(t)|^2 dt \right| \leq \frac{c_0}{m \lambda^{1/{2r}}} ||f||_{r, \lambda}^2
\end{equation*}
\end{lemma}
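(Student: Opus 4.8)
The plan is to set $g = |f|^2$ and read the left-hand side as the error of a quadrature rule with uniform weights $1/m$ applied to $g$. First I would note that for $r \ge 1$ every $f \in \mathcal{W}^{r,2}([0,1])$ is continuous (Proposition~\ref{Prop:ROSM1}(a)) and absolutely continuous with $f' \in \mathcal{L}^2([0,1])$, so that $g = f^2$ is absolutely continuous with integrable derivative $g' = 2ff'$. The problem thus reduces to bounding the quadrature error of $g$ by the mesh size times the total variation $\int_0^1 |g'|\,dt$, and then estimating $\int_0^1|g'|\,dt$ in terms of $||f||_{r,\lambda}$.

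For the quadrature step I would attach to each $T_j$ a cell $I_j \ni T_j$ so that $\{I_j\}_{j=1}^m$ partitions $[0,1]$, and split
\[
\frac{1}{m}\sum_{j=1}^m g(T_j) - \int_0^1 g\,dt = \sum_{j=1}^m\Bigl(g(T_j)|I_j| - \int_{I_j} g\,dt\Bigr) + \sum_{j=1}^m g(T_j)\Bigl(\tfrac1m - |I_j|\Bigr).
\]
The first sum is the matched-weight error: on each cell $|g(T_j)|I_j| - \int_{I_j} g| \le |I_j|\int_{I_j}|g'|$, so by assumption~(A6) it is at most $(\max_j |I_j|)\int_0^1|g'| \le (c/m)\int_0^1|g'|$. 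The second sum measures the gap between the uniform weights and the cell lengths; since $\sum_j(\tfrac1m - |I_j|) = 0$, a summation by parts rewrites it as $-\sum_j S_j\,(g(T_{j+1}) - g(T_j))$ with $S_j = \sum_{i\le j}(\tfrac1m - |I_i|)$, which is controlled by $(\max_j|S_j|)\int_0^1|g'|$. The regularity imposed by~(A6) guarantees $\max_j|S_j| = O(1/m)$, so this piece is again $O(m^{-1})\int_0^1|g'|$.

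It then remains to bound $\int_0^1|g'|\,dt = 2\int_0^1 |f|\,|f'|\,dt \le 2\,||f||_2\,||f'||_2$ by the Schwarz inequality. Here $||f||_2 \le ||f||_{r,\lambda}$ is immediate, while for $||f'||_2$ I would combine $||f^{(r)}||_2 \le \lambda^{-1/2}||f||_{r,\lambda}$ with an intermediate-derivative (Landau--Kolmogorov) inequality $||f'||_2 \le C(||f||_2^{1-1/r}||f^{(r)}||_2^{1/r} + ||f||_2)$; substituting the two bounds and using $\lambda \le 1$ gives $||f'||_2 \le C'\lambda^{-1/2r}||f||_{r,\lambda}$ (for $r=1$ this is the trivial identity $f' = f^{(r)}$). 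Hence $\int_0^1|g'| \le 2C'\lambda^{-1/2r}||f||_{r,\lambda}^2$, and multiplying by the $O(m^{-1})$ quadrature factor yields the claim with a constant $c_0$ depending only on $c$ and $r$.

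The main obstacle is the second, weight-mismatch sum. Unlike the matched-weight quadrature error, it is not controlled by the mesh size alone: if the design points cluster, the uniform weights $1/m$ may differ substantially from the cell lengths, and the estimate genuinely relies on the grid being regular rather than merely fine. Making precise the bound $\max_j|S_j| = O(1/m)$ from~(A6) is therefore the delicate point; the remaining ingredient, the intermediate-derivative estimate $||f'||_2 \lesssim \lambda^{-1/2r}||f||_{r,\lambda}$ that supplies the critical $\lambda^{-1/2r}$ factor, is standard.
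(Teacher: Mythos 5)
Your proof is correct and follows the same two-step skeleton as the paper's: bound the quadrature error by $O(m^{-1})\int_0^1|(f^2)'(t)|\,dt$, then bound $\int_0^1|(f^2)'(t)|\,dt$ by a constant times $\lambda^{-1/2r}||f||_{r,\lambda}^2$. The differences are in execution. For the quadrature step the paper integrates by parts against the empirical distribution function $Q_m$ of the design (jumping $1/m$ at each $T_j$), obtaining the bound $\sup_t|Q_m(t)-t|\int_0^1|g'|$, and then asserts $\sup_t|Q_m(t)-t|\le c_0m^{-1}$ in one line "by assumption (A6)"; your cell-partition-plus-Abel-summation argument is the discrete version of exactly this computation, and your partial sums $S_j$ are precisely the discrepancy $Q_m-t$ sampled along the partition. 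Consequently, the "delicate point" you flag is not a defect of your argument relative to the paper's: the paper makes the identical leap, since the mesh condition in (A6) alone does not control the discrepancy when points cluster (e.g., half the points spaced $2/(m+2)$ apart and the rest crowded near $1$ satisfies (A6) with $c=2$ yet has discrepancy of order one); both proofs implicitly read (A6) as a near-uniformity condition, which holds for the designs the paper mentions such as $T_j=j/(m+1)$, or whenever $c-1=O(m^{-1})$. For the second step, the paper simply appeals to \citep[Chapter 13, Lemma 2.27]{Eg:2009}, whereas you prove the required inequality from scratch: the Schwarz inequality gives $\int_0^1|(f^2)'|\le 2||f||_2\,||f'||_2$, and the Landau--Kolmogorov inequality combined with $||f||_2\le||f||_{r,\lambda}$, $||f^{(r)}||_2\le\lambda^{-1/2}||f||_{r,\lambda}$ and $\lambda\le 1$ yields $||f'||_2\le C\lambda^{-1/2r}||f||_{r,\lambda}$, which is exactly the content of the cited lemma. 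Your write-up thus buys self-containedness (modulo the standard interpolation inequality) at the cost of length, while the paper's version is shorter but outsources that step to a reference.
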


\begin{proof}
We first claim the existence of a universal constant $c$ such that for all $f \in \mathcal{W}^{1,1}([0,1])$ 
\begin{align*}
\left|\frac{1}{m} \sum_{j=1}^m |f(T_j)| - \int_{0}^1 |f(t)| dt \right| \leq \frac{c_0}{m} \int_{0}^1 |f^{\prime}(t)|dt.
\end{align*}
Let $Q_m$ denote the distribution function that jumps $m^{-1}$ at each design point $T_j$. Integration by parts then shows
\begin{align*}
\left|\frac{1}{m} \sum_{j=1}^m |f(T_j)| - \int_{0}^1 |f(t)| dt \right| & = \left|\int_{0}^1 (Q_m(t)-t) f^{\prime}(t) dt \right| 
\\ & \leq \sup_{t \in [0,1]} |Q_m(t)-t| \int_{0}^1  |f^{\prime}(t)| dt.
\end{align*}
But,
\begin{align*}
\sup_{t \in [0,1]} |Q_m(t)-t| = \max_{j = 1, \ldots, m-1} \sup_{t \in [T_j, T_{j+1})}  |Q_m(t)-t| \leq c_0 m^{-1},
\end{align*}
by assumption (A6). The claim thus holds. The proof is completed with an appeal to \citep[Chapter 13, Lemma 2.27]{Eg:2009}.
\end{proof}

For notational convenience, in  the proof of Theorem~\ref{Thm:ROSM1} we denote positive constants by $c_0$. Note that this constant may change value from appearance to appearance.
\begin{proof}[Proof of Theorem~\ref{Thm:ROSM1}]
Let $L_n(f)$ denote the objective function, that is,
\begin{equation*}
L_n(f) = \frac{1}{nm}  \sum_{i = 1}^n \sum_{j=1}^m \rho\left( Y_{ij} - f(T_{j}) \right) + \lambda \int_0^{1} |f^{(r)}(t)|^2 dt.
\end{equation*}
Let $\NS^{2r}(T_1, \ldots, T_m)$ denote the m-dimensional natural spline subspace with knots at $T_1, \ldots, T_m$ and let $Q: \mathcal{W}^{r,2}([0,1]) \to \NS^{2r}(T_1, \ldots, T_m)$ denote the operator associated with $r$th order spline interpolation at $T_j, \ j = 1, \ldots, m$. We use $R(\mu) := \mu - Q(\mu)$ to denote the error of approximation and we write $R_j := R(\mu)(T_j), j = 1 \ldots, m$, in what follows.

Putting $g : = Q(\mu) - f$, it is easy to see that minimizing $L_n(f)$ is equivalent to minimizing  
\begin{align*}
L_n(g) = \frac{1}{nm}  \sum_{i = 1}^n \sum_{j=1}^m \rho\left( \epsilon_{ij} + R_j  + g(T_j) \right) + \lambda ||g^{(r)}||_2^2 \\ + \lambda||Q(\mu)^{(r)}||_2^2 - 2\lambda \langle Q(\mu)^{(r)}, g^{(r)} \rangle_2.
\end{align*}
Denoting $C_{n} := n^{-1} + m^{-2r} + \lambda$, we aim to show that for every $\epsilon>0$ there exists a $D_{\epsilon} \geq 1$ (in the remainder we will drop the dependence on $\epsilon$ to avoid heavy notation, i.e., we simply write $D$) such that
\begin{align}
\label{eq:ROSM9}
\lim_{n \to \infty} \Pr\left( \inf_{||g||_{r , \lambda} = D} L_n( C_{n}^{1/2} g ) > L_n(0)  \right)  \geq 1-\epsilon.
\end{align}
By Lemma \ref{Lem:ROSM1} this entails the existence of a minimizer $\widehat{g}_n$ such that $||\widehat{g}_n||^2_{r, \lambda} = O_P(C_{n})$, provided that we can establish the convexity and lower semicontinuity of $L_n$ on $\mathcal{W}^{r,2}([0,1])$, which we now check. Convexity follows easily from assumption (A1) and the convexity of the map 
\begin{align*}
f \mapsto ||f^{(r)} ||^2_2,
\end{align*}
as the composition of a convex function $f \mapsto ||f^{(r)}||_2$ and an increasing convex function on $[0, \infty)$, namely $h(x) = x^2$. To check lower semi-continuity recall that by assumption (A1) $\rho$ is convex and hence continuous, thus also lower semicontinuous. In addition, the semi-norm $||f^{(r)}||_2^2$ is continuous with respect to convergence in $\mathcal{W}^{r,2}([0,1])$. Since the sum of two continuous functions is continuous, we have thus proved the (lower semi-)continuity of the objective function. 

Using the one-to-one relation between $g$ and $f$ if \eqref{eq:ROSM9} holds we obtain
\begin{align}
\label{eq:ROSM10}
||\widehat{\mu}_n-Q(\mu)||_{r,\lambda}^2 = O_P(C_n).
\end{align}
Now, note that
\begin{align}
\label{eq:ROSM11}
||\widehat{\mu}_n - \mu||_{r, \lambda}^2 & \leq 2|| \widehat{\mu}_n - Q(\mu)||_{r, \lambda}^2 + 2|| Q(\mu) - \mu||_{r, \lambda}^2,
\end{align}
which implies that if \eqref{eq:ROSM9} holds, then the claim follows from the properties of spline interpolation. In particular, it is well-known \citep[see, e.g.,][Theorem 7.3]{Devore:1993} that 
\begin{align}
\label{eq:ROSM12}
|| \mu - Q(\mu)||_2^2 & \leq c_0 \max_{1 \leq j \leq m} |T_{j+1} - T_{j}|^{2r}  \int_{0}^1 |\mu^{(r)}(t)|^2 dt \nonumber
\\ & = O(m^{-2r}),
\end{align}
for some positive constant $c_0>0$. Here, the last bound follows from assumption (A6) and the fact that $\mu \in \mathcal{W}^{r,2}([0,1])$. On the other hand, Theorem 3.4 of \citep{Schultz:1970} implies that the spline interpolant can be chosen such that
\begin{equation}
\label{eq:ROSM13}
||(\mu-Q(\mu))^{(r)}||_2^2  \leq c_0 ||\mu^{(r)}||_2^2.
\end{equation}
for some $c_0>0$. From \eqref{eq:ROSM12} and \eqref{eq:ROSM13} we have
\begin{align}
\label{eq:ROSM14}
|| Q(\mu) - \mu||_{r, \lambda}^2  & = || Q(\mu) - \mu||_{2}^2 + \lambda || (Q(\mu) - \mu)^{(r)}||_{2}^2 \nonumber
\\ & = O(m^{-2r}) + \lambda O(1) \nonumber
\\ & = O(C_{n}),
\end{align}
since, by definition of $C_{n}$, $m^{-2r} + \lambda < C_{n}$. Thus, if \eqref{eq:ROSM9} holds, \eqref{eq:ROSM11} and~\eqref{eq:ROSM14} imply that
\begin{equation*}
||\widehat{\mu}_n - \mu||_{r, \lambda}^2 = O_P(C_{n}) + O(C_{n}) = O_P(C_{n}),
\end{equation*}
which is the desired result. 

We thus have to establish \eqref{eq:ROSM9}. To that end, decompose $L_n(C_{n}^{1/2}g) - L_n(0)$ as follows
\begin{align*}
L_n(C_{n}^{1/2}g) - L_n(0) & =   \frac{1}{nm}  \sum_{i = 1}^n \sum_{j=1}^m \rho\left( \epsilon_{ij} + R_j  +C_{n}^{1/2} g(T_j) \right) 
\\ & \quad - \frac{1}{nm}  \sum_{j=1}^m \rho\left( \epsilon_{ij} + R_j  \right) + \lambda C_{n}||g^{(r)}||_2^2  
\\ & \quad - 2\lambda C_{n}^{1/2} \langle Q(\mu)^{(r)}, g^{(r)} \rangle_2
\\ &  = \frac{1}{nm} \sum_{i=1}^m \sum_{j=1}^m \int_{R_j}^{R_j+ C_{n}^{1/2}g(T_j)} \mathbb{E}\{\psi(\epsilon_{ij} + u) \} du 
\\ & \quad \ +  \frac{1}{nm} \sum_{i=1}^m \sum_{j=1}^m \int_{R_j}^{R_j+ C_{n}^{1/2}g(T_j)} \left[\psi(\epsilon_{ij} + u)- \mathbb{E}\{\psi(\epsilon_{ij} + u) \}  \right] du
\\ & \quad \ + \lambda C_{n} ||g^{(r)}||_2^2   - 2\lambda C_{n}^{1/2} \langle Q(\mu)^{(r)}, g^{(r)}\rangle_2.
\end{align*}
After adding and subtracting $\psi(\epsilon_{ij})$ and $\mathbb{E}\{\psi(\epsilon_{ij})\} = 0$ from the second integrand, we may equivalently write
\begin{align*}
L_n(C_{n}^{1/2}g) - L_n(0) : = I_1(g) + I_2(g) + I_3(g) + I_4(g),
\end{align*}
with 
\begin{align*}
I_1(g) & : = \frac{1}{nm} \sum_{i=1}^n \sum_{j=1}^m \int_{R_j}^{R_j+ C_{n}^{1/2}g(T_j)} \mathbb{E}\{\psi(\epsilon_{ij} + u) \} du + \lambda C_{n} ||g^{(r)}||_2^2 \\
I_2(g) & : =  \frac{1}{nm} \sum_{i=1}^m \sum_{j=1}^m \int_{R_j}^{R_j+ C_{n}^{1/2} g(T_j)} \left[ \{ \psi(\epsilon_{ij} + u)  - \psi(\epsilon_{ij}) \}  \right. \\ & \left.  \quad - \mathbb{E}\{\psi(\epsilon_{ij} + u)- \psi(\epsilon_{ij}) \}   \right] du \\ 
I_3(g) & : = - C_{n}^{1/2} \frac{1}{nm} \sum_{i=1}^n \sum_{j=1}^m \psi(\epsilon_{ij}) g(T_j) \\
I_4(g) & : = -  2 C_{n}^{1/2} \lambda \langle Q(\mu)^{(r)}, g^{(r)}\rangle_2.
\end{align*}

By the superadditivity of the infimum we have the lower bound
\begin{align*}
\inf_{||g||_{r , \lambda} = D} \{L_n( C_{n}^{1/2} g ) - L_n(0)\} & \geq \inf_{||g||_{r , \lambda} = D} I_1(g) + \inf_{||g||_{r , \lambda} = D} I_2(g) 
\\ & \phantom{{}=1}+ \inf_{||g||_{r , \lambda} = D} I_3(g) + \inf_{||g||_{r , \lambda} = D} I_4(g).
\end{align*}
We will show that for $D$ sufficiently large $\inf_{||g||_{r , \lambda} = D} I_1(g)$ is positive and dominates all other terms in the decomposition. For this, we need to determine the order of each one of the four terms. 

Starting with $ \inf_{||g||_{r , \lambda} = D} I_4(g)$, since $||Q(\mu)^{(r)}||_2 \leq c_0 ||\mu^{(r)}||_2$, the Schwarz inequality yields
\begin{align}
\label{eq:ROSM15}
|\inf_{||g||_{r , \lambda} = D} I_4(g)| & \leq \sup_{||g||_{r , \lambda} = D} |I_4(g)| \nonumber
\\ & \leq 2 C_{n}^{1/2}  \lambda^{1/2} ||Q(\mu)^{(r)}||_2 \sup_{||g||_{r , \lambda} = D}\nonumber \lambda^{1/2}||g^{(r)}||_2 \\ & \leq c_0 D \lambda^{1/2} C_{n}^{1/2}  ||\mu^{(r)}||_2 \nonumber
\\ &  =  O(1) D C_{n},
\end{align}
where we have used that $\mu \in \mathcal{W}^{r, 2}([0,1])$, $\lambda^{1/2} < C_{n}^{1/2}$ and $\lambda^{1/2} ||g^{(r)}||_2 \leq ||g||_{r, \lambda}$, by definition of $||\cdot||_{r, \lambda}$.

Turning to $I_3(g)$, observe that by assumption (A2) the errors $\epsilon_{ij} = \epsilon_i(T_j)$ are independent in $i$ and identically distributed for a common $j$. With the Schwarz inequality we obtain
\begin{align*}
\left|I_3(g) \right| \leq \frac{C_n^{1/2}}{n} \left\{ \frac{1}{m} \sum_{j=1}^m \left|  \sum_{i=1}^n \psi(\epsilon_{ij}) \right|^2 \right\}^{1/2} \left\{  \frac{1}{m} \sum_{j=1}^m |g(T_j)|^2 \right\}^{1/2}.
\end{align*}
For the first factor we clearly have
\begin{align*}
\mathbb{E}\left\{\frac{1}{m} \sum_{j=1}^m \left|  \sum_{i=1}^n \psi(\epsilon_{ij}) \right|^2 \right\} & = \frac{1}{m} \sum_{j=1}^m \sum_{i=1}^n \sum_{k=1}^n \mathbb{E} \{ \psi(\epsilon_{ij}) \psi(\epsilon_{kj}) \}
\\ & = \frac{n}{m} \sum_{j=1}^m \mathbb{E} \{|\psi(\epsilon_{1j})|^2 \}
\\ & = O(n),
\end{align*}
by assumption (A5). Using Markov's inequality we thus find
\begin{align*}
\left\{ \frac{1}{m} \sum_{j=1}^m \left|  \sum_{i=1}^n \psi(\epsilon_{ij}) \right|^2 \right\}^{1/2} = O_P(n^{1/2}). 
\end{align*}
To bound the second factor, we note that as a consequence of Lemma~\ref{Lem:ROSM2}
\begin{align*}
\sup_{||g||_{r, \lambda} \leq D}  \frac{1}{m} \sum_{j=1}^m |g(T_j)|^2  & \leq \left(1 + \frac{c_r}{m \lambda^{1/2r}} \right) D^2,
\\ & \leq c_0 D^2,
\end{align*}
since, by assumption, $\liminf_{n \to \infty} m \lambda^{1/2r} >0$. Combining these two bounds we obtain
\begin{align}
\label{eq:ROSM16}
\sup_{||g||_{r, \lambda} \leq D} \left|I_3(g) \right| = C_n^{1/2} O_P(n^{-1/2}) D = O_P(1) D C_n,
\end{align}
as $n^{1/2} < C_n^{1/2}$.

Before examining the terms $I_1(g)$ and $I_2(g)$ note that, by the reproducing property and inequality \eqref{eq:ROSM8},
\begin{align*}
C_{n}^{1/2} \max_{j \leq m} |g(T_j)| = C_{n}^{1/2} \max_{j \leq m} | \langle g, \mathcal{R}_{r, \lambda}( \cdot, T_j)  \rangle_{r , \lambda}|  &\leq C_{n}^{1/2}  ||g||_{r, \lambda} \max_{j \leq m} || \mathcal{R}_{r, \lambda}( \cdot, T_j) ||_{r, \lambda} 
\\ & \leq c_0 D C_{n}^{1/2}  \lambda^{-1/4r}.
\end{align*}
Similarly, by \eqref{eq:ROSM8},
\begin{equation*}
\max_{j\leq m} |R_j| \leq c_4 \lambda^{-1/{4r}} ||R(\mu)||_{r, \lambda} \leq c_0 C_{n}^{1/2} \lambda^{-1/{4r}},
\end{equation*}
where the second  inequality follows from \eqref{eq:ROSM14}. Our limit conditions now ensure that 
\begin{equation*}
\lim_{n \to \infty} C_{n}^{1/2} \max_{j \leq m} |g(T_j)| = \lim_{n \to \infty} \max_{j\leq m} |R_j| =  0.
\end{equation*} 
We now consider $I_1(g)$. Applying Fubini's theorem and assumption (A5) yields
\begin{align*}
\int_{R_j}^{R_j+ C_{n}^{1/2}g(T_j)} \mathbb{E}\{\psi(\epsilon_{1j} + u) \} du  & =  \int_{R_j}^{R_j+ C_{n}^{1/2}g(T_j)} \{\delta_ju + o(u)\} du
\\ & =  \frac{C_{n}}{2} \delta_j |g(T_j)|^2 \{1+o(1)\}   +   \delta_jC_{n}^{1/2} R_j g(T_j)\{1+o(1) \},
\end{align*}
for all large $n$. Consequently,
\begin{align*}
\frac{1}{m}  \sum_{j=1}^m \int_{R_j}^{R_j+ C_{n}^{1/2}g(T_j)} \mathbb{E}\{\psi(\epsilon_{ij} + u) \} du & =  \frac{C_{n}}{2m} \sum_{j=1}^m \delta_j |g(T_j)|^2\{1+o(1)\}   \\ & \quad + \frac{C_{n}^{1/2}}{m} \sum_{j=1}^m \delta_j R_j g(T_j)\}\{1+o(1)\}
\\ & := I_{11}(g) + I_{12}(g),
\end{align*}
say. 

We establish a lower bound for $I_{11}(g)$ and an upper bound for $I_{12}(g)$. From assumption (A5) and Lemma \ref{Lem:ROSM2} we obtain
\begin{align}
\label{eq:ROSM17}
I_{11}(g) & \geq \inf_{j} \delta_j \frac{C_{n}}{2m} \sum_{j=1}^m |g(T_j)|^2 \{1+o(1)\} \nonumber
\\ & \geq c_0 C_{n} \left( ||g||_{2}^2 - \frac{c_r}{m \lambda^{1/2r}}||g||_{r, \lambda}^2 \right)\{1+o(1) \},
\end{align}
and, by the Schwarz inequality,
\begin{align*}
|I_{12}(g)| & \leq  C_{n}^{1/2}  \sup_{j} \delta_j \frac{1}{m} \sum_{j=1}^m |g(T_j)| |R_j| \{ 1+ o(1)\}
\\ & \leq c_0 C_{n}^{1/2} \left\{ \frac{1}{m} \sum_{j=1}^m |g(T_j)|^2 \right\}^{1/2} \left\{ \frac{1}{m} \sum_{j=1}^m |R_j|^2 \right\}^{1/2} \{1+o(1) \}.
\end{align*}
Since $||g||_{r, \lambda} \leq D$, the first factor in curly braces may be bounded by $c_0 D$ using a previous argument. For the second factor, by Lemma~\ref{Lem:ROSM2}, \eqref{eq:ROSM14} and our assumption $\liminf m \lambda^{1/2r} >0$,
\begin{align*}
\frac{1}{m} \sum_{j=1}^m |R_j|^2 & \leq ||R(\mu)||_2^2 + \frac{c_0}{m \lambda^{1/{2r}}} ||R(\mu)||_{m,\lambda}^2
\\ & = O(m^{-2r}) + O(1) O(C_n)
\\ & = O(C_n).
\end{align*}
We may thus conclude that
\begin{align}
\label{eq:ROSM18}
\sup_{||g||_{r, \lambda} \leq D}|I_{12}(g)| \leq c_0 D C_{n}\{1+o(1)\},
\end{align}
for all large $n$.

Combining \eqref{eq:ROSM17} and \eqref{eq:ROSM18} we find that, for all $g \in \{f \in \mathcal{W}^{r,2}([0,1]): ||f||_{r, \lambda} \leq D \}$,
\begin{align*}
I_1(g) &\geq  c_0 C_{n} \left( ||g||_{2}^2 - \frac{c_r}{m \lambda^{1/2r}}||g||_{r, \lambda}^2 \right)\{1+o(1) \} + C_{n} \lambda ||g^{(r)}||_2^2 + O(1) D C_{n}
\\ & = c_0 C_{n} ||g||_{r, \lambda}^2  \{1+o(1)\} + O(1) D C_{n},
\end{align*}
since, by assumption, $\liminf_{n\to \infty} m \lambda^{1/{2r}} \geq 2c_{r}$. Consequently,
\begin{align}
\label{eq:ROSM19}
\inf_{||g||_{r, \lambda} = D} I_1(g) \geq c_0 D^2 C_n \{1+O(D^{-1})+ o(1) \}.
\end{align}

To conclude the proof we now show that for every $\epsilon>0$,
\begin{align}
\label{eq:ROSM20}
\lim_{n \to \infty} \Pr \left( \sup_{||g||_{r, \lambda} \leq D} |I_2(g)| \geq \epsilon C_n \right) = 0,
\end{align}
which in combination with \eqref{eq:ROSM15}, \eqref{eq:ROSM16} and  \eqref{eq:ROSM19} establishes~\eqref{eq:ROSM9} and thus yields the result stated in the theorem. To accomplish this we introduce some notation. By assumption, the tuples $(\epsilon_{i1}, \ldots, \epsilon_{im}), i = 1, \ldots, n$, are independent and identically distributed according to $\mathbb{P}$. Moreover, the design points are common across $i$. For $g \in \{f \in \mathcal{W}^{r,2}([0,1]): ||f||_{r, \lambda} \leq D \}: = \mathcal{B}_D$ consider the class of real-valued functions $f_g: \mathbb{R}^{m} \to \mathbb{R}$ given by
\begin{align*}
f_g(\mathbf{x}) :=  \frac{1}{m} \sum_{j=1}^m \int_{R_j}^{R_j+ C_n^{1/2}g(T_j)} \{\psi(x_j + u) - \psi(x_j)\}  du.
\end{align*}
This class of functions depends on $n$ via $m$ and $C_n$ but we suppress this dependence for notational convenience. Letting $\mathbb{P}_n$ denote the empirical measure associated with $(\epsilon_{i1}, \ldots, \epsilon_{im}), i = 1, \ldots, n$, we may rewrite $I_2(g)$ as 
\begin{align*}
I_2(g) = (\mathbb{P}_n-\mathbb{P})f_g,
\end{align*}
where, as in \citep{VDV:1998}, $\mathbb{P}_n f_g$ and $\mathbb{P}f_g$ stand for expectations of $f_g$ with respect to $\mathbb{P}_n$ and $\mathbb{P}$ respectively.  By Markov's inequality we have
\begin{align}
\label{eq:ROSM21}
\Pr\left( \sup_{g \in \mathcal{B}_D} |(\mathbb{P}_n-\mathbb{P})f_g| \geq \epsilon C_n \right) \leq \frac{\mathbb{E}\{|\sup_{g \in \mathcal{B}_D}n^{1/2} |(\mathbb{P}_n-\mathbb{P})f_g|\}}{ n^{1/2} C_n \epsilon},
\end{align}
and thus it suffices to show that the right-hand side of \eqref{eq:ROSM20} tends to zero as $n \to \infty$. 

Let $N_{[ \ ]}(\epsilon, \mathcal{F}, \mathcal{L}_2(\mathbb{P}))$ denote the $\epsilon$-bracketing number for a set of functions $\mathcal{F}$ in the $\mathcal{L}^2(\mathbb{P})$-norm and let $N(\epsilon, \mathcal{F}, ||\cdot||_{\infty})$ denote the $\epsilon$-covering number in the sup-norm. Furthermore, let $J_{[ \ ]}(\delta, \mathcal{F},  \mathcal{L}_2(\mathbb{P})$ denote the $\epsilon$-bracketing integral, that is,
\begin{align*}
J_{[ \ ]}(\epsilon, \mathcal{F},  \mathcal{L}_2(\mathbb{P})) = \int_{0}^\epsilon (\log N_{[ \ ]}(u, \mathcal{F}, \mathcal{L}_2(\mathbb{P})) )^{1/2} d u.
\end{align*}
By Lemma 19.36 of \citep{VDV:1998}, for any class of real-valued functions $\mathcal{F}$ such that $\mathbb{E}\{|f|^2\} < \delta^2$ and $||f||_{\infty} \leq M$ for all $f \in 
\mathcal{F}$ we have
\begin{align}
\label{eq:ROSM22}
\mathbb{E}\{|\sup_{f \in \mathcal{F}}n^{1/2} |(\mathbb{P}_n-\mathbb{P})f| \} \leq c_0 J_{[ \ ]}(\delta, \mathcal{F},  \mathcal{L}_2(\mathbb{P})) \left(1 + \frac{J_{[ \ ]}(\delta, \mathcal{F},  \mathcal{L}_2(\mathbb{P}))}{\delta^2 n^{1/2}} M \right).
\end{align}
First we determine a bound on the bracketing number and then we estimate $\delta$ and $M$. By our previous discussion, $\max_{j \leq m} |R_j| \to 0$ and $C_n^{1/2}\sup_{g \in \mathcal{B}_D} ||g||_{\infty} \to 0$ as $n \to \infty$. Hence, for any $(g_1, g_2) \in \mathcal{B}_D \times \mathcal{B}_D$ we obtain, by assumption (A3),
\begin{align*}
\sup_{\mathbf{x} \in \mathbb{R}^m}|f_{g_1}(\mathbf{x}) - f_{g_2}(\mathbf{x})| & = \sup_{\mathbf{x} \in \mathbb{R}^m} \left|  m^{-1} \sum_{j=1}^m \int_{R_j + C_n^{1/2} g_2(T_j)}^{R_j +C_n^{1/2}g_1(T_j)}\{\psi(x_j + u) - \psi(x_j) \} du  \right| \nonumber
\\ & \leq  2 M_1 C_n^{1/2}||g_1 - g_2||_{\infty}.
\end{align*}
By Theorem 2.7.11 of \citep{VDV:1996}, we now have
\begin{align}
\label{eq:ROSM23}
N_{[ \ ]}(\epsilon, \{ f_g, g \in \mathcal{B}_D \}, \mathcal{L}_2(\mathbb{P})) \leq N(\epsilon/(4M_1 C_n^{1/2}), \mathcal{B}_D, ||\cdot||_{\infty}).
\end{align}
We develop a bound for this last covering number. By our assumptions $\lambda \in (0,1]$ for all large $n$ and therefore for all $f \in \mathcal{W}^{r,2}([0,1])$,
\begin{align*}
||f||_{r, \lambda} \geq \lambda^{1/2} ||f||_{r,1},
\end{align*}
with $||\cdot||_{r,1}$ the standard Sobolev norm. Hence,
\begin{align*}
\mathcal{B}_{D} \subset \{ f \in \mathcal{W}^{r,2}([0,1]) : ||f||_{r,1} \leq \lambda^{-1/2} D  \} 
\end{align*}
By Proposition 6 of \citep{Cucker:2001}, we deduce that for all $\epsilon>0$,
\begin{align}
\label{eq:ROSM24}
\log N(\epsilon, \mathcal{B}_{D}, ||\cdot||_{\infty}) \leq c_0 (\lambda^{1/2} \epsilon)^{-1/r}.
\end{align}
We proceed to estimate $\delta$ and $M$, as required by the Lemma. For the former observe that
by the Schwarz inequality and assumption (A4) we have
\begin{align*}
\mathbb{E}\{|f_g|^2 \} &  \leq m^{-2} \mathbb{E}\left\{ \left|  \sum_{j=1}^m \int_{R_j}^{R_j+ C_{n}^{1/2} g(T_j)} \{ \psi(\epsilon_{ij} + u)- \psi(\epsilon_{ij}) \}    du \right|^2 \right\} \nonumber
\\ &\leq  m^{-1} \sum_{j=1}^m \mathbb{E}\left\{ \left| \int_{R_j}^{R_j+ C_{n}^{1/2} g(T_j)} \{ \psi(\epsilon_{ij} + u)- \psi(\epsilon_{ij}) \}    du \right|^2 \right\} \nonumber
\\ & \leq  m^{-1} M_2 \sum_{j=1}^m C_n^{1/2}  |g(T_j)| \left|\int_{R_j}^{R_j+C_n^{1/2} g(T_j)} |u| du \right|\nonumber
\\ & \leq c_0 C_n^{1/2} \max_{j \leq m} |g(T_j)| m^{-1}  \sum_{j=1}^m \left\{ |R_j|^2 + C_n|g(T_j)|^2 \right\} \nonumber
\\ & \leq c_0  C_n^{3/2} \lambda^{-1/4r},
\end{align*}
where the last inequality follows by \eqref{eq:ROSM8}. Since this estimate is uniform on $\mathcal{B}_D$, we may take $\delta = c_0 C_n^{3/4} \lambda^{-1/8r}$. In addition, using assumption (A3) and Lemma~\ref{Lem:ROSM2} we obtain
\begin{align*}
\sup_{g \in \mathcal{B}_D} \sup_{\mathbf{x} \in \mathbb{R}^m} |f_g( \mathbf{x} )| \leq M_1 C_n^{1/2} \sup_{g \in \mathcal{B}_D} \left\{ \frac{1}{m} \sum_{j=1}^m |g(T_j)|^2 \right\}^{1/2} \leq c_0 C_n^{1/2}
\end{align*}
so that $M = c_0 C_n^{1/2}$ and consequently $M/\delta^2 \leq c_0 \lambda^{1/4r}/ C_n$. With our estimate of $\delta$ and the bound on the bracketing number implied by \eqref{eq:ROSM23}--\eqref{eq:ROSM24} the bracketing integral in \eqref{eq:ROSM22} over $n^{1/2} C_n$ may be bounded as follows
\begin{align*}
J_{[ \ ]}(\delta, \{f_g, g \in \mathcal{B}_D \},  \mathcal{L}_2(\mathbb{P}))/(n^{1/2} C_n) \leq c_0 (n^{1/2} C_n^{1/8r + 1/4} \lambda^{3/8r - 1/16r^2})^{-1}.
\end{align*}
Now,
\begin{align*}
(n^{1/2} C_n^{1/8r + 1/4} \lambda^{3/8r - 1/16r^2} )^2 \geq n \lambda^{1/2 + 1/r - 1/8r^2} \to \infty,
\end{align*}
by our limit assumptions. We have thus shown that $J_{[ \ ]}(\delta, \{f_g, g \in \mathcal{B}_D \},  \mathcal{L}_2(\mathbb{P}))/(n^{1/2} C_n) \to 0$ as $n \to \infty$. Furthermore, $\lambda \to 0$ as $n \to \infty$, which implies
\begin{align*}
\frac{M}{\delta^2 n^{1/2}} J_{[ \ ]}(\delta, \{f_g, g \in \mathcal{B}_D \},  \mathcal{L}_2(\mathbb{P})) \leq & c_0 \frac{\lambda^{1/4r}}{n^{1/2} C_n} J_{[ \ ]}(\delta, \{f_g, g \in \mathcal{B}_D \},  \mathcal{L}_2(\mathbb{P})) \\ & = o(1),
\end{align*}
as $n \to \infty$, whence we may deduce that the right-hand side of \eqref{eq:ROSM21} tends to zero. Consequently, \eqref{eq:ROSM20} holds and the proof is complete. 

\end{proof}

\begin{proof}[Proof of Corollary~\ref{Cor:ROSM1}]

The proof can be deduced from \citet[Chapter 13, Lemma 2.17]{Eg:2009} which establishes the inequality
\begin{equation*}
\{||f||^2_2 + \lambda^{s/r} ||f^{(s)}||^2_2  \}^{1/2} \leq c_{s, r} ||f||_{r , \lambda}.
\end{equation*}
for all $s \leq r$ and  $f \in \mathcal{W}^{r, 2}([0,1])$ with the constant $c_{s, r}$ depending only on $s$ and $r$. Since $\mathcal{W}^{r, 2}([0,1])$ is a vector space, Theorem \ref{Thm:ROSM1} now implies that for any $1 \leq s \leq r$
\begin{equation*}
\lambda^{s/r} ||\widehat{\mu}_n^{(s)} - \mu^{(s)}||_2^2 \leq c_{s, r} ||\widehat{\mu}_n-\mu||_{r, \lambda}^2 = O_P\left( n^{-1} + m^{-2r} + \lambda \right).
\end{equation*}
The result follows by the positivity of $\lambda$.	

\end{proof}

To prove Theorem~\ref{Thm:ROSM2} we use the following uniform law of large numbers on Sobolev spaces.

\begin{lemma}
\label{Lem:ROSM3}
If assumption (B6) holds, $\lambda \to 0$ and $n \lambda^{(3-1/2r)/(2r)} \to \infty$ as $n \to \infty$, then for every fixed $D>0$ and $\delta>0$ the following uniform law holds
\begin{align*}
\lim_{n \to \infty} \Pr \left( \sup_{||g||_{r, \lambda} \leq D} \left| \frac{1}{n} \sum_{i=1}^n \frac{1}{m_i} \sum_{j=1}^{m_i} |g(T_{ij})|^2 - \int_{0}^1 |g(t)|^2 dt \right| \geq \delta \right) = 0
\end{align*}

\begin{proof}

First observe that the set $\mathcal{B}_D :=\{g \in \mathcal{W}^{r,2}([0,1]): ||g||_{r, \lambda} \leq D \}$ is compact in the $||\cdot||_{\infty}$-topology and from the proof of Theorem~\ref{Thm:ROSM1} its entropy satisfies
\begin{align*}
\log (N(\delta, \mathcal{B}_{D}, ||\cdot||_{\infty})) \leq c_0 \lambda^{-1/2r} \delta^{-1/r}.
\end{align*} 
Furthermore, by assumption (B6) $\mathbb{E}\{|g(T_{ij})|^2\} = ||g||_2^2$ and by \eqref{eq:ROSM8}, $\sup_{g \in \mathcal{B}_D}||g||_{\infty}^2 \leq  c_0 \lambda^{-1/2r}$, where $c_0$ depends on $D$. For $i=1, \ldots, n$, define independent stochastic processes
\begin{align*}
U_{i,g^2} = m_i^{-1} \sum_{j=1}^{m_i} |g(T_{ij})|^2 - \int_0^1 |g(t)|^2 dt,\quad g \in \mathcal{B}_{D},
\end{align*}
so that
\begin{align*}
\frac{1}{n} \sum_{i=1}^n \frac{1}{m_i} \sum_{j=1}^{m_i} |g(T_{ij})|^2 - \int_{0}^1 |g(t)|^2 dt = \frac{1}{n}\sum_{i=1}^n U_{i,g^2}.
\end{align*}
Fix $g_0 \in \mathcal{B}_D$ and observe that
\begin{align}
\label{eq:ROSM25}
\sup_{g \in \mathcal{B}_D}  \left| \frac{1}{n}\sum_{i=1}^n U_{i,g^2} \right|  \leq  \left|\frac{1}{n}\sum_{i=1}^n U_{i,g_0^2} \right| + \sup_{g \in \mathcal{B}_{D}} \left| \frac{1}{n}\sum_{i=1}^n U_{i,g^2} - \frac{1}{n} \sum_{i=1}^n U_{i,g_0^2} \right|.
\end{align}
We first deal with the first term on the right hand side of \eqref{eq:ROSM25} and show that it converges to zero in probability. Note that each $U_{i,g_0^2}$ is uniformly bounded for $g_0 \in \mathcal{B}_{D}$ as
\begin{align*}
\max_{1 \leq i \leq n} |U_{i,g_0^2}| \leq c_0 \lambda^{-1/2r},
\end{align*}
for some $c_0$ not depending on $n$. For every $\delta>0$, Hoeffding's inequality \citep[see, e.g.,][]{van de Geer:2000} now yields
\begin{align*}
\Pr\left(\left|\frac{1}{n} \sum_{i=1}^n U_{i,g_0^2} \right| \geq \delta \right) \leq 2 \exp\left[ - c  \delta^2 n \lambda^{1/r} \right].
\end{align*}
Our limit assumptions imply $n \lambda^{1/r} \to \infty$ as $n \to \infty$ and therefore the exponential tends to zero. To treat the second term in \eqref{eq:ROSM24} we first note that
\begin{align*}
\max_{1 \leq i \leq n}|U_{i,g^2}-U_{i, \tilde{g}^2}| \leq 2 ||g^2 - \tilde{g}^2||_{\infty}, \quad g, \tilde{g} \in \mathcal{B}_{D},
\end{align*}
and there exists a $c_0>0$ such that
\begin{align*}
||g^2 - \tilde{g}^2||_{\infty} \leq c_0 \lambda^{-1/4r}||g-\tilde{g}||_{\infty}, \quad g, \tilde{g} \in \mathcal{B}_{D}.
\end{align*}
This shows that
\begin{align*}
N(\delta, \{g^2, g\in  \mathcal{B}_{D}\}, ||\cdot||_{\infty})) \leq N(  c_0 \delta \lambda^{1/4r} , g\in  \mathcal{B}_{D}\}, ||\cdot||_{\infty})).
\end{align*}
From our bound on this last covering number it now follows that 
\begin{align*}
\int_{0}^{c_0 \lambda^{-1/2r}} ( \log (N(u, \{g^2, g \in \mathcal{B}_{D}\}, ||\cdot||_{\infty}) )^{1/2} du & \leq c_0 \lambda^{-1/8r^2} \lambda^{-1/4r} \int_{0}^{c_0 \lambda^{-1/2r}} u^{-1/2r} du
\\ & \leq c_0 \lambda^{-1/8r^2} \lambda^{-1/4r} \lambda^{-1/2r} \lambda^{1/4r^2}
\\ & = c_0 \lambda^{-3/4r} \lambda^{1/8r^2}.
\end{align*}
By our limit assumptions, $n \lambda^{3/2r - 1/4r^2} \to \infty$. Therefore, applying Lemma 8.5 of \citet{van de Geer:2000} (with $d_i(g^2, g_0^2) = ||g^2 -g_0^2||_{\infty}$) yields
\begin{align*}
\Pr\left( \sup_{g \in \mathcal{B}_{D}} \left| \frac{1}{n}\sum_{i=1}^n U_{i,g^2} - \frac{1}{n} \sum_{i=1}^n U_{i,g_0^2} \right| \geq \delta \right) \leq c_0 \exp[- c_0n \delta^2 \lambda^{1/r}],
\end{align*}
for some $c_0>0$ which completes the proof.

\end{proof}

\end{lemma}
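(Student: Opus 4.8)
The plan is to view the quantity inside the probability as the supremum of a centered empirical process indexed by the ball $\mathcal{B}_D := \{g \in \mathcal{W}^{r,2}([0,1]): ||g||_{r, \lambda} \leq D\}$, and to control it by isolating a single anchor point and then running a chaining argument over the squared class. First I would record the two structural facts that drive the whole argument. By the Sobolev embedding of Proposition~\ref{Prop:ROSM1}(a) together with \eqref{eq:ROSM8}, every $g \in \mathcal{B}_D$ obeys $||g||_\infty^2 \leq c_0 \lambda^{-1/2r}$, and the covering bound \eqref{eq:ROSM24} from the proof of Theorem~\ref{Thm:ROSM1} gives $\log N(\epsilon, \mathcal{B}_D, ||\cdot||_\infty) \leq c_0 \lambda^{-1/2r}\epsilon^{-1/r}$. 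Writing $U_{i,g^2} = m_i^{-1}\sum_{j=1}^{m_i} |g(T_{ij})|^2 - \int_0^1 |g(t)|^2\,dt$, assumption (B6) makes the $U_{i,g^2}$ independent across $i$ and centered, since $\mathbb{E}\{|g(T_{ij})|^2\} = ||g||_2^2$, and the embedding bound makes $|U_{i,g^2}| \leq c_0\lambda^{-1/2r}$ uniformly over $\mathcal{B}_D$.

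Next I would fix an anchor $g_0 \in \mathcal{B}_D$ and split the supremum by the triangle inequality into the anchor term $|n^{-1}\sum_i U_{i,g_0^2}|$ plus the oscillation term $\sup_{g \in \mathcal{B}_D} |n^{-1}\sum_i (U_{i,g^2} - U_{i,g_0^2})|$. The anchor term is a normalized sum of bounded independent centered variables of range $c_0\lambda^{-1/2r}$, so Hoeffding's inequality yields a tail $2\exp[-c\delta^2 n\lambda^{1/r}]$; this vanishes because the hypothesis $n\lambda^{(3-1/2r)/(2r)}\to\infty$ forces the weaker $n\lambda^{1/r}\to\infty$.

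For the oscillation term the idea is to transfer everything to the squared class $\{g^2: g \in \mathcal{B}_D\}$ and apply a uniform chaining inequality. On $\mathcal{B}_D$ the map $g \mapsto g^2$ is Lipschitz in sup-norm, $||g^2 - \tilde g^2||_\infty \leq ||g+\tilde g||_\infty\,||g-\tilde g||_\infty \leq c_0\lambda^{-1/4r}||g-\tilde g||_\infty$, which transfers covers as $N(u, \{g^2\}, ||\cdot||_\infty) \leq N(c_0 u\lambda^{1/4r}, \mathcal{B}_D, ||\cdot||_\infty)$. Plugging in the entropy bound and integrating up to the sup-norm diameter $R = c_0\lambda^{-1/2r}$ of the squared class produces the entropy integral $J := \int_0^R (\log N(u, \{g^2\}, ||\cdot||_\infty))^{1/2}\,du \leq c_0\lambda^{-3/4r + 1/8r^2}$. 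With this integral, the uniform boundedness of the increments, and the metric $d_i(g^2,g_0^2) = ||g^2 - g_0^2||_\infty$, I would invoke the chaining deviation inequality (Lemma~8.5 of \citet{van de Geer:2000}) to obtain $\Pr(\sup_{g}|n^{-1}\sum_i(U_{i,g^2}-U_{i,g_0^2})| \geq \delta) \leq c_0\exp[-c_0 n\delta^2\lambda^{1/r}]$; the exponent reflects the envelope $R^2 = c_0\lambda^{-1/r}$, and the lemma's admissibility requirement, namely $J/\sqrt n \to 0$, i.e. $nJ^{-2}\to\infty$, is exactly $n\lambda^{3/2r - 1/4r^2} = n\lambda^{(3-1/2r)/(2r)}\to\infty$.

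The main obstacle I anticipate lies in this last step: tracking the exponents of $\lambda$ through the transferred entropy and the integration so that $J \leq c_0\lambda^{-3/4r+1/8r^2}$ comes out cleanly, and then checking that the admissibility condition of the chaining lemma collapses to precisely the stated limit assumption rather than to a strictly stronger one. Once $J$ and the envelope $\lambda^{-1/2r}$ are pinned down, combining the anchor and oscillation bounds and letting $n\to\infty$ sends both probabilities to zero, which is the claim.
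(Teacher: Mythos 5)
Your proposal is correct and follows the paper's own proof essentially step for step: the same anchor-plus-oscillation decomposition of the centered variables $U_{i,g^2}$, Hoeffding's inequality for the anchor term, the sup-norm Lipschitz transfer $\|g^2-\tilde g^2\|_\infty \leq c_0\lambda^{-1/4r}\|g-\tilde g\|_\infty$ to bound the entropy of the squared class, the identical entropy integral estimate $c_0\lambda^{-3/4r+1/8r^2}$, and the same invocation of Lemma~8.5 of van de Geer (2000) with $d_i(g^2,g_0^2)=\|g^2-g_0^2\|_\infty$, with your verification that the admissibility condition collapses to exactly $n\lambda^{(3-1/2r)/(2r)}\to\infty$ matching the paper's use of $n\lambda^{3/2r-1/4r^2}\to\infty$. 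There is nothing to add; this is the paper's argument.
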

\begin{proof}[Proof of Theorem~\ref{Thm:ROSM2}]

Let $L_{n}(f)$ denote the objective function, that is,
\begin{equation*}
L_{n}(f) := \frac{1}{n} \sum_{i = 1}^n \frac{1}{m_i} \sum_{j=1}^{m_i} \rho\left( Y_{ij} - f(T_{ij}) \right) + \lambda \int_0^{1} |f^{(r)}(t)|^2 dt,
\end{equation*}
and write $C_{n} = (n m \lambda^{1/2r})^{-1} + 1/n +  \lambda$. It will be shown that for every $\epsilon>0$ there exists a $D_{\epsilon} \geq 1$ such that
\begin{equation}
\label{eq:ROSM26}
\lim_{n\to \infty} \Pr\left(\inf_{||g||_{r,\lambda} = D} L_n(\mu + C_{n}^{1/2}g) > L_n(\mu) \right) \geq 1-\epsilon.
\end{equation}
As before, this then implies the existence of a minimizer in the ball $\{f \in \mathcal{W}^{r,2}([0,1]): ||f-\mu||_{r,\lambda}^2 \leq D C_{n} \}$ with high probability. 

Decomposing $L_n(\mu + C_{n}^{1/2}g) - L_n(\mu)$ yields
\begin{align*}
L_n(\mu + C_{n}^{1/2}g) - L_n(\mu) & = \frac{1}{n} \sum_{i=1}^n \frac{1}{m_i} \sum_{j=1}^{m_i} \int_{0}^{-C_{n}^{1/2} g(T_{ij})} \{\psi(\epsilon_{ij} + u) - \psi(\epsilon_{ij}) \} du
\\ & \quad  - \frac{C_{n}^{1/2}}{n} \sum_{i=1}^{n} \frac{1}{m_i} \sum_{j=1}^{m_i}  g(T_{ij}) \psi(\epsilon_{ij})  + \lambda C_{n}   ||g^{(r)}||_2^2
\\ & \quad + 2 \lambda  C_{n}^{1/2}  \langle \mu^{(r)}, g^{(r)} \rangle_{2}
\\ & := I_1(g) + I_2(g) + I_3(g),
\end{align*}
with
\begin{align*}
I_1(g) & : = \frac{1}{n} \sum_{i=1}^n \frac{1}{m_i} \sum_{j=1}^{m_i} \int_{0}^{-C_{n}^{1/2} g(T_{ij})} \{\psi(\epsilon_{ij} + u) - \psi(\epsilon_{ij}) \} du + C_{n} \lambda ||g^{(r)}||_2^2
\\ I_2(g) & : =  \frac{C_{n}^{1/2}}{n} \sum_{i=1}^{n} \frac{1}{m_i} \sum_{j=1}^{m_i}  g(T_{ij}) \psi(\epsilon_{ij})
\\ I_3(g) & := 2 \lambda  C_{n}^{1/2} \langle \mu^{(r)}, g^{(r)} \rangle_{2}.
\end{align*}
Letting $\mathcal{T} = \{T_{ij} \}_{i,j}$ we proceed to show the following:
\begin{align}
\inf_{||g||_{r, \lambda} = D}\mathbb{E}\{ I_1(g) |  \mathcal{T} \} & \geq c_0 D^2 C_n \{1+o_P(1)\} \label{eq:ROSM27}
\\  \sup_{||g||_{r, \lambda} \leq D} |I_1(g)-\mathbb{E}\{ I_1(g) |  \mathcal{T} \}| &= o_P(1) C_n \label{eq:ROSM28}
\\ \sup_{||g||_{r, \lambda} \leq D} |I_2(g)| & = O_P(1) D C_n \label{eq:ROSM29}
\\ \sup_{||g||_{r, \lambda} \leq D} |I_3(g)| & = O(1) D C_n. \label{eq:ROSM30}
\end{align}
These bounds are sufficient for \eqref{eq:ROSM26} to hold, as by virtue of \eqref{eq:ROSM27} $\inf_{||g||_{r, \lambda} = D}\mathbb{E}\{ I_1(g) |  \mathcal{T} \}$ will be positive and dominate all other terms for sufficiently large $D$. The bound on \eqref{eq:ROSM30} can be readily verified with the argument employed in the proof of Theorem~\ref{Thm:ROSM1}, so it remains to establish \eqref{eq:ROSM27}-- \eqref{eq:ROSM29}.

Starting with \eqref{eq:ROSM29}, note that with the reproducing property we may write
\begin{align*}
|I_2(g)| & = C_n^{1/2} \left| \frac{1}{n} \sum_{i=1}^n \frac{1}{m_i} \sum_{j=1}^{m_i} \psi(\epsilon_{ij}) \langle g, \mathcal{R}_{r, \lambda}(T_{ij}, \cdot) \rangle_{r, \lambda}  \right| \\ &=  C_n^{1/2}\left|  \Big \langle  g, \frac{1}{n} \sum_{i=1}^n \frac{1}{m_i} \sum_{j=1}^{m_i} \psi(\epsilon_{ij}) \mathcal{R}_{r, \lambda}(T_{ij}, \cdot) \Big \rangle_{r, \lambda}  \right|.
\end{align*}
Consequently, by the Schwarz inequality,
\begin{align*}
|I_2(g)| \leq C_n^{1/2} \left\| \frac{1}{n} \sum_{i=1}^n \frac{1}{m_i} \sum_{j=1}^{m_i} \psi(\epsilon_{ij}) \mathcal{R}_{r, \lambda}(T_{ij}, \cdot)  \right\|_{r, \lambda} ||g||_{r, \lambda}.
\end{align*}
Now, by assumptions (B3), (B5) and (B6),
\begin{align*}
\mathbb{E}\left\{  \left\| \frac{1}{n} \sum_{i=1}^n \frac{1}{m_i} \sum_{j=1}^{m_i} \psi(\epsilon_{ij}) \mathcal{R}_{r, \lambda}(T_{ij}, \cdot)  \right\|_{r, \lambda}^2 \bigg| \mathcal{T} \right\}  \nonumber
& =  \frac{1}{n^2}\sum_{i=1}^n \sum_{j=1}^{m_i} \sum_{l=1}^{m_i} \frac{1}{m_i^2} \mathcal{R}_{r, \lambda}(T_{ij}, T_{il}) \mathbb{E}\{\psi(\epsilon_{ij}) \psi(\epsilon_{il}) | \mathcal{T} \}
\\ & = \frac{1}{n^2} \sum_{i=1}^n \sum_{j=1}^{m_i} \frac{1}{m_i^2 } \mathcal{R}_{r, \lambda}(T_{ij}, T_{ij}) \mathbb{E}\{|\psi(\epsilon_{1}(T_j))|^2  | \mathcal{T}  \} \nonumber
\\ & \quad + \frac{1}{n^2} \sum_{i=1}^n \sum_{j \neq l} \frac{1}{m_i^2} \mathcal{R}_{r, \lambda}(T_{ij}, T_{il}) \mathbb{E}\{\psi(\epsilon_{1}(T_{ij})) \psi(\epsilon_{1}(T_{il})) | \mathcal{T} \}
\\ & := Z_1 + Z_2,
\end{align*}
say. For $Z_1$ we immediately find
\begin{align*}
Z_1 \leq \frac{c_0}{n^2 \lambda^{1/2r}} \sum_{i=1}^{n} \frac{1}{m_i} = O \left( \frac{1}{n m \lambda^{1/2r}} \right),
\end{align*}
by assumption (B5) and \eqref{eq:ROSM8}. Hence,
\begin{align*}
\mathbb{E}\left\{  \left\| \frac{1}{n} \sum_{i=1}^n \frac{1}{m_i} \sum_{j=1}^{m_i} \psi(\epsilon_{ij}) \mathcal{R}_{r, \lambda}(T_{ij}, \cdot)  \right\|_{r, \lambda}^2  \right\} = O \left( \frac{1}{n m \lambda^{1/2r}} \right) + \mathbb{E}\{Z_2\}.
\end{align*}
To bound $ \mathbb{E}\{Z_2\}$ note that, by assumptions (B3) and (B6),
\begin{align*}
\mathbb{E}\{Z_2\}  = \frac{1}{n^2} \sum_{i=1}^n \frac{m_i (m_i-1)}{m_i^2} \int_{0}^1 \int_{0}^1 \mathcal{R}_{r, \lambda}(x,y) \mathbb{E}\{\psi(\epsilon_1(x)) \psi(\epsilon_1(y)) \} dx dy.
\end{align*}
Now, let $T_{\mathcal{R}_{r, \lambda}}$ denote the integral operator that maps $\mathcal{L}^2([0,1])$ into itself with $\mathcal{R}_{r, \lambda}$ as its kernel, viz,
\begin{align*}
(T_{\mathcal{R}_{r, \lambda}}f)(x) = \int_{0}^1 \mathcal{R}_{r, \lambda}(x,y) f(y) dy, \quad f \in \mathcal{L}^2([0,1]).
\end{align*}
By Proposition~\ref{Prop:ROSM1}, $T_{\mathcal{R}_{r, \lambda}}$ is positive semidefinite and compact with largest eigenvalue equal to $1$.  Combining these observations with assumption (B5) we find
\begin{align*}
\mathbb{E}\{Z_2\} \leq  n^{-1} \mathbb{E}\{ \langle T_{\mathcal{R}_{r, \lambda}}\psi(\epsilon_1), \psi( \epsilon_1) \rangle_2 \} \leq n^{-1} \mathbb{E}\{ ||\psi(\epsilon_1)||_2^2 \} = O(n^{-1}),
\end{align*}
see, e.g., \citep[Theorem 4.2.5]{Hsing:2015} for the second inequality. Applying Markov's inequality now yields
\begin{align*}
\sup_{||g||_{r, \lambda} \leq D} |I_2(g)| = O_P(1) D C_n,
\end{align*}
as asserted.

Turning to $I_1(g)$, a similar derivation using the reproducing property as in the proof of Theorem \ref{Thm:ROSM1} shows that
\begin{equation}
\label{eq:ROSM31}
\lim_{n \to \infty } C_{n}^{1/2}  \max_{1 \leq i \leq n} \max_{1 \leq j \leq m_i} |g(T_{ij})| = 0,
\end{equation}
under our limit conditions. Applying Fubini's theorem and assumption (B5) yields
\begin{align*}
\mathbb{E} \left\{ \int_{0}^{-C_{n}^{1/2} g(T_{ij})} \{\psi(\epsilon_{ij} + u) - \psi(\epsilon_{ij}) \} du \bigg| \mathcal{T}  \right\}  & =  \int_{0}^{-C_{n}^{1/2} g(T_{ij})}  \{\delta_{ij}u + o_P(u)\} du  
\\ & = 2^{-1}\delta_{ij}  C_{n} |g(T_{ij})|^2\{1+o_P(1) \}.
\end{align*}
Consequently,
\begin{align}
\label{eq:ROSM32}
\mathbb{E}\{I_1(g) | \mathcal{T} \} \geq \inf_{i,j}  \delta_{ij} 2^{-1}  C_n \frac{1}{n} \sum_{i=1}^n \frac{1}{m_i} \sum_{j=1}^{m_i} |g(T_{ij})|^2 \{1 + o_P(1) \} + C_n ||g^{(r)}||_2^2,
\end{align}
for all sufficiently large $n$ with arbitrarily high probability. Our limit conditions imply those of Lemma~\ref{Lem:ROSM3} and therefore,
\begin{align*}
\frac{1}{n} \sum_{i=1}^n \frac{1}{m_i} \sum_{j=1}^{m_i} |g(T_{ij})|^2 & = \frac{1}{n} \sum_{i=1}^n \frac{1}{m_i} \sum_{j=1}^{m_i} \mathbb{E} |g(T_{ij})|^2 \\ & \quad + \left(\frac{1}{n} \sum_{i=1}^n \frac{1}{m_i} \sum_{j=1}^{m_i} |g(T_{ij})|^2 - \frac{1}{n} \sum_{i=1}^n \frac{1}{m_i} \sum_{j=1}^{m_i} \mathbb{E} |g(T_{ij})|^2  \right)
\\ & = \int_{0}^1 |g(t)|^2 dt + o_P(1)
\end{align*}
where the $o_P(1)$ term is uniform in $g \in \mathcal{B}_D$. From \eqref{eq:ROSM32} it follows that
\begin{align*}
\inf_{||g||_{r, \lambda} = D}\mathbb{E}\{ I_1(g) |  \mathcal{T} \} & \geq c_0 D^2 C_n \{1+o_P(1)\},
\end{align*}
so we have established \eqref{eq:ROSM27}. 

To complete the proof we need to show \eqref{eq:ROSM28}, for which we use an empirical process result given by Theorem 8.13 of \citep{van de Geer:2000}. Let $(\Omega, \mathcal{A}, \mathbb{P})$ denote the underlying probability space and write $\mathbf{T}_{i} = (T_{i1}, \ldots, T_{im_i})^{\top}$ and $\boldsymbol{\epsilon}_i = (\epsilon_{i1}, \ldots, \epsilon_{im_i})^{\top}$. Let $\mathcal{F}_i = \sigma((\mathbf{T}_1, \boldsymbol{\epsilon}_1), \ldots, (\mathbf{T}_i, \boldsymbol{\epsilon}_i)), i = 1, \ldots$ be an increasing sequence of sigma fields generated by the $\mathbf{T}_i$ and the $\boldsymbol{\epsilon}_i$. Clearly $\mathcal{F}_i \subset \mathcal{A}$ for all $i$, as the $\mathbf{T}_i$ and $\boldsymbol{\epsilon}_i$ are measurable. We have
\begin{align*}
I_1(g) - \mathbb{E}\{I_1(g) | \mathcal{T}\} = n^{-1} \sum_{i=1}^n  Z_{i,g},
\end{align*}
with $Z_{i,g}$ independent mean-zero random variables given by
\begin{align*}
Z_{i,g} & = m_i^{-1} \sum_{j=1}^{m_i} \int_{0}^{-C_n^{1/2}g(T_{ij})} \left[\{\psi(\epsilon_{ij} + u) - \psi(\epsilon_{ij})\}  - \mathbb{E}\{ \psi(\epsilon_{ij}+u) - \psi(\epsilon_{ij}) | \mathcal{T} \}  \right]du.
\end{align*}
The $Z_{i,g}$ are $\mathcal{F}_i$-measurable since, by assumption (B6),
\begin{align*}
\mathbb{E}\{ \psi(\epsilon_{ij}+u) - \psi(\epsilon_{ij}) | \mathcal{T} \} = \mathbb{E}\{ \psi(\epsilon_{ij}+u) - \psi(\epsilon_{ij}) | \mathbf{T}_i \}.
\end{align*}
Furthermore, the $Z_{i,g}$ are uniformly bounded since $C_n^{1/2} \sup_{g \in \mathcal{B}_{D}}||g||_{\infty} = o(1)$ and consequently, by assumption (B3),
\begin{align}
\label{eq:ROSM33}
\sup_{g \in \mathcal{B}_{D}}\max_{1 \leq i \leq n} |Z_{i,g}| \leq  c_0 C_n^{1/2} \lambda^{-1/4r}.
\end{align}
for all large $n$. Thus, the required expectations exist and
\begin{align*}
S_n := \sum_{i=1}^n Z_{i,g}, \quad \mathcal{F}_1 \subset \mathcal{F}_2 \subset \ldots
\end{align*}
is a martingale. Let $\mathbf{Z}_{g} = (Z_{1,g}, \ldots, Z_{n,g})^{\top}$ and write $\rho_K$ for the Bernstein "norm", i.e.,
\begin{align*}
\rho_K(Z_{i,g}) = 2K^2 \mathbb{E}\{e^{|Z_{i,g}|/K} -1 - |Z_{i,g}|/K \}, \quad i = 1, \ldots, n,
\end{align*}
for some $K>0$. Furthermore, define the "average" squared Bernstein norm
\begin{align*}
\bar{\rho}_K^2(\mathbf{Z}_g) = n^{-1} \sum_{i=1}^n \rho_K(Z_{i,g}).
\end{align*}

It is helpful to recall the definition of the generalized entropy with bracketing \citep{van de Geer:2000}. For $0<\delta \leq R$ and $\mathbf{F} \in \mathcal{A}$, this is defined as the logarithm of the smallest non-random value of $N$ for which a collection of pairs of random vectors $\mathbf{Z}_{k}^{L} = (Z_{1,k}^{L}, \ldots, Z_{n,k}^{L})^{\top}$ and 
$\mathbf{Z}_{k}^{U} = (Z_{1,k}^{U}, \ldots, Z_{n,k}^{U})^{\top}$ with $[Z_{i,k}^L, Z_{i,k}^U] $ $\mathcal{F}_i$-measurable, $i = 1 \ldots, n, k = 1, \ldots, N,$ such that for all $g \in \mathcal{B}_{D}$, there is a $j = j(g) \in \{1, \ldots, N\}$ with $g \mapsto j(g)$ non-random, such that
\begin{itemize}
\item[(i)] $\bar{\rho}_K^2(\mathbf{Z}_j^{U} - \mathbf{Z}_j^{L}) \leq \delta^2$ on $\{\bar{\rho}_K(\mathbf{Z}_g) \leq R \} \cap \mathbf{F}$.
\item[(ii)] $Z_{i,k}^L \leq Z_{i,g} \leq Z_{i,k}^U, i =1\ldots, n,$ on $\{\bar{\rho}_K(\mathbf{Z}_g) \leq R \} \cap \mathbf{F}$.
\end{itemize}  
We denote this value of $N$ with $\mathcal{N}_{B,K}(\delta, R, \mathbf{F})$ and its logarithm, the generalized entropy, with $\mathcal{H}_{B,K}(\delta, R, \mathbf{F})$. 

We take $\mathbf{F} = \Omega$ for our purposes and develop a bound on the generalized entropy with bracketing. In particular, for specific $K$ and $R$ we show that
\begin{align}
\label{eq:ROSM34}
\mathcal{H}_{B,K}(\delta, R, \Omega) \leq \log N(c_0 \delta/C_n^{1/2}, \mathcal{B}_{D}, ||\cdot||_{\infty}),
\end{align}
where $c_0$ is a generic constant. To prove \eqref{eq:ROSM34} we take $K = c_0 C_n^{1/2} \lambda^{-1/4r}$ where $c_0$ is the constant given in \eqref{eq:ROSM32} and we determine a bound on $\bar{\rho}_K(\mathbf{Z}_{g})$, that is, we determine a value of $R$. Since, by \eqref{eq:ROSM33}, $|Z_{i,g}| \leq K$, Lemma 5.8 of \citep{van de Geer:2000} implies
\begin{align*}
\rho_{K}(Z_{i,g}) \leq c_0 \sup_{g \in \mathcal{B}_{D}} \max_{1 \leq i \leq n} \mathbb{E}\{ |Z_{i,g}|^2 \}.
\end{align*}
Now, applying the Schwarz inequality twice along with assumption (B4) yields
\begin{align*}
\mathbb{E}\{|Z_{i,g}|^2 |\mathcal{T} \} & \leq m_i^{-2} \mathbb{E}\left\{ \left| \sum_{j=1}^{m_i}  \int_{0}^{-C_{n}^{1/2} g(T_{ij})} \{\psi(\epsilon_{ij} + u) - \psi(\epsilon_{ij}) \} du \bigg| \mathcal{T} \right|^2 \right\} \nonumber
\\ & \leq  m_{i}^{-1} \sum_{j=1}^{m_i} \mathbb{E} \left\{ \left|  \int_{0}^{-C_{n}^{1/2} g(T_{ij})} \{\psi(\epsilon_{ij} + u) - \psi(\epsilon_{ij}) \} du \bigg| \mathcal{T} \right|^2 \right\}
 \\ & \leq c_0 m_{i}^{-1} \sum_{j=1}^{m_i} C_n^{1/2}|g(T_{ij})|\left|\int_{0}^{-C_{n}^{1/2} g(T_{ij})} |u|  du \right| \nonumber
\\ & \leq c_0  \lambda^{-1/4r} C_n^{3/2} m_{i}^{-1} \sum_{j=1}^{m_i} |g(T_{ij})|^2.
\end{align*}
Iterating expectations and using assumption (B6), we obtain
\begin{align*}
\sup_{g \in \mathcal{B}_D} \max_{1 \leq i \leq n}\mathbb{E}\{|Z_{i,g}|^2 \} \leq c_0 \lambda^{-1/4r} C_n^{3/2}.
\end{align*}
Thus, we may take $R = c_0 \lambda^{-1/8r} C_n^{3/4}$ in the definition of generalized entropy with bracketing. The above also reveals that $\{ \bar{\rho}_{K}(\mathbf{Z}_{g}) \leq R \} = \Omega$. For this $R$, let us now establish \eqref{eq:ROSM34}. 

Fix $\delta>0$ and let $g_1, \ldots, g_N$ denote a $\delta$-net covering $\mathcal{B}_{D}$ in the supremum norm. Noting that for every $g_1, g_2 \in \mathcal{B}_{D}$ we have
\begin{align*}
\max_{1 \leq i \leq n}|Z_{i,g_1} - Z_{i,g_2}| & \leq c_0 C_n^{1/2}||g_1-g_2||_{\infty}.
\end{align*}
It is easy to see that for each $g \in \mathcal{B}_{D}$ there exists a non-random $g_k$ with $k \in \{1, \ldots, N\}$ such that 
\begin{align*}
Z_{i,g_k} - \delta c_0C_n^{1/2} \leq Z_{i,g} \leq Z_{i,g_k} + \delta c_0C_n^{1/2}, \quad 1, \ldots, n.
\end{align*}
Take $Z_{i,k}^L = Z_{i,g_k} - \delta c_0 C_n^{1/2}$ and $Z_{i,k}^U = Z_{i,g_k} + \delta c_0 C_n^{1/2}$. Since $\lambda^{-1/4r} \geq 1$ for all large $n$, applying Lemma 5.8 of \citet{van de Geer:2000} leads to 
\begin{align*}
\rho_{K}(Z_{i,k}^U - Z_{i,k}^L) = \rho_{K}(2 \delta c_0 C_n^{1/2}) \leq c_0 \delta^2 C_n.
\end{align*}
The constant $c_0$ does not depend on $i$ and consequently after averaging we still have
\begin{align*}
\bar{\rho}_K^2(\mathbf{Z}_{k}^U -\mathbf{Z}_{k}^L ) \leq c_0 \delta^2 C_n,
\end{align*}
where $\mathbf{Z}_{k}^U = (Z_{1,k}^U, \ldots, Z_{n,k}^U)^{\top}$ and likewise $\mathbf{Z}_{k}^L = (Z_{1,k}^L, \ldots, Z_{n,k}^L)^{\top}$. We have thus shown that a cover of $\mathcal{B}_{D}$ with radii at most $\delta/(c_0 C_n^{1/2})$ provides a set of $\delta$-brackets according to the definition of generalized entropy. In other words,
\begin{align*}
\mathcal{H}_{B,K}(\delta, R, \Omega) \leq \log N(c_0 \delta/C_n^{1/2}, \mathcal{B}_{D}, ||\cdot||_{\infty}),
\end{align*}
for some $c_0$, as was to be shown. 

With these values of $K$, $R$ and the bound on $\mathcal{H}_{B,K}(\delta, R, \Omega)$ developed above, we now check the conditions of Theorem 8.13 in \citet{van de Geer:2000}. First,
\begin{align*}
n^{1/2} C_n = o(n^{1/2}R^2/K),
\end{align*}
as $n \to \infty$, or, equivalently,
\begin{align*}
C_n^{1/2} = o(1),
\end{align*}
as $n \to \infty$, which holds, by virtue of $\lambda \to 0$ and $n \lambda^{1/2r} \to \infty$. Furthermore, by \eqref{eq:ROSM34}, the bracketing integral may be bounded as follows
\begin{align*}
\int_{0}^{c_0 C_n^{3/4} \lambda^{-1/8r}} (\mathcal{H}_{B,K}(u, c_0 C_n^{3/4} \lambda^{-1/8r}, \Omega))^{1/2} du & \leq  \int_{0}^{c_0 C_n^{3/4} \lambda^{-1/8r}} (\log N(c_0 u/C_n^{1/2},\mathcal{B}_{D}, ||\cdot||_{\infty})^{1/2} du
\\ & = c_0 \frac{C_n^{1/4r}}{\lambda^{1/4r}} \int_{0}^{c_0 C_n^{3/4} \lambda^{-1/8r}} u^{-1/2r} du.
\end{align*}
Evaluating the latter integral we find
\begin{align*}
\int_{0}^{c_0 C_n^{3/4} \lambda^{-1/8r}} (\mathcal{H}_{B,K}(u, c_0 C_n^{3/4} \lambda^{-1/8r}, \Omega))^{1/2} du & \leq c_0 \frac{C_n^{1/4r}}{\lambda^{1/4r}} C_n^{3/4} \lambda^{-1/8r} C_n^{-3/8r} \lambda^{1/16r^2}
\\ & = c_0 C_n^{3/4-1/8r} \lambda^{-3/8r + 1/16r^2}.
\end{align*}
Consequently,
\begin{align*}
\int_{0}^{c_0 C_n^{3/4} \lambda^{-1/8r}} (\mathcal{H}_{B,K}(u, c_0 C_n^{3/4} \lambda^{-1/8r}, \Omega))^{1/2} du / (n^{1/2} C_n) \leq c_0 n^{-1/2} C_n^{-1/4 - 1/8r} \lambda^{-3/8r + 1/16r^2}.
\end{align*}
The definition of $C_n$ now yields
\begin{align*}
(n^{1/2} C_n^{1/4+1/8r} \lambda^{3/8r-1/16r^2})^2 \geq n \lambda^{1/2+1/r-1/8r^2} \to \infty,
\end{align*}
by our limit assumptions. The conditions of the theorem are thus satisfied, which implies that
\begin{align*}
\Pr\left( n^{1/2} \sup_{g \in \mathcal{B}_{D}}|I_1(g) - \mathbb{E}\{I_1(g) | \mathcal{T}\}| \geq \epsilon n^{1/2} C_n \right) \leq c_0 \exp\left[ - c_0 \epsilon^2 n C_n^{1/2} \lambda^{1/4r}  \right],
\end{align*}
for every $\epsilon>0$. The exponential tends to zero, as our limit assumptions imply $ n C_n^{1/2} \lambda^{1/4r}  \to \infty$ as $n \to \infty$. Hence \eqref{eq:ROSM28} holds and the result follows. 

\end{proof}

\end{document}